\numberwithin{equation}{section}
\newtheorem{lemma}{Lemma}
\newtheorem{remark}{Remark}
\newtheorem{theorem}{Theorem}
\newtheorem{corollary}{Corollary}
\newtheorem{proposition}{Proposition}
\begin{document}

\baselineskip 18pt

\newcommand{\E}{\mathbb{E}}
\newcommand{\Eof}[1]{\mathbb{E}\left[ #1 \right]}
\newcommand{\Et}[1]{\mathbb{E}_t\left[ #1 \right]}
\renewcommand{\H}{\mathbb{H}}
\newcommand{\R}{\mathbb{R}}
\newcommand{\sigl}{\sigma_L}
\newcommand{\BS}{\rm BS}
\newcommand{\rv}{{\rm RV}}
\newcommand{\p}{\partial}
\renewcommand{\P}{\mathbb{P}}
\newcommand{\Pof}[1]{\mathbb{P}\left[ #1 \right]}
\newcommand{\var}{{\rm var}}
\newcommand{\cov}{{\rm cov}}
\newcommand{\beaa}{\begin{eqnarray*}}
\newcommand{\eeaa}{\end{eqnarray*}}
\newcommand{\bea}{\begin{eqnarray}}
\newcommand{\eea}{\end{eqnarray}}
\newcommand{\ben}{\begin{enumerate}}
\newcommand{\een}{\end{enumerate}}
\newcommand{\bit}{\begin{itemize}}
\newcommand{\eit}{\end{itemize}}

\newcommand{\bsb}{\boldsymbol{b}}
\newcommand{\bX}{\boldsymbol{X}}
\newcommand{\bx}{\boldsymbol{x}}
\newcommand{\by}{\boldsymbol{y}}
\newcommand{\bE}{\mathbf{e}}
\newcommand{\bw}{\mathbf{w}}
\newcommand{\bsR}{\boldsymbol{R}}
\newcommand{\bW}{\boldsymbol{W}}
\newcommand{\bB}{\boldsymbol{B}}
\newcommand{\bZ}{\boldsymbol{Z}}
\newcommand{\bH}{\mathbf{H}}
\newcommand{\bF}{\mathbf{F}}
\newcommand{\bG}{\mathbf{G}}
\newcommand{\bs}{\mathbf{s}}
\newcommand{\bt}{\mathbf{t}}
\newcommand{\bsihat}{\widehat{\bs_i}}
\newcommand{\bseta}{\boldsymbol{\eta}}

\newcommand{\sgn}{\mbox{sgn}}

\newcommand{\mt}{\mathbf{t}}
\newcommand{\mS}{\mathbb{S}}

\newcommand{\argmax}{{\rm argmax}}
\newcommand{\argmin}{{\rm argmin}}

\newcommand{\tM}{\widetilde{M}}
\newcommand{\tEof}[1]{\tilde{\mathbb{E}}\left[ #1 \right]}
\newcommand{\tP}{\tilde{\mathbb{P}}}
\newcommand{\tW}{\tilde{W}}
\newcommand{\tB}{\tilde{B}}
\newcommand{\1}{\mathbf{1}}
\renewcommand{\O}{\mathcal{O}}
\newcommand{\dt}{\Delta t}
\newcommand{\tr}{{\rm tr}}

\newcommand{\Xv}{X^{(v)}}
\newcommand{\Xvs}{X^{(v^*)}}
\newcommand{\Jv}{J^{(v)}}

\newcommand{\cG}{\mathcal{G}}
\newcommand{\cF}{\mathcal{F}}
\newcommand{\cL}{\mathcal{L}}
\newcommand{\cLv}{\mathcal{L}^{(v)}}

\newcommand{\ttheta}{\tilde{\theta}}
\newcommand{\vega}{{\rm vega}}

\newcommand{\inn}[1]{\langle #1 \rangle}

%
%%***************************************************************************
%%
%%  Title Page
%%
%%***************************************************************************
%

\begin{titlepage}

\begin{center}
\large \bf Target volatility option pricing in lognormal fractional SABR model
\end{center}
\vspace{3mm}

\begin{center}
Elisa Al\`os \\
Departament d'Economia i Empresa and Barcelona Graduate School of Economics \\
Universitat Pompeu Fabra \\
%c/Ramon Trias Fargas, 25-27, 08005 Barcelona, Spain \\
e-mail: \textsf{elisa.alos@upf.edu}
\end{center}

\vskip 2mm

\begin{center}
Rupak Chatterjee \\
Division of Financial Engineering \\
Center for Quantum Science and Engineering\\
Stevens Institute of Technology \\
%Castle Point on Hudson, Hoboken, NJ07030 \\
e-mail: \textsf{Rupak.Chatterjee@stevens.edu}
\end{center}

\vskip2mm

\begin{center}
Sebastian Tudor \\
Division of Financial Engineering \\
Stevens Institute of Technology \\
%Castle Point on Hudson, Hoboken, NJ07030 \\
e-mail: \textsf{studor@stevens.edu}
\end{center}

\vskip2mm

\begin{center}
Tai-Ho Wang \\
Baruch College, CUNY, New York \\
Ritsumeikan University, Shiga, Japan \\
%1 Bernard Baruch Way, New York, NY10010 \\
e-mail: \textsf{tai-ho.wang@baruch.cuny.edu}
\end{center}

\vskip 5mm
%\begin{center}
%Current Version: \today\\
%File Reference: \verb+TVOfSABR.tex+ \centerline{DRAFT: PLEASE
%DO NOT CIRCULATE} 
%\vskip 1cm
%\end{center}
\begin{center}
{\bf Abstract}
\end{center}
We examine in this article the pricing of target volatility options in the lognormal fractional SABR model. 
A decomposition formula by It\^o's calculus yields a theoretical replicating strategy for the target volatility option, assuming the accessibilities of all variance swaps and swaptions. The same formula also suggests an approximation formula for the price of target volatility option in small time by the technique of freezing the coefficient.
Alternatively, we also derive closed formed expressions for a small volatility of volatility expansion of the price of target volatility option. Numerical experiments show accuracy of the approximations in a reasonably wide range of parameters.

\vskip 0.2cm

%\vskip 5mm
%
%\noindent {\it Keywords}: Lognormal fractional SABR model, Decomposition formula, Target volatility option, Small volatility of volatility approximation
%
%\vskip 1cm

%\noindent We are grateful
%for helpful discussions with Marco Avellaneda, Rafael Douady, Andr\'e Lesnewski, and Bruno Dupire
%and the participants at Bloomberg LP and Courant Mathematical Finance seminars.
%We are particularly grateful to Peter Carr for valuable comments and suggestions.
%All errors are our responsibility.

\end{titlepage}

\allowdisplaybreaks

%%***************************************************************************
%%
%%  Document begins here
%%
%%***************************************************************************

%%***************************************************************************

%%\numberwithin{equation}{section}
%%\newtheorem{theorem}{Theorem}[section]

%***************************************************************************

\title[TVO pricing in lognormal fSABR model]{Target volatility option pricing in lognormal fractional SABR model}
\begin{abstract}
We examine in this article the pricing of target volatility options in the lognormal fractional SABR model. 
A decomposition formula by It\^o's calculus yields a theoretical replicating strategy for the target volatility option, assuming the accessibilities of all variance swaps and swaptions. The same formula also suggests an approximation formula for the price of target volatility option in small time by the technique of freezing the coefficient.
Alternatively, we also derive closed formed expressions for a small volatility of volatility expansion of the price of target volatility option. Numerical experiments show accuracy of the approximations in a reasonably wide range of parameters. 
\end{abstract}

%\vskip1mm

\author[E. Al\`os]{Elisa Al\`os}
\author[R. Chatterjee]{Rupak Chatterjee}
\author[S. Tudor]{Sebastian Tudor}
\author[T.-H. Wang]{Tai-Ho Wang}

\address{Elisa Al\`os \newline
Departament d'Economia i Empresa and Barcelona Graduate School of Economics, \newline
Universitat Pompeu Fabra, \newline
c/Ramon Trias Fargas, 25-27, 08005 Barcelona, Spain
}
\email{elisa.alos@upf.edu}

\address{Rupak Chatterjee \newline
Division of Financial Engineering \newline
Stevens Institute of Technology \newline
Castle Point on Hudson, Hoboken, NJ07030
}
\email{Rupak.Chatterjee@stevens.edu}

\address{Sebastian Tudor \newline
Division of Financial Engineering \newline
Stevens Institute of Technology \newline
Castle Point on Hudson, Hoboken, NJ07030
}
\email{studor@stevens.edu}

\address{Tai-Ho Wang \newline
Department of Mathematics \newline
Baruch College, The City University of New York \newline
1 Bernard Baruch Way, New York, NY10010 \newline
and \newline
Department of Mathematical Sciences \newline
Ritsumeikan University \newline
Noji-higashi 1-1-1, Kusatsu, Shiga 525-8577, Japan
}
\email{tai-ho.wang@baruch.cuny.edu}

\keywords{Lognormal fractional SABR model, Decomposition formula, Target volatility option, Small volatility of volatility approximation}

\maketitle

%\begin{center}
%\today
%\end{center}

%%%%%%%%%%%%%%%%%%%%%%%%%%%%%%%%%%%%%%%%%%%%%%%%%%%%%%%%%%%%%%%%%
%
%  Main text starts here
%
%%%%%%%%%%%%%%%%%%%%%%%%%%%%%%%%%%%%%%%%%%%%%%%%%%%%%%%%%%%%%%%%%

%%%%%%%%%%%%%%%%%%%%%%%%%%%%%%%%%%%%%%%%%%%%%%%%%%%%%%%%%%%%%%%%%
%  Section: Introduction
%%%%%%%%%%%%%%%%%%%%%%%%%%%%%%%%%%%%%%%%%%%%%%%%%%%%%%%%%%%%%%%%%

\section{Introduction}
Target volatility options (TVOs) are a type of derivative instrument that explicitly depends on the evolution of an underlying asset as well as its realized volatility. This option allows one to set a ‘target’ volatility parameter that determines the leverage of an otherwise price dependent payoff. The multiplicative leverage factor is the ratio of the target volatility to the realized volatility of the underlying asset at the maturity of the option.  If this target volatility is chosen to be the implied market volatility of the underlying asset, this option becomes similar to pure volatility instruments such as variance and volatility swaps where investors are swapping realized volatility for implied volatility. TVOs are slightly different as they do not explicitly perform a swap but rather consider the ratio of the two types of volatilities in order to increase or decrease a potential price payoff.   

The typical form of the TVO leverage factor has the target volatility parameter in the numerator and the realized volatility in the denominator. When an asset exhibits a smooth general trend upwards, its realized volatility tends to decrease thereby giving the European call version of a TVO a greater payoff (levered) than the simple vanilla version of a European call. When viewing the target volatility parameter from an implied volatility point of view, the TVO leverage factor allows an investor to possibly recoup the expensive premium of a call option that was bought during a high volatility regime. For put style TVOs, the typical form of the leverage factor tends to work in the opposite way (i.e. de-levering). Assets tend to move more erratically downwards thereby increasing their realized volatilities in a bear market and thereby decreasing the put payoff. One could therefore imagine creating put style TVOs where the volatility leverage factor is the inverse of the typical form such that the realized volatility is placed in the numerator and the target volatility is in the denominator. This is a substantially different structure than the typical call option TVO and therefore is a topic of future research and will not be discussed here.   

The risk management of TVOs is difficult as one cannot simply perform a standard delta-gamma-vega style hedge as this does not fully take into account the risk embedded in the volatility leverage factor. In this paper, we propose a static volatility hedge of TVOs using an identical maturity variance swap. This hedge becomes more accurate when the target volatility parameter is chosen to be close to the square root of the strike of the variance swap. 

The standard stylized facts of equity returns such as volatility clustering, fat tails, and the leverage effect all have important roles to play in the pricing of TVOs. In particular, the temporal correlation of squared returns has an important effect on realized volatilities. Therefore, we have chosen to model the instantaneous volatility process using fractional Brownian motion (fBM) as it has a very simple and appealing correlation structure in terms of Hurst exponents. Ideally, one would like to produce implied Hurst parameters from market prices in order to quantify the temporal correlation dependence of TVOs. As the TVO market is still quite exotic, this will have to wait until this OTC market becomes somewhat more liquid.    
Finally, the explicit option payoff dependence on volatility and strike requires one to use a stochastic volatility (driven by fBM) correlated to a stochastic price diffusion model. In this paper, we investigate the fractional SABR (fSABR) model that was recently suggested and empirically test against market data in \cite{gjr}. We refer the reader to \cite{asw} for more detailed discussions on the probability density of the lognormal fSABR model.

In \cite{dgt}, the authors provide prices for TVOs using a single-factor stochastic volatility model where the instantaneous volatility of the underlying asset is assumed to be, unrealistically, independent from the Brownian motion driving the asset returns. This unrealistic assumption for TVOs is addressed in \cite{t} using such models as the Heston model and the 3/2 stochastic volatility model. In \cite{gr}, this approach is further enhanced with the addition of stochastic skew via a multi-factor stochastic volatility model.
The authors of \cite{fgg} price various products characterized by payoffs depending on both a stock and its volatility, TVO being one such case, using a Fourier-analysis approach. A variance-optimal hedge approach for TVOs under exponential Levy dynamics is investigated in \cite{ww}.

To our knowledge, none of the existing literatures in TVO pricing deals with fractional volatility process. With the introduction of fractional process to the instantaneous volatility, it comes with Hurst exponent risk. In other words, how much does it affect the TVO price if the Hurst exponent is misspecified? Is there a replicating/hedging strategy that is relatively robust against the Hurst exponent? We do not intend to answer all these questions in the current paper. However, the decomposition formula given by \eqref{eqn:decomp-pathwise} in Section \ref{sec:decomp-formula} suggests theoretically a robust replicating strategy for TV call given the accessibility of all swaps and swaptions. 
%{\color{red} Up to my knowledge, the only paper on hedging with fractional noises is the one by El Euch- Rosenbaum: https://arxiv.org/abs/1703.05049, but the hedging strategy is not so easy. I know there are some researchers working on this problem, but with still with no available results}

The rest of the paper is organized as follows. Section \ref{sec:model-spec} briefly specifies the model and introduces pricing of TVO. Section \ref{sec:decomp-formula} shows the decomposition formula for the price of a TV call in terms of It\^o calculus and of Malliavin derivative. By specifying the model to the lognormal fSABR model, we derive approximations of the price of a TV call in Sections \ref{sec:tvo-fsabr} and \ref{sec:small-vol-vol}. The paper concludes with numerical implementations and discussions in Sections \ref{sec:numerics} and \ref{sec:discussion}. Technical calculations and proofs are left as an appendix in Section \ref{sec:appendix}.

%%%%%%%%%%%%%%%%%%%%%%%%%%%%%%%%%%%%%%%%%%%%%%%%%%%%%%%%%%%%%%%%%
%  Section: Model specification and TV options
%%%%%%%%%%%%%%%%%%%%%%%%%%%%%%%%%%%%%%%%%%%%%%%%%%%%%%%%%%%%%%%%%

\section{Model specification and target volatility options} \label{sec:model-spec}
Throughout the text, $B_t$ and $W_t$ denote two independent standard Brownian motions defined on the filtered probability space $(\Omega, \cF_t, \mathbb Q)$ satisfying the usual conditions. All random variables and stochastic processes are defined over $(\Omega, \cF_t, \mathbb Q)$. Denote by $S_t$ the price process of the underlying asset and $\alpha_t$ the instantaneous volatility. Risk free rate is assumed zero for simplicity, thus the evolution of $S_t$ under the risk neutral probability $\mathbb Q$ is governed by 
\[
\frac{dS_t}{S_t} = \alpha_t d\tilde W_t = \alpha_t(\rho dB_t + \bar\rho dW_t),
\]
where $\rho\in (-1,1)$ and $\bar\rho = \sqrt{1 - \rho^2}$. At the moment, other than positivity and technical conditions, we do not specify the dynamic for $\alpha_t$ just yet but assuming it is a square-integrable process adapted to the filtration generated by the Brownian motion $B$. For computational purpose, further specification of $\alpha_t$ is considered in Section \ref{sec:tvo-fsabr} and thereafter.

For fixed $K > 0$, define $X_t := \log\frac{S_t}K$ and $Y_t := \alpha_t$. Then $X_t$ satisfies
\begin{equation}
dX_t = Y_t d\tilde W_t - \frac{Y_t^2}2 dt. \label{eqn:model-x}
\end{equation}
We shall be mostly working with the $X$, $Y$ variables in the following. A {\it target volatility} (TV) call struck at $K$  pays off at expiry $T$ the amount
\begin{equation} \label{eqn:tvo-payoff}
\frac{\bar\sigma}{\sqrt{\frac1T\int_0^T \alpha_t^2 dt}} \left( S_T - K \right)^+ = \frac{K \, \bar\sigma \sqrt T}{\sqrt{\int_0^T Y_t^2 dt}} \left( e^{X_T}- 1 \right)^+,
\end{equation}
where $\bar\sigma$ is the (preassigned) {\it target volatility} level. Apparently, if at expiry the realized volatility is higher (lower) than the target volatility, the payoff is scale down (up) by the ratio between target volatility and realized volatility. For $t \leq T$, the price at time $t$ of a TV call struck at $K$ with expiry $T$ is hence given by the conditional expectation under risk neutral probability $\mathbb Q$ as 
\begin{equation}
K\, \bar\sigma \sqrt T \, \Eof{\left. \frac1{\sqrt{\int_0^T Y_\tau^2 d\tau}} \left( e^{X_T}- 1 \right)^+\right|\cF_t} \label{eqn:tvo-price-t}
\end{equation} 
provided the expectation is finite.
%On the other hand, the payoff function and the price of a TV put are given respectively by  \\
%\blue{(Q: Are we still arguing that the scaling factor in the payoff of a TV put should be the other round?)}

By temporarily ignoring the constant factor out front, we shall evaluate the conditional expectation in \eqref{eqn:tvo-price-t} in the following sections. 

%%%%%%%%%%%%%%%%%%%%%%%%%%%%%%%%%%%%%%%%%%%%%%%%%%%%%%%%%%%%%%%%%
%  Section: The decomposition formula 
%%%%%%%%%%%%%%%%%%%%%%%%%%%%%%%%%%%%%%%%%%%%%%%%%%%%%%%%%%%%%%%%%

\section{The decomposition formula} \label{sec:decomp-formula}
In the spirit of \cite{alos}, we derive in this section a decomposition formula for TV calls which in turn leads to a theoretical replicating strategy assuming the accessibilities of all variance swaps and swaptions. An approximation of the price of a TV call is obtained in Theorem \ref{thm:TVcall-approx-decomp} by ``freezing the coefficients".  

The following notations will also be used throughout the rest of the paper. 
The normalized Black-Scholes function $C = C(x,w)$ is defined as 
\begin{equation}
C(x,w) := e^x N(d_1) - N(d_2), \label{eqn:BS-func}
\end{equation}
where $d_1 = \frac x{\sqrt w} + \frac{\sqrt w}2$ and $d_2 = d_1 - \sqrt w$. Note that $C$ satisfies the (forward) Black-Scholes PDE
\begin{equation}
C_w - \frac12 C_{xx} + \frac12 C_x = 0 \label{eqn:BS-eqn}
\end{equation}
with initial condition $C(x,0) = (e^x - 1)^+$.  
For any $t \in [0,T]$, define 
\beaa 
&& w_t := \int_0^t Y_s^2 ds, \quad 
\hat{w}_t := \int_t^T \Et{Y_s^2} ds, \\ 
&& M_t := \int_0^T \Et{Y_s^2} ds = w_t + \hat w_t,
\eeaa
where $\Et{\cdot}$ is a shorthand notation for the conditional expectation $\Eof{\cdot|\cF_t}$, for $0 \leq t \leq T$.
We remark that $M$ is a martingale and note that $w_t$ represents the integrated/realized variance up to time $t$, whereas $\hat w_t$ represents the price of the variance swap (zero interest rate) observing the time period $[t,T]$. Also notice that, at $t = 0$, $M_0 = \hat w_0$ which equals the price of the variance swap between $t=0$ and $t=T$. 
Furthermore, we define $F = F(x,w,\hat w)$ as 
\begin{equation}
F(x,w,\hat w) := \frac{C(x,\hat w)}{\sqrt{w + \hat w}}. \label{eqn:func-F}
\end{equation}
Notice that since $\hat w_t = M_t - w_t$, we have $d\hat w_t = dM_t - dw_t$. Thus, $d\inn{X, \hat w}_t = d\inn{X, M}_t$ and $d\inn{\hat w}_t = d\inn{M}_t$ since $w$ is of finite variation.  

We will impose the following hypotheses.
\begin{itemize}
\item[({\bf H1})] Assume that the process $Y$ has a martingale representation of the form
$$
Y_t = \Eof{Y_t} + \nu\int_0^t a(t,s)dB_s,
$$
where $\nu>0$ and, for any $t$,  $a(t,\cdot )$ is an adapted process satisfying 
\begin{equation}
\label{H}
|a(t,s)|\leq A|t-s|^\delta
\end{equation}
for some $\delta\in (-\frac12,\frac12)$ and for some random variable $A\in \cap_{p\geq 1}L^p$.

\item[({\bf H2})] $\frac{1}{\hat w_t}+\frac{1}{w_T-w_t}\leq  \frac{A}{T-t}$, for some random variable $A\in \cap_{p\geq 1}L^p$.
\end{itemize}

%\begin{remark}
%Notice that {\rm (\bf H1)} holds if, for example, $Y=f(\nu B^H)$, where $f$ is a bounded function and $B^H$ is a fractional Brownian motion with $H=\delta+\frac{1}{2}$ (see Section \ref{sec:tvo-fsabr}).
%\end{remark}

\begin{remark}
Notice that (H1) implies that
$$
dM_t=\nu\left(\int_t^T a(u,t)du\right)dB_t.
$$
\end{remark}

Theorem \ref{thm:TV-call-decomp} below presents a decomposition formula for TV call in terms of the functions $C$ and $F$.
\begin{theorem}(Decomposition formula for TV call) \label{thm:TV-call-decomp} \\
Consider the model (\ref{eqn:model-x}) and assume that (H1) and (H2) hold. Then the price of the target volatility call at time $t$ struck at $K$ with expiry $T$ can be decomposed as  
\bea
&& K \, \bar\sigma \sqrt{T} \Et{\frac1{\sqrt{w_T}} \left( e^{X_T} - 1 \right)^+} \nonumber \\
&=& K\bar\sigma \sqrt{T}
\left[\frac1{\sqrt{M_t}} C(X_t, \hat{w}_t) + \Et{\int_t^T F_{x\hat w}d\inn{X,M}_s + \frac12 \int_t^T F_{\hat w \hat w}d\inn{M}_s}\right]\label{decomposition}
\eea
which apparently extends the deterministic volatility case. Note that the decomposition formula \eqref{decomposition} does not depend on the specification of volatility process as long as the corresponding integrals are defined. Thus the first term in \eqref{decomposition} represents the price of a TV call using variance swap as future (deterministic) realized variance.
\end{theorem}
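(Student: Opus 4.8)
The plan is to obtain \eqref{decomposition} as a direct consequence of It\^o's formula applied to the process $F(X_s, w_s, \hat w_s)$ on $[t,T]$, taking conditional expectation and using the forward Black--Scholes PDE \eqref{eqn:BS-eqn} to kill the drift. The guiding observation is that the TV-call payoff is precisely a boundary value of $F$: since $\hat w_T = 0$ and $M_T = w_T$,
\[
F(X_T, w_T, \hat w_T) = \frac{C(X_T, 0)}{\sqrt{w_T}} = \frac{(e^{X_T}-1)^+}{\sqrt{w_T}},
\]
whereas at the left endpoint $F(X_t, w_t, \hat w_t) = C(X_t,\hat w_t)/\sqrt{M_t}$, which is exactly the leading term of \eqref{decomposition}. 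It therefore suffices to expand $F(X_T,w_T,\hat w_T) - F(X_t,w_t,\hat w_t)$ by It\^o and show that, after $\Et{\cdot}$, only the two correction integrals remain.

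For the expansion I would first record the relevant dynamics. Since $w_s$ has finite variation with $dw_s = Y_s^2\,ds$ and $\hat w_s = M_s - w_s$, we have $d\hat w_s = dM_s - Y_s^2\,ds$, whence $d\inn{X,\hat w}_s = d\inn{X,M}_s$ and $d\inn{\hat w}_s = d\inn{M}_s$ (as noted before the statement), while every covariation involving $w$ vanishes; the explicit form $dM_s = \nu\bigl(\int_s^T a(u,s)\,du\bigr)dB_s$ from (H1) then feeds into these brackets. Substituting $dX_s = Y_s\,d\tW_s - \tfrac12 Y_s^2\,ds$, $dw_s$ and $d\hat w_s$ into It\^o's formula for the three-variable $F$ yields the martingale terms $F_x\,Y\,d\tW + F_{\hat w}\,dM$, the two finite-variation corrections $F_{x\hat w}\,d\inn{X,M} + \tfrac12 F_{\hat w\hat w}\,d\inn{M}$, and a residual drift
\[
Y_s^2\Bigl[-\tfrac12 F_x + F_w - F_{\hat w} + \tfrac12 F_{xx}\Bigr]ds.
\]

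The key step is that this residual drift vanishes. Writing $F = C(x,\hat w)\,(w+\hat w)^{-1/2}$ and differentiating, the two $(w+\hat w)^{-3/2}$ contributions (from $F_w$ and from the second piece of $F_{\hat w}$) cancel, and the bracket collapses to $-(w+\hat w)^{-1/2}\bigl[C_w - \tfrac12 C_{xx} + \tfrac12 C_x\bigr](x,\hat w)$, which is identically zero by \eqref{eqn:BS-eqn}. Integrating over $[t,T]$ then gives
\[
F(X_T,w_T,\hat w_T) = F(X_t,w_t,\hat w_t) + \int_t^T F_x Y\,d\tW_s + \int_t^T F_{\hat w}\,dM_s + \int_t^T F_{x\hat w}\,d\inn{X,M}_s + \tfrac12\int_t^T F_{\hat w\hat w}\,d\inn{M}_s,
\]
and taking $\Et{\cdot}$, then multiplying by $K\bar\sigma\sqrt T$, produces \eqref{decomposition} \emph{provided} the two stochastic integrals are genuine martingales.

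That proviso is where I expect the real difficulty, and where (H1) and (H2) are needed. The Greeks $F_{\hat w}$, $F_{x\hat w}$, $F_{\hat w\hat w}$ develop singularities as $\hat w_s \downarrow 0$, i.e.\ as $s \uparrow T$, so neither the validity of It\^o's formula up to $T$ nor the vanishing of $\Et{\int_t^T F_x Y\,d\tW}$ and $\Et{\int_t^T F_{\hat w}\,dM}$ is automatic. I would run the expansion on $[t,T-\epsilon]$, where $F$ is smooth and the integrands are square-integrable, establish the martingale property there, and let $\epsilon \downarrow 0$. The kernel bound $|a(t,s)|\le A|t-s|^\delta$ of (H1) controls the magnitude of $d\inn{M}$ and of the correlation factor in $d\inn{X,M}$, while (H2), which forces $\hat w_s \gtrsim (T-s)/A$ and $w_T - w_s \gtrsim (T-s)/A$, quantifies how fast the variance-swap term stays bounded away from zero. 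Together they furnish uniform $L^p$ estimates on the Black--Scholes derivatives evaluated at $\hat w_s$ --- which blow up only polynomially in $1/\sqrt{\hat w_s}$ and are integrated against factors decaying fast enough near $T$ --- allowing one to pass to the limit by dominated convergence and to certify integrability of the stochastic integrands. Verifying these estimates is the technical heart of the argument.
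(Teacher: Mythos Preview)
Your proposal is correct and follows essentially the same route as the paper: apply It\^o's formula to $F(X_s,w_s,\hat w_s)$, use the Black--Scholes PDE \eqref{eqn:BS-eqn} to annihilate the drift, and pass to a limit after regularizing the singularity at $s=T$. The only cosmetic difference is the regularization device: the paper shifts the variance arguments, working with $w^\epsilon_s=\epsilon+w_s$ and $\hat w^\epsilon_s=\epsilon+\hat w_s$ and then letting $\epsilon\to0$ via the Greeks bound $|\partial_x^n\partial_w C|\le Cw^{-(n+1)/2}$, whereas you cut off time at $T-\epsilon$; both are standard and lead to the same place.
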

\begin{proof} This proof is similar to the proof of Theorem 3.1 in \cite{alos}, so we only sketch it. Denote 
$ w^\epsilon_t:=\epsilon+w_t$, $\hat w^\epsilon_t:=\epsilon+\hat w_t$ and $M_t^\epsilon:= w^\epsilon_t+ \hat w^\epsilon_t$. 
By applying It\^o's formula to the process $F(X_t,  w^\epsilon_t, \hat  w^\epsilon_t)$ we obtain
\beaa
&& \frac1{\sqrt{ w^\epsilon_T}}\left( e^{X_T} - 1 \right)^+ - F(X_t,  w^\epsilon_t, \hat w^\epsilon_t) \\
&=& \int_t^T F_x dX_s + \int_t^T F_w dw_s + \int_t^T F_{\hat w} d\hat w_s \\
&& + \frac12 \int_t^T F_{xx} d\inn{X}_s + \int_t^T F_{x\hat w}d\inn{X,M}_s + \frac12 \int_t^T F_{\hat w \hat w}d\inn{M}_s \\
&=& \int_t^T \frac{C_x}{\sqrt{M^\epsilon_s}} \left( Y_s d\tilde W_s - \frac{Y_s^2}2 ds \right) - \frac12 \int_t^T\frac{C}{M_s^{3/2}} Y_s^2 ds \\
&& + \int_t^T\left(\frac{C_w}{\sqrt{M_s^\epsilon}} - \frac12\frac{C}{(M_s\epsilon)^{3/2}}\right) \left( dM_s - Y_s^2 ds \right) \\
&& + \frac12 \int_t^T \frac{C_{xx}}{\sqrt{M_s^\epsilon}} Y_s^2 ds + \int_t^T F_{x\hat w}d\inn{X,M}_s + \frac12 \int_t^T F_{\hat w \hat w}d\inn{M}_s,
\eeaa
where $d\tilde W_t = \rho dB_t + \bar\rho dW_t$. The function $C$ and all its partial derivatives are evaluated at $(X_t,\hat w_s)$ whereas $F$ and all its partial derivatives are evaluated at $(X_s,  w^\epsilon_s, \hat  w^\epsilon_s)$. Note that since $C$ satisfies the Black-Scholes equation \eqref{eqn:BS-eqn}, it follows that
\begin{eqnarray}
&& \frac1{\sqrt{ w^\epsilon_T}}\left( e^{X_T} - 1 \right)^+ - F(X_t, w^\epsilon_t, \hat  w^\epsilon_t) \label{eqn:decomp-pathwise} \\
&=& \int_t^T F_x Y_s d\tilde W_s + \int_t^T F_{x\hat w} dM^\epsilon_s + \int_t^T F_{x\hat w}d\inn{X,M^\epsilon}_s + \frac12 \int_t^T F_{\hat w \hat w}d\inn{M^\epsilon}_s. \nonumber 
\end{eqnarray}
Finally, by taking conditional expectations on both sides of \eqref{eqn:decomp-pathwise}, letting $\epsilon\to 0$, and using the fact that 
$$
|\partial^n_x \partial_w C (X_t,w_t)|\leq C w_t^{-\frac{1}{2} (n+1)},
$$
for some positive constant $C$,
we obtain the decomposition formula \eqref{decomposition}.
\end{proof}\begin{remark}
The formula \eqref{eqn:decomp-pathwise} suggests a {\it semiparametric} dynamical replicating strategy so long as all variance swaps and swaptions are accessible. The first term on the right hand of \eqref{eqn:decomp-pathwise} is in fact the classical delta hedge of the underlying in Black-Scholes model but scaled by $\frac1{\sqrt{M_t}}$, whereas the second term can be regarded as a ``delta hedge" of the variance swap. The third is associated with the covariation between the underlying and the volatility process which can be replicated by holding gama swaps. Finally, the last term corresponds to holding time varying shares of log contracts on variance swap.  
\end{remark} 

In terms of Malliavin derivative, Theorem \ref{thm:decomp-malliavin} in the following shows another decomposition formula for the price of a TV call which, in the uncorrelated case $\rho = 0$, coincides with \eqref{uncorrelated}.
Thus this newly derived decomposition can be regarded as the TVO version of the extended Hull and White formula as the case for vanilla options in \cite{alos}. 
We assume that the reader is familiar with the elementary results in Malliavin calculus as given for instance in \cite{nualart}. In the remaining of this paper $\mathbb{D}_{B}^{1,2}$ denotes the domain of the Malliavin derivative operator $D^{B}$ with respect to the Brownian motion $B$. It is well-known that $\mathbb{D}_{B}^{1,2}$ is a dense subset of $L^{2}(\Omega)$ and that $D^B$ is a closed and unbounded operator from $\mathbb{D}_{B}^{1,2}$ to $L^{2}([0,T]\times \Omega )$. We also consider the iterated derivatives $D^{n,B}$, for $n>1,$ whose domains will be denoted ba 
$\mathbb{D}_{B}^{n,2}$. We will use the notation $\mathbb{L}_{B}^{n,2}=$\ $%
L^{2}(\left[ 0,T\right];\mathbb{D}_{B}^{n,2})$.

We will make use of the following anticipating It\^o's formula, see for example \cite{alos06}. We denote $a_t:=w_T-w_t$.
\begin{proposition} \label{Ito}
Assume that in model (\ref{eqn:model-x}) the process $Y$ satisfies $Y^{2}\in \mathbb{L}^{1,2}_B$. Let
$g=g(t,x,a):[0,T]\times \mathbb{R}^{2} \rightarrow \mathbb{R}$ 
be a function
in $C^{1,2} ([0,T]\times \mathbb{R}^{2})$ statisfying that there exists a
positive constant $M$ such that, for all $t\in \left[ 0,T\right]
,$ $g$ and its partial derivatives evaluated in $\left(
t,X_{t},a_{t}\right)$ are bounded by $M$. It follows that
\begin{eqnarray}
g(T,X_{T},a_T) &=& g(t,X_{t},a_{t}) + \int_{t}^{T} g_t(s,X_{s},a_{s})ds \nonumber\\
&& -\int_{t}^{T}g_x(s,X_{s},a_{s}) \frac{Y_{s}^{2}}{2}ds %\nonumber\\ 
+ \int_{t}^{T}g_x(s,X_{s},a_{s})Y _{s} d\tilde W_s \nonumber\\
&& -\int_{t}^{T}g_a(s,X_{s},a_{s})Y_{s}^{2}ds + \rho \int_{t}^{T}g_{xa}(s,X_{s},a_{s})\Theta _{s}ds \nonumber\\
&& +\frac{1}{2}\int_t^T g_{xx}(s,X_{s},a_{s})Y_{s}^{2}ds,
\label{aito}
\end{eqnarray}
where $\Theta _{s}:=(\int_{s}^{T}D^B_{s}Y _{r}^{2}dr)Y _{s}.$ 
\end{proposition}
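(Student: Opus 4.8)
The plan is to obtain \eqref{aito} as the classical It\^o formula augmented by a single anticipating correction term. Observe first that $a_t=w_T-w_t=\int_t^T Y_s^2\,ds$ is absolutely continuous in $t$, with $\partial_t a_t=-Y_t^2$, so that formally $da_t=-Y_t^2\,dt$ carries no martingale part. Consequently, if $a$ were adapted, the ordinary It\^o formula applied to $g(t,X_t,a_t)$ would reproduce \emph{every} term in \eqref{aito} except $\rho\int_t^T g_{xa}\Theta_s\,ds$: the drift $g_t\,dt$, the transport term $g_a\,da_s=-g_a Y_s^2\,ds$, the second-order correction $\tfrac12 g_{xx}\,d\langle X\rangle_s=\tfrac12 g_{xx}Y_s^2\,ds$, and the first-order term $g_x\,dX_s=g_x Y_s\,d\tilde W_s-\tfrac12 g_x Y_s^2\,ds$ all appear verbatim. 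The only place where adaptedness is used is in interpreting the stochastic integral $\int_t^T g_x(s,X_s,a_s)Y_s\,d\tilde W_s$; since $a_s$ anticipates the future of $B$ (because $Y$ is $B$-driven), this integral must be read as a Skorohod integral, and the entire content of the proposition is the identification of the gap between it and the forward (left-point) integral produced by It\^o's formula.

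First I would fix a partition $t=t_0<\dots<t_n=T$, write the telescoping sum $g(T,X_T,a_T)-g(t,X_t,a_t)=\sum_i\bigl(g(t_{i+1},X_{t_{i+1}},a_{t_{i+1}})-g(t_i,X_{t_i},a_{t_i})\bigr)$, and Taylor expand each increment to second order around $(t_i,X_{t_i},a_{t_i})$. The standard terms assemble into the integrals above as the mesh tends to zero, the second-order $x$-term using $(\Delta X_i)^2\to Y_s^2\,ds$, while the $a$-increments contribute only at first order because $a$ is of finite variation; in particular the correction is \emph{not} a classical cross-variation $d\langle X,a\rangle$, which vanishes, but an artifact of evaluating an anticipating integrand at the left endpoint. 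Writing $d\tilde W=\rho\,dB+\bar\rho\,dW$, I would treat the two noises separately in the first-order term $\sum_i g_x(t_i,X_{t_i},a_{t_i})Y_{t_i}\,\Delta\tilde W_i$. For the $W$-part the integrand $g_x(s,X_s,a_s)Y_s$ does not anticipate the future of $W$ (the anticipation enters only through $a_s$, a functional of $B$ alone), so, after enlarging by $\cF_T^B$ under which $W$ remains a Brownian motion, its left-point sums converge to an ordinary It\^o integral with no correction. The $B$-part is where the correction is born: $a_{t_i}=\int_{t_i}^T Y_u^2\,du$ depends on $\Delta B_i=B_{t_{i+1}}-B_{t_i}$, so $g_x(t_i,X_{t_i},a_{t_i})$ is correlated with $\Delta B_i$ and the left-point sum does not vanish in mean.

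To extract this correction I would use the Malliavin duality $\Eof{F\,\Delta B_i}=\int_{t_i}^{t_{i+1}}\Eof{D^B_r F}\,dr$ with $F=g_x(t_i,X_{t_i},a_{t_i})Y_{t_i}$, equivalently the forward-minus-Skorohod identity $\int u_s\,d^-B_s=\int u_s\,\delta B_s+\int D^{B,+}_s u_s\,ds$ applied to $u_s=g_x(s,X_s,a_s)Y_s$. For $r>t_i$ causality gives $D^B_r X_{t_i}=0$ and $D^B_r Y_{t_i}=0$, since both are $\cF_{t_i}$-measurable, so only the $a$-chain-rule term survives: $D^B_r\bigl(g_x(t_i,X_{t_i},a_{t_i})Y_{t_i}\bigr)=g_{xa}(t_i,X_{t_i},a_{t_i})\,Y_{t_i}\,D^B_r a_{t_i}$. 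Letting $r\downarrow t_i$ and summing, and using $D^B_s a_s=\int_s^T D^B_s Y_u^2\,du$ together with the definition $\Theta_s=\bigl(\int_s^T D^B_s Y_u^2\,du\bigr)Y_s=Y_s\,D^B_s a_s$, the $B$-correction converges to $\rho\int_t^T g_{xa}(s,X_s,a_s)\,\Theta_s\,ds$, which is exactly the extra term in \eqref{aito}.

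The main obstacle is the rigor of these limits rather than their form. I would first use that $g$ has bounded derivatives and, by a localization and approximation argument, reduce $Y^2$ to a dense smooth subclass of $\mathbb{L}^{1,2}_B$, so that the Skorohod integral and the Malliavin derivatives above are well defined and the duality applies. The delicate estimates are (i) showing the Taylor remainders are negligible in $L^1$, which needs uniform control of the increments $\Delta X_i$ and $\Delta a_i$ in terms of $Y$; and (ii) proving $L^2$-convergence of the discrete correction $\sum_i g_{xa}\,Y_{t_i}\int_{t_i}^{t_{i+1}}D^B_r a_{t_i}\,dr$ to $\int_t^T g_{xa}\Theta_s\,ds$, for which $Y^2\in\mathbb{L}^{1,2}_B$ supplies the integrability and continuity of $r\mapsto D^B_r a_\cdot$ and of $\Theta$. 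Alternatively, once the candidate formula is identified, one may simply verify the hypotheses of the anticipating It\^o formula of Al\`os--Nualart cited in \cite{alos06} and read off \eqref{aito}; the evaluation $D^{B,+}_s\bigl(g_x(s,X_s,a_s)Y_s\bigr)=g_{xa}\,\Theta_s$ via causality is then the only model-specific input.
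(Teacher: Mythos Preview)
The paper does not actually prove this proposition: it is stated as a tool and attributed to \cite{alos06} (``see for example \cite{alos06}''), with no argument given. Your proposal therefore goes well beyond what the paper does, and your sketch is essentially the standard proof of this class of anticipating It\^o formulas: recognize that $a_t=\int_t^T Y_s^2\,ds$ has finite variation so the only second-order term is $\tfrac12 g_{xx}\,d\langle X\rangle$, split $d\tilde W=\rho\,dB+\bar\rho\,dW$, observe that the $W$-integral is adapted after enlarging by $\cF_T^B$, and identify the $B$-correction via the forward/Skorohod relation $D^{B,+}_s\bigl(g_x(s,X_s,a_s)Y_s\bigr)=g_{xa}(s,X_s,a_s)\,Y_s\,D^B_s a_s=g_{xa}\,\Theta_s$, using that $X_s$ and $Y_s$ are $\cF_s$-measurable so their Malliavin derivatives in the forward direction vanish. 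This is exactly the mechanism behind the result in \cite{alos06}, so your ``alternative'' of invoking that reference directly is in fact what the paper does. One small point: be careful with the phrase ``$W$ remains a Brownian motion after enlarging by $\cF_T^B$''; what you need is independence of $W$ from $\cF_T^B$, which holds by assumption, so the $W$-integral is an ordinary It\^o integral in the enlarged filtration.
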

The decomposition formula for the price of a TV call in terms of Malliavin derivative is then almost a straightforward application of Proposition \ref{Ito}. To that end, the following hypotheses are imposed.
\begin{itemize}
\item[(\bf H1')] Assume that $Y\in\mathbb{L}^{1,2}_B$ and that
$$
|D_s Y_t|\leq A|t-s|^\delta,
$$
for some $\delta\in (-\frac12,\frac12)$ and for some random variable $A\in \cap_{p\geq 1}L^p$.

\item[(\bf H2')] $\frac{1}{w_T-w_t}\leq  \frac{A}{T-t}$, for some random variable $A\in \cap_{p\geq 1}L^p$.

\end{itemize}

\begin{theorem}(Decomposition formula in terms of Malliavin derivative) \label{thm:decomp-malliavin} \\
Consider the model (\ref{eqn:model-x}) and assume that {\rm ({\bf H1'})} and {\rm ({\bf H2'})} hold. 
Then the price of the target volatility call at time $t$ struck at $K$ with expiry $T$ can be decomposed in terms of Malliavin derivative as  
\begin{eqnarray}
\label{decompositionMalliavin}
&& K\bar\sigma \sqrt{T} \, \Et{\frac1{\sqrt{w_T}} \left( e^{X_T} - 1 \right)^+} \nonumber \\
&=& K\bar\sigma \sqrt{T} \, \left[\Et{\frac1{\sqrt{w_T}}C(X_t, a_t)} + \rho \, \Et{\int_t^T F_{x \hat w}(X_s, w_s, a_s) Y_s\left(\int_s^TD^B_s Y_r^2 dr\right)ds}\right],\nonumber
\end{eqnarray}
where the function $F$ is defined in \eqref{eqn:func-F}. Apparently, in the uncorrelated case the second term on the right side of \eqref{decompositionMalliavin} vanishes thus coincides with \eqref{uncorrelated} when $t = 0$. 
\end{theorem}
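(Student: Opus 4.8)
\emph{Proof proposal.} The plan is to apply the anticipating It\^o formula of Proposition \ref{Ito} to the two--variable Black--Scholes function $C(X_s,a_s)$ and then reinstate the singular prefactor. The key starting observation is that $w_s+a_s=w_s+(w_T-w_s)=w_T$ does not depend on $s$, so along the path
\[
F(X_s,w_s,a_s)=\frac{C(X_s,a_s)}{\sqrt{w_s+a_s}}=\frac{1}{\sqrt{w_T}}\,C(X_s,a_s),
\]
and the factor $1/\sqrt{w_T}$, although anticipating, is constant in the running variable $s$. At the endpoints this gives $F(X_T,w_T,a_T)=\frac1{\sqrt{w_T}}(e^{X_T}-1)^+$ (since $a_T=0$ and $C(x,0)=(e^x-1)^+$) and $F(X_t,w_t,a_t)=\frac1{\sqrt{w_T}}C(X_t,a_t)$, which are exactly the left-hand side and the first term on the right-hand side of \eqref{decompositionMalliavin}.

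First I would apply Proposition \ref{Ito} with $g(s,x,a)=C(x,a)$, which has no explicit time dependence, so $g_t\equiv 0$. Collecting the three $Y_s^2\,ds$ terms and using that $C$ solves the forward Black--Scholes equation \eqref{eqn:BS-eqn}, i.e.\ $g_a=C_w=\frac12 C_{xx}-\frac12 C_x$, the finite-variation part cancels identically, since $-\frac12 C_x Y_s^2-C_w Y_s^2+\frac12 C_{xx}Y_s^2=0$. What survives is
\[
C(X_T,a_T)=C(X_t,a_t)+\int_t^T C_x(X_s,a_s)\,Y_s\,d\tilde W_s+\rho\int_t^T C_{xa}(X_s,a_s)\,\Theta_s\,ds,
\]
with $\Theta_s=\bigl(\int_s^T D^B_sY_r^2\,dr\bigr)Y_s$ as in Proposition \ref{Ito}.

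Next I would multiply this identity by the $s$-independent factor $1/\sqrt{w_T}$ and take $\Et{\cdot}$. The term $\rho\,\Et{\frac1{\sqrt{w_T}}\int_t^T C_{xa}\Theta_s\,ds}$ is already of the desired shape, so the crux is to evaluate $\Et{\frac1{\sqrt{w_T}}\int_t^T C_x Y_s\,d\tilde W_s}$, which does \emph{not} vanish because the multiplier $1/\sqrt{w_T}$ is anticipating; this is where the Malliavin derivative genuinely enters. Writing $d\tilde W_s=\rho\,dB_s+\bar\rho\,dW_s$ and reading the integrals as Skorohod integrals, I would invoke the duality (integration-by-parts) formula; testing against bounded $\cF_t$-measurable random variables yields its conditional form. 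Because $Y$, hence $w_T$, is measurable with respect to $B$ alone, one has $D^W_s(1/\sqrt{w_T})=0$ and the $dW$--contribution drops out, while for the $dB$--part
\[
\Et{\frac{1}{\sqrt{w_T}}\int_t^T C_x Y_s\,\delta B_s}=\Et{\int_t^T C_x Y_s\,D^B_s\!\left(\frac{1}{\sqrt{w_T}}\right)ds}=-\frac12\Et{\int_t^T\frac{C_x\,\Theta_s}{w_T^{3/2}}\,ds},
\]
using $D^B_s w_T=\int_s^T D^B_sY_r^2\,dr$ (the lower limit being $s$ by causality) and the definition of $\Theta_s$. Collecting the surviving terms and recognising that
\[
\frac{C_{xa}(X_s,a_s)}{\sqrt{w_T}}-\frac12\frac{C_x(X_s,a_s)}{w_T^{3/2}}=F_{x\hat w}(X_s,w_s,a_s)
\]
(again since $w_s+a_s=w_T$) reproduces exactly the right-hand side of \eqref{decompositionMalliavin} after multiplying through by $K\bar\sigma\sqrt T$.

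The main obstacle is rigour rather than algebra: Proposition \ref{Ito} requires $g$ and its derivatives to be bounded, whereas the derivatives of $C(x,a)$ blow up like $a^{-(n+1)/2}$ as $a_s=w_T-w_s\to0^+$ near $s=T$, and the duality step must be justified for a genuinely anticipating integrand. I would handle this exactly as in the proof of Theorem \ref{thm:TV-call-decomp}: work first with the regularised argument $a_s+\epsilon$, so that all quantities are bounded and the duality formula applies cleanly, and then pass to the limit $\epsilon\to0$. The integrability of the limiting expressions, and in particular the control of $\Theta_s$ and of $D^B_sY_r^2$ near the diagonal, is guaranteed by hypotheses {\rm ({\bf H1'})} and {\rm ({\bf H2'})} together with the bound $|\partial_x^n\partial_a C|\le c\,a^{-(n+1)/2}$, which make the dominated-convergence argument go through.
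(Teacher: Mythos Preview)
Your proof is correct but takes a somewhat different route from the paper. The paper applies the anticipating It\^o formula of Proposition~\ref{Ito} \emph{directly} to $g(t,X_t,a_t):=F(X_t,w_t,a_t)=\frac{C(X_t,a_t)}{\sqrt{w_t+a_t}}$; because $w_s+a_s=w_T$ the drift terms cancel via \eqref{eqn:BS-eqn}, the Skorohod integral $\int_t^T F_x Y_s\,d\tilde W_s$ has zero conditional expectation, and the surviving correction term is already $\rho\int_t^T F_{x\hat w}\Theta_s\,ds$. You instead apply Proposition~\ref{Ito} to $C$ alone, then multiply by the anticipating constant $1/\sqrt{w_T}$ and invoke the Skorohod duality $\E[G\,\delta(u)]=\E[\langle DG,u\rangle]$ (in its conditional form) to evaluate $\E_t\bigl[\tfrac{1}{\sqrt{w_T}}\int_t^T C_x Y_s\,\delta B_s\bigr]$, recovering the missing $-\tfrac12\,C_x/w_T^{3/2}$ piece of $F_{x\hat w}$. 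The two computations are linked precisely by the Skorohod product rule $G\,\delta(u)=\delta(Gu)+\langle DG,u\rangle$: the paper puts the factor $1/\sqrt{w_T}$ inside the Skorohod integrand from the start, while you keep it outside and pay for that with the duality correction. Your approach has the merit of using Proposition~\ref{Ito} exactly as stated (a deterministic $g$ of $(t,x,a)$), whereas the paper tacitly relies on an extension allowing the adapted finite-variation argument $w_s$; the paper's route is one line shorter once that extension is granted. Your treatment of the regularisation and the limiting argument under {\rm({\bf H1'})}--{\rm({\bf H2'})} matches the paper's strategy.
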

\begin{proof}
The proof of this theorem is similar to the proof of Theorem 4.2 in \cite{alv}, so we only sketch it. Applying the anticipating It\^{o}'s formula \eqref{aito} with $g$ defined as $g(t, X_t, a_t):= F(X_t, w_t, a_t) = \frac{1}{\sqrt{w_t+a_t}}C(X_t,a_t)$, using the fact that $C$ satisfies the Black Scholes equation \eqref{eqn:BS-eqn} and taking expectations we obtain
\begin{eqnarray*}
\Et{\frac1{\sqrt{w_T}}C(X_T,a_T)} &=& \Et{\frac{C(X_t,a_t)}{\sqrt{w_t + a_t}}} %\nonumber\\
%&+& \Et{\int_t^T \left(-\frac{\sqrt T}{2(\sqrt{w_T})^3} C_x+ \frac{\sqrt T}{\sqrt{w_T}}C_{xw}\right)\Theta_s ds}.
+ \rho \, \Et{\int_t^T F_{x\hat w}(X_s, w_s, a_s) \Theta_s ds}.
\end{eqnarray*}
Now using the fact that $C(X_T,a_T) = \left( e^{X_T}- {1} \right)^+$ since $a_T = 0$, the result follows.
\end{proof}

We conclude the section by an approximation of the price of TV call suggested by the decomposition formula \eqref{decomposition} as $t$ approaches $T$. We will need the following hypothesis
\begin{itemize}
\item[({\bf H3})] Assume ({\bf H1}) holds and, for any $t>s$,
$$
a(t,s)=\E_s (a(t,s))+\nu\int_t^s b(t,s,\theta)dB_\theta,
$$
where $\nu$ is defined as in ({\bf H1}) and for some adapted process $b(t,s,\cdot)$ such that
$$
|b(t,s,\cdot)|\leq A (t-s)^\delta (s-\theta)^\delta,
$$
for some random variable $A\in\cap_{p\geq 1} L^p$.
\end{itemize}
%\begin{remark}
%We notice that {\rm ({\bf H3})} holds if, for example, $Y=f(\nu B^H)$, where $f$ is a bounded function and $B^H$ is a fractional Brownian motion with $H=\delta+\frac{1}{2}$.
%\end{remark}
 
\begin{theorem} \label{thm:TVcall-approx-decomp}
The price of a TV call at time $t$, for $t < T$, struck at $K$ with expiry $T$ has the following approximation as $t$ approaches $T$.
\bea
&& K\bar\sigma \sqrt{T} \, \Et{\frac1{\sqrt{w_T}} \left( e^{X_T} - 1 \right)^+} \label{eqn:approx-tvc} =\Lambda (T,t)+R(T,t),
\eea
where
%\bea
%\Lambda (T,t)&:=& K\bar\sigma \sqrt{T} \, \left[\frac1{\sqrt{M_t}} C(X_t, \hat w_t) + F_{x\hat w}(X_t, w_t, \hat w_t) \, \Et{\inn{X, M}_T - \inn{X, M}_t} \right] \nonumber 
%\eea
\beaa
\Lambda(T,t) &:=& K\bar\sigma \sqrt{T} \left\{ \frac1{\sqrt{M_t}} C(X_t, \hat w_t) + F_{x\hat w}(X_t, w_t, \hat w_t) \, \Et{\inn{X, M}_T - \inn{X, M}_t} \right. \\
&& \left. + \frac12 F_{\hat w\hat w}(X_t, w_t, \hat w_t) \, \Et{\inn{M}_T - \inn{M}_t} \right\} 
\eeaa
and 
$$
 R(T,t)=O(\nu^2+\rho\nu)^2.
$$
In particular, at $t = 0$ the formula slightly simplifies as 
\bea
&&\Eof{\frac1{\sqrt{w_T}} \left( e^{X_T} - 1 \right)^+}  \label{eqn:approx-tvc-tzero} \\
&\approx& \frac1{\sqrt{M_0}} C(X_0, M_0) + F_{x\hat w}(X_0, 0, M_0) \, \Eof{\inn{X, M}_T} 
+ \frac12 F_{\hat w\hat w}(X_0, 0, M_0) \, \Eof{\inn{M}_T} \nonumber 
\eea
since $M_0 = \hat w_0$ and $w_0 = 0$.
\end{theorem}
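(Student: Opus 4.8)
The plan is to start from the exact decomposition \eqref{decomposition} of Theorem \ref{thm:TV-call-decomp} and to recognize $\Lambda(T,t)$ as the result of ``freezing'' the two integrands $F_{x\hat w}$ and $F_{\hat w\hat w}$ at their time-$t$ values $F_{x\hat w}(X_t,w_t,\hat w_t)$, $F_{\hat w\hat w}(X_t,w_t,\hat w_t)$ and pulling them outside the conditional expectation. Writing $F_{x\hat w}^{(s)}:=F_{x\hat w}(X_s,w_s,\hat w_s)$ and similarly for the other derivative, subtraction gives the remainder exactly as
\[
\frac{R(T,t)}{K\bar\sigma\sqrt T}=\Et{\int_t^T\big(F_{x\hat w}^{(s)}-F_{x\hat w}^{(t)}\big)d\inn{X,M}_s}+\frac12\Et{\int_t^T\big(F_{\hat w\hat w}^{(s)}-F_{\hat w\hat w}^{(t)}\big)d\inn{M}_s}.
\]
By the Remark following (H1), $d\inn{X,M}_s=\rho\nu\,Y_s\big(\int_s^T a(u,s)du\big)ds$ is of order $\rho\nu$ and $d\inn{M}_s=\nu^2\big(\int_s^T a(u,s)du\big)^2ds$ is of order $\nu^2$; the goal is therefore to show that each coefficient increment $F^{(s)}-F^{(t)}$ contributes at least one further order of $\nu^2+\rho\nu$, so that $R$ is genuinely one order smaller than the leading terms retained in $\Lambda$.

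The key step is to expand the increments $F_{x\hat w}^{(s)}-F_{x\hat w}^{(t)}$ and $F_{\hat w\hat w}^{(s)}-F_{\hat w\hat w}^{(t)}$ by It\^o's formula along $(X,w,\hat w)$. As in the proof of Theorem \ref{thm:TV-call-decomp}, the Black--Scholes equation \eqref{eqn:BS-eqn} forces the $O(1)$ drifts proportional to $Y_s^2\,ds$ to cancel: a direct computation shows $F$ obeys $F_{\hat w}-F_w-\frac12 F_{xx}+\frac12 F_x=0$, and differentiating this identity in $x$ and $\hat w$ eliminates the local-variance drift in the expansion of each derivative of $F$. What survives is a martingale part of order $\nu$ (the $dM_s$ terms together with an $F_{xx\hat w}Y_s\,d\tilde W_s$ term) and drift terms carried by $d\inn{X,M}_s$ and $d\inn{M}_s$, hence of order $\rho\nu$ and $\nu^2$. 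Substituting back, the product of such a drift increment (order $\rho\nu+\nu^2$) with the measure $d\inn{X,M}_s$ (order $\rho\nu$) is immediately of order $(\nu^2+\rho\nu)^2$, and likewise for the $d\inn{M}_s$ term.

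The remaining contributions come from pairing the \emph{martingale} part of the coefficient increment against the absolutely continuous measures $d\inn{X,M}_s,d\inn{M}_s$. For these I would invoke Malliavin integration by parts with respect to $B$ (equivalently the anticipating It\^o formula of Proposition \ref{Ito}): a term $\Et{\int_t^T\big(\int_t^s\phi_r\,d\tilde W_r\big)\psi_s\,ds}$ is rewritten through $\Et{\big(\int_t^s\phi_r\,dB_r\big)\psi_s}=\Et{\int_t^s\phi_r\,D^B_r\psi_s\,dr}$ as an integral against $D^B_r\psi_s$. Since $\psi_s$ is built from $Y_s$ and $\int_s^T a(u,s)du$, its derivative $D^B_r\psi_s$ carries an extra factor $\nu$ by (H1)/(H3) (using $D^B_rY_s=\nu\,a(s,r)$ and the representation of $a$ in (H3)), so every such term is again $O((\nu^2+\rho\nu)^2)$; the $\bar\rho\,dW$ component of $d\tilde W$ drops out because $\psi_s$ is $B$-adapted. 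Collecting all orders yields $R(T,t)=O\big((\nu^2+\rho\nu)^2\big)$, and the specialization \eqref{eqn:approx-tvc-tzero} at $t=0$ follows at once from $M_0=\hat w_0$ and $w_0=0$.

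The step I expect to be the main obstacle is the uniform control of the singular derivatives of $F$ near maturity. The increments above involve high-order derivatives such as $F_{xx\hat w}$, $F_{x\hat w\hat w}$, $F_{xx\hat w\hat w}$ and $F_{x\hat w\hat w\hat w}$, each blowing up as $\hat w_s\to 0$; via $F=C/\sqrt{w+\hat w}$ and the bound $|\partial_x^n\partial_w C|\le C\,w^{-(n+1)/2}$ together with its higher-$\hat w$ analogues, these are integrable against $d\inn{X,M}_s$ and $d\inn{M}_s$ only because (H2)/(H3) guarantee $\hat w_s\gtrsim T-s$ and $w_T-w_s\gtrsim T-s$. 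Because these derivatives are unbounded, Proposition \ref{Ito} cannot be applied verbatim, so one must first regularize (as with the parameter $\epsilon$ in the proof of Theorem \ref{thm:TV-call-decomp}) and pass to the limit using the derivative bounds. One then has to verify that the resulting time integrals $\int_t^T (T-s)^{-\beta}\big(\int_s^T a(u,s)du\big)^{k}\,ds$ converge, which is exactly where the fractional exponent $\delta\in(-\frac12,\frac12)$ in (H1)/(H3) is essential: it governs the size of $\int_s^T a(u,s)du$ and of the kernel $b$, and the exponents must be balanced so that $\beta<1$. Making these Malliavin-weighted, singularity-weighted estimates precise, and checking that the $L^p$ bounds on $A$ in (H1)--(H3) close the requisite H\"older and Cauchy--Schwarz arguments, is the technical heart of the proof.
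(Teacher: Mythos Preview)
Your proposal is correct and follows essentially the same strategy as the paper: start from the exact decomposition \eqref{decomposition}, write the remainder as integrals of the increments $F_{x\hat w}^{(s)}-F_{x\hat w}^{(t)}$ and $F_{\hat w\hat w}^{(s)}-F_{\hat w\hat w}^{(t)}$ against $d\inn{X,M}_s$ and $d\inn{M}_s$, expand these increments by It\^o, invoke the Black--Scholes PDE to kill the $O(1)$ drift, and count the surviving powers of $\nu$ and $\rho\nu$.

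The only technical variation is in how the martingale part of the increment is paired with the $\cF_s$-measurable density $Y_s\Gamma_s$ (in the paper's notation $\Gamma_s=\nu\int_s^T a(u,s)\,du$). The paper applies the It\^o product formula directly to $u\mapsto F_{x\hat w}(X_u,w_u,\hat w_u)\,Y_u\Gamma_u$, so the cross-contribution appears as the quadratic-covariation terms $d\inn{X,Y}_u$, $d\inn{X,\Gamma}_u$, $d\inn{Y,\Gamma}_u$, each of which carries an explicit extra factor of $\nu$ under (H1)/(H3). You instead apply It\^o to $F_{x\hat w}$ alone and then discharge the pairing $\Et{(\int_t^s\phi_r\,dB_r)\psi_s}$ via Malliavin integration by parts, producing the same extra $\nu$ through $D^B_rY_s$ and $D^B_r\Gamma_s$. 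These are equivalent bookkeeping devices for the same correlation; the paper's route is slightly more direct, while yours makes the mechanism for the extra order more transparent and is more explicit than the paper about the $\epsilon$-regularization and the integrability near $s=T$ guaranteed by (H2) and the exponent $\delta$.
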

\begin{proof} 
Hypotheses ({\bf H3}) gives us that 
\begin{eqnarray}
d\inn{X,M}_s&=&\nu Y_s\left(\int_s^T \left(E_t(a(u,s)+ \nu\int_t^s  b(u,s,\theta)dB_\theta\right)du\right)ds\\
&&=:Y_s\Gamma_s ds\nonumber
\end{eqnarray}
and 
\begin{eqnarray}
d\inn{M,M}_s&=&\nu^2\left(\int_s^T \left(E_t(a(u,s)+ \nu\int_t^s  b(u,s,\theta)dB_\theta\right)du\right)^2 ds\\
&&=\Gamma_s^2 ds\nonumber
\end{eqnarray}
It\^{o}'s formula gives us that
\beaa
&&\E\left[ \left(F_{x\hat w}(X_s, w^\epsilon_s, \hat{w}^\epsilon_s) - F_{x\hat w}(X_t, w^\epsilon_t, \hat{w}^\epsilon_t)\right)Y_s\Gamma_s\right]   \\
&&=\E\left[\int_t^s F_{xx\hat w\hat w}Y_s\Gamma_sd\inn{X,M}_s + \frac12 \int_t^s F_{x \hat w\hat w\hat w}Y_s\Gamma_s d\inn{M}_s\right.\\
&&+\int_t^s \left(F_{xx\hat w}\Gamma_ud\inn{X,Y}_u+F_{xx\hat w}Y_ud\inn{X,\Gamma}_u\right)\\
&&+\int_t^s \left(F_{x\hat w\hat w}\Gamma_ud\inn{X,Y}_u+F_{x\hat w\hat w}Y_ud\inn{X,\Gamma}_u\right)\\
&&+\int_t^s \left(F_{x\hat w}(X_u, w^\epsilon_u, \hat{w}^\epsilon_u) - F_{x\hat w^\epsilon}(X_t, w^\epsilon_t, \hat{w}^\epsilon_t)\right) d\inn{Y,\Gamma}_u,
\eeaa
from where it follows that (letting $\epsilon\to 0$)
$$
\int_t^T \E\left[ \left(F_{x\hat w}(X_s, w_s, \hat{w}_s) - F_{x\hat w}(X_t, w_t, \hat{w}_t)\right)Y_s\Gamma_s\right] ds=O(\rho\nu+\nu^2)^2.
$$
In a similar way we can prove that 
$$
\int_t^T \E\left[ \left(F_{\hat w\hat w}(X_s, w_s, \hat{w}_s) - F_{\hat w\hat w}(X_t, w_t, \hat{w}_t)\right)\Gamma^2_s\right] ds=O(\rho\nu+\nu^2)^2,
$$
and now the proof is complete.

\end{proof}
We remark that by straightforward calculations the functions $F_{x\hat w}$ and $F_{\hat w\hat w}$ can be expressed in terms of the Black-Scholes function $C$ as 
\bea
&& F_{x\hat w}(x, w, \hat w) = -\frac{C_x(x,\hat w)}{2(\sqrt{w + \hat w})^3} + \frac{C_{xw}(x,\hat w)}{\sqrt{w + \hat w}}, \label{eqn:Fxw-in-C} \\
&& F_{\hat w\hat w}(x, w, \hat w) = -\frac{C_w(x,\hat w)}{(\sqrt{w + \hat w})^3} + \frac{3C(x,\hat w)}{4(\sqrt{w + \hat w})^5} + \frac{C_{ww}(x,\hat w)}{\sqrt{w + \hat w}}. \label{eqn:Fww-in-C}
\eea
Thus, to numerically implement the approximation formulas \eqref{eqn:approx-tvc} and \eqref{eqn:approx-tvc-tzero}, we will have to be able to compute $M_t$ and the quadratic variations $\inn{X,M}$ and $\inn{M}$. To this end, an explicit specification of the volatility process $Y_t$ (or $\alpha_t$) is required. We specify ourselves to the lognormal fractional SABR model in the following sections.

%%%%%%%%%%%%%%%%%%%%%%%%%%%%%%%%%%%%%%%%%%%%%%%%%%%%%%%%%%%%%%%%%
%  Section: Target volatility option in lognormal fSABR model
%%%%%%%%%%%%%%%%%%%%%%%%%%%%%%%%%%%%%%%%%%%%%%%%%%%%%%%%%%%%%%%%%
\section{Target volatility option in lognormal fSABR model} \label{sec:tvo-fsabr}
In this section, we calculate in details and explicitly the approximation formula \eqref{decomposition} for the price of a TV call under the lognormal fSABR model suggested in \cite{asw} and \cite{gjr}. 
The price of underlying asset $S_t$ and its instantaneous volatility $\alpha_t$ in the lognormal fSABR model are governed by
\beaa
&& \frac{dS_t}{S_t} = \alpha_t (\rho dB_t + \bar\rho dW_t), \\
&& \alpha_t = \alpha_0 e^{\nu B_t^H},
\eeaa
where $B_t$ and $W_t$ are independent Brownian motions, $\bar\rho = \sqrt{1 - \rho^2}$, and $B_t^H$ is the fractional Brownian motion driven by $B_t$. That is, 
\[
B_t^H = \int_0^t K(t,s) dB_s,
\]
where $K$ is the Molchan-Golosov kernel
\begin{equation}
K(t,s) = c_H (t-s)^{H-\frac{1}{2}}F\left(H-\frac{1}{2},\frac{1}{2}-H,H+\frac{1}{2};1-\frac{t}{s}\right)\mathbf{1}_{[0,t]}(s), \label{eqn:molchon-golosov}
\end{equation}
with $c_H=\left[\frac{2H\Gamma\left(\frac{3}{2}-H\right)}{\Gamma(2-2H)\Gamma\left(H+\frac12\right)}\right]^{1/2}$ and $F$ is the Gauss hypergeometric function.
 
Recall that, for fixed $K > 0$, we define $X_t = \log\frac{S_t}K$ and $Y_t = \alpha_t$, which in the fSABR model satisfy
\bea
&& dX_t = Y_t (\rho dB_t + \bar\rho dW_t) - \frac12 Y_t^2 dt = Y_t d\tW_t - \frac12 Y_t^2 dt, \label{eqn:fSABR-x} \\
&& Y_t = Y_0 e^{\nu B_t^H}. \label{eqn:fSABR-y}
\eea

\begin{remark}
It is easy to see that $Y\in \mathbb{L}^{1,2}_B\cap L^p$, for all $p>1$. On the other hand, by applying Jensen's inequality we have 
$$
\frac{1}{T}\int_0^T Y_s^2 ds \geq  Y_0^2 \exp\left(\frac1{T} \int_0^T 2\nu B_s^H ds\right).
$$
It follows that 
$$
\frac{T}{\int_0^T Y_s^2 ds}\leq  \frac1{Y_0^2} \exp\left(-\frac{2}{T}\int_0^T \nu B_s^H ds\right).
$$
Now, as $\int_0^T \nu B_s^H ds$ is a Gaussian process with the following representation
$$
\int_0^T \nu B_s^H ds=\int_0^T\left(\int_s^T \nu K(s,u)du\right)dB_s,
$$
it follows that $\frac{1}{\int_0^T Y_s^2 ds}\in L^p$, for all $p>1$. In a similar way, one can check that hypotheses {\rm ({\bf H1}), ({\bf H1'}), ({\bf H2})} and {\rm ({\bf H2'})} hold.
\end{remark}

The following lemma summarizes essential quantities that are crucial in the calculation of the approximation formulas \eqref{eqn:approx-tvc} and \eqref{eqn:approx-tvc-tzero} for the price of a TV call in the lognormal fSABR model. 
\begin{lemma} \label{lma:cond-exps}
For $0 < t \leq r \leq T$, define the functions $m$ and $v$ as
\beaa
&& m(r|t) := \int_0^t K(r,s)dB_s, \quad 
v(r|t) := \int_t^r K^2(r,s) ds.
\eeaa
Then, for $0 \leq t < \tau < r < u \leq T$, the following conditional expectations in the lognormal fSABR model can be obtained explicitly.
\beaa
&& \Et{Y_r^2} = Y_0^2 e^{2\nu m(r|t) + 2\nu^2 v(r|t)}, \\
&& \Et{Y_\tau Y_r^2} = Y_0^3 e^{\nu[m(\tau|t) + 2m(r|t)] + \frac{\nu^2}2 \left[v(\tau|t) + 4v(r|t) + 4\int_t^\tau K(\tau,s)K(r,s)ds \right]}, \\
&& \Et{Y_r^2 \mathbb{E}_\tau\left[Y_u^2\right]} = Y_0^4 e^{2\nu[m(r|t) + m(u|t)] + 2\nu^2\left[v(u|\tau) + v(r|t) + \int_t^\tau K^2(u,s)ds + 2\int_t^\tau K(r,s)K(u,s) ds\right]}.
\eeaa
Thus, we have
\beaa
&& \hat w_t = \int_t^T \Et{Y_s^2} ds = Y_0^2 \int_t^T e^{2\nu m(r|t) + 2\nu^2 v(r|t)} dr,  \\
&& M_t = Y_0^2 \int_0^t e^{2\nu B_s^H} ds + Y_0^2 \int_t^T e^{2\nu m(r|t) + 2\nu^2 v(r|t)} dr.
\eeaa
Furthermore, the quadratic variation $\inn{M}$ and covariation $\inn{X, M}$ between $X$ and $M$ are given by 
\beaa
&& d\langle M \rangle_t = 4\nu^2 \left(\int_t^T \Et{Y_r^2} K(r,t) dr\right)^2 dt,  \\
&& d\langle X, M \rangle_t = 2\nu \rho \left(Y_t \int_t^T \Et{Y_r^2} K(r,t) dr\right) dt.
\eeaa
\end{lemma}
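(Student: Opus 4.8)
The plan is to exploit the Gaussian structure of $B_r^H$ under conditioning. For fixed $r$ and $t<r$, I would split the Wiener integral $B_r^H=\int_0^r K(r,s)\,dB_s$ into an $\cF_t$-measurable piece and an independent future increment,
\[
B_r^H=\underbrace{\int_0^t K(r,s)\,dB_s}_{=\,m(r|t)}+\int_t^r K(r,s)\,dB_s,
\]
where the second integral is independent of $\cF_t$ and is centered Gaussian with variance $\int_t^r K^2(r,s)\,ds=v(r|t)$ by the It\^o isometry. Since $Y_r^2=Y_0^2e^{2\nu B_r^H}$, the first identity follows from the Gaussian Laplace transform $\E[e^{\lambda Z}]=e^{\lambda^2\sigma^2/2}$ with $\lambda=2\nu$ and $\sigma^2=v(r|t)$, giving $\Et{Y_r^2}=Y_0^2 e^{2\nu m(r|t)+2\nu^2 v(r|t)}$.

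For the joint expectations I would apply the same decomposition simultaneously to each exponent and then compute the variance of the resulting jointly Gaussian, $\cF_t$-independent combination. For $\Et{Y_\tau Y_r^2}$ with $\tau<r$, the random part is $\nu\int_t^\tau K(\tau,s)\,dB_s+2\nu\int_t^r K(r,s)\,dB_s$, whose variance produces the terms $v(\tau|t)$, $4v(r|t)$, and the cross term $4\int_t^\tau K(\tau,s)K(r,s)\,ds$, the last arising because the overlap of the integration ranges $[t,\tau]$ and $[t,r]$ is exactly $[t,\tau]$. For the third identity I would first invoke the already-proved first formula at level $\tau$ to write $\mathbb{E}_\tau[Y_u^2]=Y_0^2 e^{2\nu m(u|\tau)+2\nu^2 v(u|\tau)}$, note that $v(u|\tau)$ is deterministic, and then re-expand both $B_r^H$ and $m(u|\tau)=\int_0^\tau K(u,s)\,dB_s$ relative to $\cF_t$; the cross covariance again lives on the overlap $[t,\tau]$, yielding the stated exponent. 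The expressions for $\hat w_t$ and $M_t$ are then immediate upon integrating $\Et{Y_s^2}$ over $[t,T]$ and recalling $w_t=Y_0^2\int_0^t e^{2\nu B_s^H}\,ds$.

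For the quadratic variations I would regard $\Et{Y_r^2}=Y_0^2 e^{2\nu m(r|t)+2\nu^2 v(r|t)}$ as a process in $t$ (with $r$ fixed) and apply It\^o's formula, using $d_t\,m(r|t)=K(r,t)\,dB_t$ and $\partial_t v(r|t)=-K^2(r,t)$. The It\^o correction $\tfrac12(2\nu)^2K^2(r,t)\,dt$ exactly cancels the drift $-2\nu^2 K^2(r,t)\,dt$ coming from $v$, confirming that $\Et{Y_r^2}$ is a martingale and leaving $d_t\Et{Y_r^2}=2\nu K(r,t)\,\Et{Y_r^2}\,dB_t$. Integrating over $r\in(t,T)$, and observing that the boundary terms from differentiating $w_t$ and the lower limit of $\hat w_t$ cancel because $\Et{Y_t^2}=Y_t^2$, gives
\[
dM_t=2\nu\Big(\int_t^T K(r,t)\,\Et{Y_r^2}\,dr\Big)dB_t.
\]
Squaring the diffusion coefficient yields $d\langle M\rangle_t$, and pairing it with the $\rho Y_t\,dB_t$ component of $dX_t$ from \eqref{eqn:fSABR-x} yields $d\langle X,M\rangle_t$.

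I expect the main obstacle to be bookkeeping rather than anything conceptual: correctly tracking which integration range each stochastic integral occupies so that each covariance cross term is attributed to the right overlap, especially in the third identity where the inner conditional expectation at time $\tau$ must be re-expanded relative to $\cF_t$ before the variance is assembled. The martingale/drift cancellation in the quadratic-variation step is the only place where a sign slip would be costly, but it is pinned down a priori by the fact that $M$ is a martingale.
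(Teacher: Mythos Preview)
Your argument is correct and, in fact, more detailed than the paper's own sketch, but you take a different route for the martingale representation of $M$. The paper applies the Clark--Ocone formula to $M_T=\int_0^T Y_r^2\,dr$: from $D^B_s Y_r=\nu Y_r K(r,s)$ one gets $D^B_s M_T=2\nu\int_s^T Y_r^2 K(r,s)\,dr$, hence $\E_s[D^B_s M_T]=2\nu\int_s^T \E_s[Y_r^2]K(r,s)\,dr$, and the representation $dM_t=2\nu\bigl(\int_t^T K(r,t)\Et{Y_r^2}\,dr\bigr)dB_t$ follows directly, together with $d\langle M\rangle_t$ and $d\langle X,M\rangle_t$. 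The formula for $\Et{Y_r^2}$ is then read off from the conditional law of $B_r^H$ (the paper's Lemma~\ref{lma:cond-fbm}), and the other two conditional expectations are left implicit.

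Your approach instead derives the explicit formula for $\Et{Y_r^2}$ first and then differentiates it in $t$ via It\^o, relying on the drift cancellation to recover the martingale integrand. This is more elementary---no Malliavin machinery is needed---and makes the computation entirely self-contained, though it leans on having the closed form for $\Et{Y_r^2}$ in hand. The Clark--Ocone route is slicker in that it delivers the integrand in one line without ever writing down $m(r|t)$ or $v(r|t)$, and would extend more readily to volatility specifications where no such closed form is available. Either way the endpoint is the same, and your treatment of the two joint conditional expectations (which the paper omits) is exactly the intended Gaussian-covariance bookkeeping.
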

\begin{proof}
See the appendix in Section \ref{sec:appendix}.
\end{proof}
We now present the approximation formula \eqref{eqn:approx-tvc} and \eqref{eqn:approx-tvc-tzero}  in the lognormal fSABR model as follows. 
\begin{corollary} (Approximate price of TV call in fSABR) \\
The approximate formula \eqref{eqn:approx-tvc} for the price of the TV call struck at $K$ with expiry $T$ under the lognormal fSABR model \eqref{eqn:fSABR-x}:\eqref{eqn:fSABR-y} is given by 
\bea
&&\Et{\frac1{\sqrt{w_T}} \left( e^{X_T} - 1 \right)^+} \label{eqn:approx-tvc-fsabr} \\
&\approx& \frac1{\sqrt{M_t}} C(X_t, \hat w_t) \nonumber \\
&& + 2 \nu\rho F_{x\hat w}(X_t, w_t, \hat w_t) \, \int_t^T\int_\tau^T \Et{Y_\tau Y_r^2} K(r,\tau) drd\tau \nonumber \\
&& + 4\nu^2 F_{\hat w\hat w}(X_t, w_t, \hat{w}_t) \, \int_t^T\int_\tau^T\int_{r_2}^T \Et{Y_{r_2}^2 \E_{r_2}\left[Y_{r_1}^2\right]}K(r_1, \tau)K(r_2, \tau) dr_1 dr_2 d\tau. \nonumber
\eea
In particular, at $t = 0$ since $m(r|0) = 0$ and $v(r|0) = r^{2H}$, \eqref{eqn:approx-tvc-fsabr} can be expressed more explicitly as 
\bea
&&\Eof{\frac1{\sqrt{w_T}} \left( e^{X_T} - 1 \right)^+} \label{eqn:approx-tvc-fsabr-tzero} \\
&\approx& \frac1{\sqrt{M_0}} C(X_0, M_0) \nonumber \\
&+& 2 \nu\rho F_{x\hat w}(X_0, 0, M_0) \, Y_0^3 \int_0^T\int_\tau^T e^{\frac{\nu^2}2\left[\tau^{2H} + 4r^{2H} + 4 \int_0^\tau K(\tau,s)K(r,s) ds\right]} K(r,\tau) drd\tau \nonumber \\
&+& 4\nu^2 F_{\hat w\hat w}(X_0, 0, M_0) Y_0^4 \times \nonumber \\
&& \int_0^T\int_\tau^T\int_{r_2}^T e^{2\nu^2 \left[v(r_1|\tau) + r_2^{2H} + \int_0^\tau K^2(r_1,s) ds + 2\int_0^\tau K(r_2,s)K(r_1,s) ds\right]} K(r_1, \tau)K(r_2, \tau) dr_1 dr_2 d\tau, \nonumber
\eea
where $M_0 = Y_0^2 \int_0^T e^{2\nu^2 t^{2H}} dt$.
\end{corollary}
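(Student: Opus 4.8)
The plan is to specialize the generic approximation $\Lambda(T,t)$ of Theorem~\ref{thm:TVcall-approx-decomp} to the lognormal fSABR dynamics \eqref{eqn:fSABR-x}--\eqref{eqn:fSABR-y} by feeding in the explicit (co)variations recorded in Lemma~\ref{lma:cond-exps}. Ignoring the common prefactor $K\bar\sigma\sqrt T$, the three summands of $\Lambda$ are the leading Black--Scholes term $\tfrac1{\sqrt{M_t}}C(X_t,\hat w_t)$, the covariation term $F_{x\hat w}(X_t,w_t,\hat w_t)\,\Et{\inn{X,M}_T-\inn{X,M}_t}$, and the variance term $\tfrac12 F_{\hat w\hat w}(X_t,w_t,\hat w_t)\,\Et{\inn{M}_T-\inn{M}_t}$. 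The first needs nothing beyond the expressions for $M_t$ and $\hat w_t$ already in Lemma~\ref{lma:cond-exps}, so I would handle the remaining two in turn.

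For the covariation term I would substitute $d\inn{X,M}_s=2\nu\rho\,Y_s\big(\int_s^T\mathbb E_s[Y_r^2]K(r,s)\,dr\big)\,ds$ from Lemma~\ref{lma:cond-exps}, integrate over $[t,T]$, and use Fubini to bring the conditional expectation inside the $(s,r)$-integral. Since $\mathbb E_s[Y_r^2]$ is $\cF_s$-measurable, for $t\le s\le r$ the tower property gives $\Et{Y_s\,\mathbb E_s[Y_r^2]}=\Et{Y_s Y_r^2}$, and substituting the closed form of $\Et{Y_\tau Y_r^2}$ from Lemma~\ref{lma:cond-exps} (with $s$ renamed $\tau$) reproduces the second line of \eqref{eqn:approx-tvc-fsabr}. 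The $t=0$ reduction then follows from $m(\cdot\,|0)=0$ and $v(\cdot\,|0)=(\cdot)^{2H}$, which collapse the exponent to $\tfrac{\nu^2}2\big[\tau^{2H}+4r^{2H}+4\int_0^\tau K(\tau,s)K(r,s)\,ds\big]$ as in \eqref{eqn:approx-tvc-fsabr-tzero}.

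For the variance term I would substitute $d\inn{M}_s=4\nu^2\big(\int_s^T\mathbb E_s[Y_r^2]K(r,s)\,dr\big)^2 ds$, expand the square as a double integral over $(r_1,r_2)\in[s,T]^2$, and use the $(r_1,r_2)$-symmetry to restrict to $r_2<r_1$ at the cost of a factor $2$; this factor absorbs the $\tfrac12$ multiplying $F_{\hat w\hat w}$ and converts $2\nu^2$ into the $4\nu^2$ of \eqref{eqn:approx-tvc-fsabr}. The one genuinely delicate ingredient is then the conditional expectation of a \emph{product of conditional expectations}, $\Et{\mathbb E_s[Y_{r_1}^2]\,\mathbb E_s[Y_{r_2}^2]}$. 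I would compute this directly rather than quote the lemma: each factor equals the explicit lognormal $Y_0^2\exp(2\nu\,m(r|s)+2\nu^2 v(r|s))$ with $m(r|s)=\int_0^s K(r,u)\,dB_u$, so their product is lognormal in the $\cF_t$-Gaussian pair $(m(r_1|s),m(r_2|s))$ and the Gaussian moment generating function yields
\[
\Et{\mathbb E_s[Y_{r_1}^2]\mathbb E_s[Y_{r_2}^2]}=Y_0^4\,e^{2\nu[m(r_1|t)+m(r_2|t)]+2\nu^2[v(r_1|t)+v(r_2|t)]+4\nu^2\int_t^s K(r_1,u)K(r_2,u)\,du},
\]
after using $v(r_i|s)+\int_t^s K^2(r_i,u)\,du=v(r_i|t)$. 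The care needed here is exactly that this object, with cross term $\int_t^s$, must not be conflated with $\Et{Y_{r_1}^2Y_{r_2}^2}$ (cross term $\int_t^{r_2}$): it is the former, coming from the pathwise quadratic variation of the martingale $M$, that enters $\inn{M}$. At $t=0$ this exponent becomes $2\nu^2\big[r_1^{2H}+r_2^{2H}+2\int_0^\tau K(r_2,s)K(r_1,s)\,ds\big]$, which after writing $v(r_1|\tau)+\int_0^\tau K^2(r_1,s)\,ds=r_1^{2H}$ is precisely the third line of \eqref{eqn:approx-tvc-fsabr-tzero}.

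Assembling the three pieces gives \eqref{eqn:approx-tvc-fsabr}, and the $t=0$ specialization --- together with $M_0=\hat w_0=Y_0^2\int_0^T e^{2\nu^2 t^{2H}}\,dt$ read off from Lemma~\ref{lma:cond-exps} at $t=0$ --- gives \eqref{eqn:approx-tvc-fsabr-tzero}. I expect the bulk of the labour to be routine bookkeeping of integration limits and of which Brownian increments enter each conditional variance; the single nontrivial point is the product-of-conditional-expectations computation in the $\inn{M}$ term, where getting the conditioning time right is essential.
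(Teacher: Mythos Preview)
Your proposal is correct and follows exactly the route the paper takes (implicitly): the corollary is stated without proof and is meant to be read as the direct specialization of Theorem~\ref{thm:TVcall-approx-decomp} via the covariation and quadratic-variation identities of Lemma~\ref{lma:cond-exps}, together with the closed-form conditional expectations listed there. Your tower-property reductions $\Et{Y_\tau\,\mathbb E_\tau[Y_r^2]}=\Et{Y_\tau Y_r^2}$ and $\Et{\mathbb E_\tau[Y_{r_1}^2]\mathbb E_\tau[Y_{r_2}^2]}=\Et{Y_{r_2}^2\,\mathbb E_\tau[Y_{r_1}^2]}$ are precisely what links the pathwise (co)variation to the three moment formulas in the lemma.

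One remark worth recording: your careful computation of the $\langle M\rangle$ term shows that the correct integrand is $\Et{Y_{r_2}^2\,\mathbb E_\tau[Y_{r_1}^2]}$ (conditioning at the outer integration time $\tau$), matching both Lemma~\ref{lma:cond-exps}'s third identity and the explicit $t=0$ exponent in \eqref{eqn:approx-tvc-fsabr-tzero}. The symbol $\mathbb E_{r_2}$ appearing in \eqref{eqn:approx-tvc-fsabr} is a slip---taken literally it would collapse to $\Et{Y_{r_2}^2Y_{r_1}^2}$ and produce the cross term $\int_t^{r_2}$ you rightly flagged as the wrong object.
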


\begin{remark}
The numerical implementation of the approximation formula \eqref{eqn:approx-tvc-fsabr} and \eqref{eqn:approx-tvc-fsabr-tzero} is computationally intensive due to the evaluations of the multiple integrals that are involved in the formula.
We further approximate and simplify \eqref{eqn:approx-tvc-fsabr} as follows. 
\bea
&&\Et{\frac1{\sqrt{w_T}} \left( e^{X_T} - 1 \right)^+} \label{eqn:approx-tvc-fsabr-1} \\
&\approx& \frac1{\sqrt{M_t}} C(X_t, \hat w_t) %\nonumber \\
+ 2 \nu\rho F_{x\hat w}(X_t, w_t, \hat w_t) Y_t^3\, \int_t^T\int_\tau^T K(r,\tau) drd\tau \nonumber \\
&& + 4\nu^2 F_{\hat w\hat w}(X_t, w_t, \hat{w}_t) \, Y_t^4\int_t^T\int_\tau^T\int_{r_2}^T  K(r_1, \tau)K(r_2, \tau) dr_1 dr_2 d\tau. \nonumber
\eea
In particular, if $t=0$, we calculate the two integrals on the right hand side of \eqref{eqn:approx-tvc-fsabr-1} as 
\beaa
&& \int_t^T\int_\tau^T K(r,\tau) drd\tau 
= \int_0^T \int_0^r K(r,\tau) d\tau dr = \kappa_H \int_0^T r^{\frac12 + H} dr = \frac{\kappa_H}{\frac32 + H} T^{\frac32 + H}, 
\eeaa
where $\kappa_H := c_H \frac{\beta\left(\frac32 - H, H + \frac12\right)}{H + \frac12}$ with $\beta(\cdot,\cdot)$ being the beta function (see for instance Lemma 2.3 in \cite{asw1} for a proof of the second equality), 
and
\begin{eqnarray*}
&&\int_0^T \int_{\tau}^T \int_{r_2}^T K(r_1,\tau)K(r_2,\tau)dr_1 dr_2 d\tau \\
&=& \int_0^T \int_{r_2}^T \left(    \int_0^{r_2} K(r_1,\tau)K(r_2,\tau)d\tau  \right)dr_1 dr_2 \\
&=& \frac12 \int_0^T \int_{r_2}^T \left(     r_1^{2H}+r_2^{2H}-(r_1-r_2)^{2H}\right)dr_1 dr_2 \\
&=& \frac{T^{2(1 + H)}}{4(1 + H)} 
\end{eqnarray*}
since $\int_0^{r_2} K(r_1,\tau)K(r_2,\tau)d\tau$ is equal to the expectation of $B_{r_2}B_{r_1}$.
It follows that
\bea
&&\Eof{\frac1{\sqrt{w_T}} \left( e^{X_T} - 1 \right)^+} \label{eqn:approx-tvc-fsabr-1-t0} \\
&\approx& \frac1{\sqrt{M_0}} C(X_0, M_0) %\nonumber \\
+ 2 \nu\rho F_{x\hat w}(X_0, 0, M_0) Y_0^3\, \frac{\kappa_H}{\frac32 + H} T^{\frac32 + H} \nonumber \\
&& + \nu^2 F_{\hat w\hat w}(X_0, 0, M_0) \, Y_0^4 \frac{T^{2(1 + H)}}{(1 + H)}. \nonumber
\eea
It is indeed \eqref{eqn:approx-tvc-fsabr-1-t0} that will be implemented in Section \ref{sec:numerics}.  
\end{remark}

%The numerical implementation of the approximation formula \eqref{eqn:approx-tvc-fsabr} and \eqref{eqn:approx-tvc-fsabr-tzero} is computationally intensive due to the evaluations of the multiple integrals that are involved in the formula.
We derive in the next section the small volatility of volatility approximation for the price of a TV call in the lognormal fSABR model, which is numerically more tractable and in a sense can be regarded as a further approximation of the approximation formula \eqref{eqn:approx-tvc-fsabr-tzero}.

%%%%%%%%%%%%%%%%%%%%%%%%%%%%%%%%%%%%%%%%%%%%%%%%%%%%%%%%%%%%%%%%%
%  Section: Small vol of vol expansion
%%%%%%%%%%%%%%%%%%%%%%%%%%%%%%%%%%%%%%%%%%%%%%%%%%%%%%%%%%%%%%%%%
\section{Small volatility of volatility expansion} \label{sec:small-vol-vol}
By conditioning on the $\sigma$-algebra generated by the volatility process up to the expiry of a TV call, we show in this section an alternative approach to the pricing of a TV call. This alternative approach induces an asymptotic of the price of a TV call in the small volatility of volatility regime which is computationally more tractable than \eqref{eqn:approx-tvc-fsabr-tzero} by sacrificing some accuracy.     

To be more specific, recall that the price at time $t=0$ of a TV call struck at $K$ and expiry $T$ is given by the expectation in \eqref{eqn:tvo-price-t} which we recast in the following as 
\bea
K \bar\sigma \sqrt T \, \Eof{\frac1{\sqrt{\int_0^T Y_t^2 dt}} \Eof{\left. \left( e^{X_T}- 1 \right)^+ \right|\cF_T^Y}}, \label{eqn:tvc-condYT}
\eea
where $\cF_T^Y$ is the $\sigma$-algebra generated by the process $Y$ up to time $T$. Of course, $\cF_T^Y$ is equivalent to the $\sigma$-algebra $\cF_T^B$ generated by the Brownian motion $B$ up to time $T$. We shall again temporarily ignore the constant factor outfront in \eqref{eqn:tvc-condYT} in the calculations that follow. 

Notice that from \eqref{eqn:fSABR-x} we have
\beaa
&& X_T = X_0 + \rho\int_0^T Y_t dB_t - \frac12 \int_0^T Y_t^2 dt + \bar\rho \int_0^T Y_t dW_t
\eeaa
and since the two Brownian motions $B_t$ and $W_t$ are independent, 
$X_T|\cF^Y_T$ is normally distributed with mean $\mu_T$ and variance $v_T$ given respectively by
\beaa
&& \mu_T = X_0 + \rho\int_0^T Y_t dB_t - \frac12 \int_0^T Y_t^2 dt, \quad 
v_T = \bar\rho^2 \int_0^T Y_t^2 dt.
\eeaa
Thus, the inner conditional expectation in \eqref{eqn:tvc-condYT} can be evaluated in terms of the Black-Scholes function $C$ defined in \eqref{eqn:BS-func} as
\beaa
&& \Eof{\left. \left( e^{X_T}- 1 \right)^+ \right|\cF_T^Y} 
= C\left(\mu_T + \frac{v_T}2, v_T\right).
\eeaa
It follows that the expectation in \eqref{eqn:tvc-condYT} can be rewritten in terms of $C$ as
\begin{equation}
\Eof{\frac1{\sqrt{\int_0^T Y_t^2 dt}}C\left( X_0 + \rho\int_0^T Y_t dB_t - \frac{\rho^2}2 \int_0^T Y_t^2dt, \bar\rho^2\int_0^T Y_t^2 dt \right)}. \label{eqn:main-exp}
\end{equation}
In particular, in the uncorrelated case $\rho = 0$ the TVO price reduces to
\bea
&& \Eof{\frac1{\sqrt{\int_0^T Y_t^2 dt}}C\left( X_0, \int_0^T Y_t^2 dt \right)} \label{uncorrelated}
\eea
which is simply the classical Black-Scholes with independently randomized total variance. 

\begin{remark}
We remark that if one substitute $Y_t$ in \eqref{eqn:main-exp} by its expectation and evaluated the resulting expression, by straightforward calculations, one can show that it recovers the lowest order term in \eqref{eqn:approx-tvc-tzero}, i.e., the term $\frac1{\sqrt{M_0}}C(X_0, M_0)$.
\end{remark}

Recall that $w_T = \int_0^T Y_t^2 dt$ and define $\xi_T$ by 
\[
\xi_T := X_0 + \rho \int_0^T Y_t dB_t - \frac{\rho^2}2 w_T.
\]
We can also rewrite \eqref{eqn:main-exp} more concisely in terms of the function $F$ defined in \eqref{eqn:func-F} as
\begin{equation}
\Eof{F(\xi_T, \rho^2 w_T, \bar\rho^2 w_T)}. \label{eqn:main-exp-inF}
\end{equation}     
In the lognormal fSABR model $Y_t = Y_0 e^{\nu B_t^H}$, we expand $w_T$ and $\xi_T$ in the volatility of volatility parameter $\nu$ as  
\beaa
&& w_T = \sum_{k=0}^\infty \frac{\nu^k}{k!} w_T^{(k)} \quad \mbox{and} \quad 
\xi_T = \sum_{k=0}^\infty \frac{\nu^k}{k!} \xi_T^{(k)}, 
\eeaa
where
\beaa
&& w_T^{(k)} = Y_0^2 \, 2^k \int_0^T (B_t^H)^k dt, \quad \mbox{ for } k \geq 0, \\
&& \xi_T^{(0)} = X_0 + \rho Y_0 B_T - \frac{\rho^2}2 Y_0^2 T, \\
&& \xi_T^{(k)} = \rho Y_0 \int_0^T (B_t^H)^k dB_t - \frac{\rho^2}2 w_T^{(k)}, \quad \mbox{ for } k \geq 1.
\eeaa

With the aid of the identities in Lemmas \ref{lma:normal-expects} and \ref{lma:cond-exps} in Section \ref{sec:appendix}, we now show the small volatility of volatility expansion for the price of a TV call given in \eqref{eqn:main-exp} as $\nu$ approaches zero. 
\begin{theorem}(Small volatility of volatility asympotitcs for TV call price) \label{thm:tv-price-small-volvol} \\
The price of a TV call struck at $K$ and expiry $T$ has the following asymptotic up to the first order as $\nu \to 0$.
\bea
&& \Eof{\frac1{\sqrt w_T}\left( e^{X_T} - 1 \right)^+} \label{eqn:tv-price-small-volvol2} \\
&=& \frac1{Y_0\sqrt T} \Eof{C}
+ \frac{\nu}{Y_0 \sqrt T} \Eof{C_x \xi_T^{(1)} + \bar\rho^2 C_w w_T^{(1)}} 
- \frac\nu{2 Y_0^3 T^{\frac32}} \Eof{w_T^{(1)} C} + O(\nu^2), \nonumber
\eea
where the function $C$ and all its partial derivatives in the last expression are evaluated at $\left(\xi_T^{(0)}, \bar\rho^2 w_T^{(0)} \right)$. 
%The expectations on the right side of \eqref{eqn:tv-price-small-volvol} are given explicitly by 
%\beaa
%&& \Eof{C} = C(X_0, Y_0^2T) \\
%&& \Eof{C_x \xi_T^{(1)}} =  \\
%&& \Eof{C_w w_T^{(1)}} = \\
%&& \Eof{w_T^{(1)} C} = 
%\eeaa
\end{theorem}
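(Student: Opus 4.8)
The plan is to treat \eqref{eqn:main-exp-inF} as the expectation of a single random function of $\nu$ and to Taylor expand it to first order. Using $\rho^2+\bar\rho^2=1$ together with the definition \eqref{eqn:func-F} of $F$, the quantity to be expanded is
\[
\phi(\nu) := \frac{1}{\sqrt{w_T}}\, C\!\left(\xi_T,\, \bar\rho^2 w_T\right),
\]
where both $w_T=w_T(\nu)$ and $\xi_T=\xi_T(\nu)$ carry the $\nu$-dependence through the power series stated just before the theorem. Since $w_T(\nu)=Y_0^2\int_0^T e^{2\nu B_t^H}dt>0$ and $C(x,w)$ is smooth for $w>0$, the map $\nu\mapsto\phi(\nu)$ is pathwise smooth, so the expansion is a genuine Taylor expansion and the only real work is in the integrability bookkeeping. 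First I would record that at $\nu=0$ only the $k=0$ terms survive, so $w_T(0)=w_T^{(0)}=Y_0^2 T$ and $\xi_T(0)=\xi_T^{(0)}$, while $\tfrac{d}{d\nu}w_T\big|_{0}=w_T^{(1)}$ and $\tfrac{d}{d\nu}\xi_T\big|_{0}=\xi_T^{(1)}$.

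Next I would differentiate $\phi$ by the product and chain rules,
\[
\phi'(\nu) = -\tfrac{1}{2}\, w_T^{-3/2}\, \tfrac{dw_T}{d\nu}\, C + w_T^{-1/2}\!\left( C_x\, \tfrac{d\xi_T}{d\nu} + \bar\rho^2\, C_w\, \tfrac{dw_T}{d\nu}\right),
\]
and evaluate at $\nu=0$. Substituting $(w_T(0))^{-1/2}=(Y_0\sqrt T)^{-1}$, $(w_T(0))^{-3/2}=(Y_0^3 T^{3/2})^{-1}$ and the derivatives above gives
\[
\phi'(0) = \frac{1}{Y_0\sqrt T}\!\left( C_x\, \xi_T^{(1)} + \bar\rho^2 C_w\, w_T^{(1)}\right) - \frac{1}{2 Y_0^3 T^{3/2}}\, w_T^{(1)} C,
\]
with $C,C_x,C_w$ evaluated at $(\xi_T^{(0)},\bar\rho^2 w_T^{(0)})$. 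Taking expectations in $\Eof{\phi(\nu)}=\Eof{\phi(0)}+\nu\,\Eof{\phi'(0)}+O(\nu^2)$, and noting $\Eof{\phi(0)}=\frac{1}{Y_0\sqrt T}\Eof{C}$, reproduces the three displayed terms of \eqref{eqn:tv-price-small-volvol2} verbatim.

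The substantive part of the proof is making the remainder $O(\nu^2)$ rigorous and justifying differentiation under the expectation. I would invoke Taylor's theorem with integral remainder,
\[
\phi(\nu) = \phi(0) + \nu\phi'(0) + \nu^2\!\int_0^1 (1-s)\,\phi''(s\nu)\,ds,
\]
and control $\sup_{|\nu|\le\nu_0}\Eof{|\phi''(s\nu)|}$. The difficulty is that $\phi$ carries the singular factor $w_T^{-1/2}$ and that the partial derivatives of the Black--Scholes function $C$ blow up like negative powers of the total variance $\bar\rho^2 w_T$ as $w_T\to 0$: concretely $\phi''$ is a finite sum of terms of the schematic form (polynomial in the Gaussian functionals $\xi_T^{(k)},w_T^{(k)}$) times (a partial derivative of $C$) times (a negative power of $w_T$).

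This is where the a priori estimates enter. The Remark following \eqref{eqn:fSABR-y} establishes $1/w_T\in\cap_{p\ge1}L^p$; the polynomial-in-$w^{-1/2}$ bounds on the derivatives of $C$ quoted in the proof of Theorem \ref{thm:TV-call-decomp} dominate the Black--Scholes factors by negative powers of $\bar\rho^2 w_T$; and the functionals $w_T^{(k)}=Y_0^2 2^k\int_0^T (B_t^H)^k dt$ and $\xi_T^{(k)}$, being polynomials in a Gaussian field, have finite moments of every order (computable via the identities of Lemma \ref{lma:normal-expects}). Combining these by Hölder's inequality yields a bound on $\Eof{|\phi''(s\nu)|}$ that is uniform for $\nu$ in a neighbourhood of $0$, which both certifies the $O(\nu^2)$ control of the remainder and legitimizes the interchange of expectation and differentiation. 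The main obstacle is therefore not the algebra of the expansion but this uniform domination of $\phi''$ against the simultaneous blow-up of $w_T^{-1/2}$ and of the $C$-derivatives as $\nu$ (hence the realized variance) becomes small; the all-order integrability of $1/w_T$ guaranteed by the Remark is precisely what neutralizes those singularities.
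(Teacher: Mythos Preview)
Your approach is essentially the same as the paper's: both Taylor expand the conditioned price $F(\xi_T,\rho^2 w_T,\bar\rho^2 w_T)=w_T^{-1/2}C(\xi_T,\bar\rho^2 w_T)$ to first order in $\nu$, evaluate at $\nu=0$ using $w_T^{(0)}=Y_0^2T$, and take expectations. The only difference is cosmetic---the paper differentiates $F$ in its three arguments and then rewrites $F_x,F_w,F_{\hat w}$ in terms of $C$, whereas you work directly with $\phi(\nu)=w_T^{-1/2}C(\xi_T,\bar\rho^2 w_T)$---and that you supply a remainder argument (Taylor with integral remainder plus the $L^p$ control of $1/w_T$ from the Remark after \eqref{eqn:fSABR-y}) which the paper leaves entirely formal.
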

\begin{proof}
As $\nu \to 0$, consider \eqref{eqn:main-exp-inF}  
\beaa
%&& \Eof{\frac1{\sqrt{w_T}}C\left(\xi_T , \bar\rho^2 w_T \right)} \\
&& \Eof{F(\xi_T, \rho^2 w_T, \bar\rho^2 w_T)} \\
%&=& \E\left[\frac1{\sqrt{\sum_{k=0}^\infty \frac{\nu^k}{k!} w_T^{(k)}}} \; C\left(\sum_{k=0}^\infty \frac{\nu^k}{k!} \xi_T^{(k)}, \bar\rho^2 \sum_{k=0}^\infty \frac{\nu^k}{k!} w_T^{(k)} \right) \right] \\
&=& \Eof{F\left( \sum_{k=0}^\infty \frac{\nu^k}{k!} \xi_T^{(k)}, \rho^2 \sum_{k=0}^\infty \frac{\nu^k}{k!} w_T^{(k)}, \bar\rho^2 \sum_{k=0}^\infty \frac{\nu^k}{k!} w_T^{(k)} \right)}.
\eeaa
To the first order we have
\beaa
&& \Eof{F(\xi_T^{(0)}, \rho^2 w_T^{(0)}, \bar\rho^2 w_T^{(0)})} + \nu \Eof{F_x \xi_T^{(1)} + F_w \rho^2 w_T^{(1)} + F_{\hat w} \bar\rho^2 w_T^{(1)})} + O(\nu^2) \\
&=& \Eof{\frac C{\sqrt{w_T^{(0)}}}}
+ \nu \E\left[\frac{C_x}{\sqrt{w_T^{(0)}}} \xi_T^{(1)} + \bar\rho^2 \frac{C_w}{\sqrt{w_T^{(0)}}} w_T^{(1)} 
- \frac{C}{2\left(w_T^{(0)}\right)^{3/2}} w_T^{(1)} \right] + O(\nu^2) \\
&=& \frac1{Y_0\sqrt T} \Eof{C}
+ \frac{\nu}{Y_0 \sqrt T} \Eof{C_x \xi_T^{(1)} + \bar\rho^2 C_w w_T^{(1)}} 
- \frac\nu{2 Y_0^3 T^{\frac32}} \Eof{w_T^{(1)} C} + O(\nu^2),
\eeaa
where the function $C$ and all its partial derivatives in the last expression are evaluated at $\left(\xi_T^{(0)}, \bar\rho^2 w_T^{(0)} \right)$. Notice that $\xi_T^{(0)}$ is a random variable, since it is a linear function of $B_T$, whereas $w_T^{(0)} = Y_0^2 T$ is deterministic. 
\end{proof}

The expectations in the last expression of \eqref{eqn:tv-price-small-volvol2} can be obtained explicitly, which leads to an explicit approximation formula for the price of a TV call. The following corollary gives the complete pricing formula for computational purposes. 

\begin{corollary}\label{cor:tvoComputationFormula}
Let $X_0=\log{\frac{S_0}{K}}$ and assume $Y_0$ is given. Then the price of a TV call option expiring at $T>0$ with strike $K$ and target volatility $\bar\sigma$ has the following first order approximation as $\nu\to0$:
\bea
\text{TVO} &:=& K\bar\sigma\sqrt{T}\Eof{\frac1{\sqrt{\int_0^T Y_\tau d\tau}}\Big( e^{X_T} - 1 \Big)^+\Big| (X_0,Y_0) } \nonumber\\
& \approx & \frac{K\bar\sigma}{Y_0}C(X_0,T Y_0^2) + \frac{K \bar\sigma} {2 \sqrt{\pi } (2 H+3) \rho  Y_0^2} \times\nonumber\\
&&\Big[c_H \nu  T^{H-\frac{1}{2}} \exp \Big(-\frac{\rho^2 X_0^2}{T Y_0^2}-\frac{1}{2} \Big(2 \rho ^2+1\Big) T Y_0^2-\frac{Y_0}{2 \sqrt{T}}\Big)\times \nonumber\\
&&\Big(\sqrt{\pi } e^{\frac{\rho ^2 X_0^2}{T Y_0^2}+\frac{Y_0}{2 \sqrt{T}}}
   \Big(N\Big(\frac{T Y_0^2+2 X_0}{2 \sqrt{T} Y_0}\Big) \Big(e^{\frac{1}{2} \Big(2 \rho ^2+1\Big) T Y_0^2+X}\times\nonumber\\
   && \Big(4 T X_0^2+4 X_0 \Big(T Y_0^2
   \Big(\rho ^2-2 \rho ^2 T+T\Big)+1\Big)+\nonumber\\
   && T Y_0^2 \Big(-4 \rho ^2+T \Big(Y_0^2 \Big(4 \rho ^4 (T-1)+2 \rho ^2 (1-2 T)+T\Big)-4 \rho
   ^2\Big)+2\Big)\Big)+\nonumber\\
   && 2 \sqrt{1-\rho ^2} \Big(T Y_0^2+2 X_0\Big) e^{\frac{1}{2} \Big(3 \rho ^2+\sqrt{1-\rho ^2}\Big) T Y_0^2+\sqrt{1-\rho ^2} X_0}\nonumber\\
   &&\Big(T \Big(Y_0^2 \Big(\rho ^2 (2 T-1)-T\Big)-2 X_0\Big)-1\Big)\Big)+\sqrt{T} \Big(\sqrt{T} \times\nonumber\\
   &&e^{\frac{3 T Y_0^2}{2}+X_0} N'\Big(\frac{T Y_0^2+2X_0}{2 \sqrt{T} Y_0}\Big) \times\nonumber\\
   &&\Big(-2 \sqrt{T} X_0 Y_0 \Big(2 \rho ^4-4  \rho ^2+\Big(\rho ^2-2\Big) \sqrt{T}
   Y_0\Big)+T Y_0^2\times\nonumber\\
   &&
   \Big(2 \rho ^2+\Big(\rho ^2-1\Big) \Big(6 \sqrt{\frac{1}{1-\rho ^2}} \rho ^4+1\Big) (-T) Y_0^2-2 \rho ^2 \Big(5 \rho ^2-2 \Big) \sqrt{T} Y_0\Big)+4 X_0^2\Big)+\nonumber\\
   &&
   4 \rho ^2 Y_0
   N'\Big(\frac{X_0}{\sqrt{T} Y_0}-\frac{\sqrt{T} Y_0}{2}\Big) \Big(e^{\frac{1}{2} \Big(2 \rho ^2+1\Big) T Y_0^2}-e^{\frac{1}{2} \Big(3 \rho
   ^2+\sqrt{1-\rho ^2}\Big) T Y_0^2+\sqrt{1-\rho ^2} X_0} \times\nonumber\\
   &&
   \Big(T \Big(Y_0^2 \Big(\rho ^2-2 \rho ^2 T+T\Big)+2
   X_0\Big)+1\Big)\Big)\Big)\Big)+\sqrt{2} \Big(\rho ^2-1\Big) \rho ^2 \sqrt{T} Y_0 \Big(T Y_0^2-2 X\Big)\times\nonumber\\
   && \exp \Big(\frac{X_0}{T^{3/2}Y_0}+\frac{1}{4} \Big(3 \rho ^2+2\Big) T Y_0^2+\rho ^2 X_0\Big)\Big)\Big]\label{eqn:tv-price-small-volvol}
\eea 
\end{corollary}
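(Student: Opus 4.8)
The plan is to start from the first–order expansion in Theorem \ref{thm:tv-price-small-volvol} and evaluate each of the three expectations appearing there in closed form. The key structural observation is that in every term the function $C$ and its derivatives are evaluated at $\left(\xi_T^{(0)}, \bar\rho^2 w_T^{(0)}\right)$, where the second argument $\bar\rho^2 w_T^{(0)} = \bar\rho^2 Y_0^2 T$ is \emph{deterministic} and $\xi_T^{(0)} = X_0 + \rho Y_0 B_T - \frac{\rho^2}2 Y_0^2 T$ is an affine, hence Gaussian, functional of $B_T$. The randomness multiplying these Greeks, namely $w_T^{(1)} = 2Y_0^2\int_0^T B_t^H\,dt$ and $\xi_T^{(1)} = \rho Y_0\int_0^T B_t^H\,dB_t - \frac{\rho^2}2 w_T^{(1)}$, is built from the Gaussian $\int_0^T B_t^H\,dt$ and the second–chaos variable $\int_0^T B_t^H\,dB_t$.

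For the leading term I would use Gaussian convolution (equivalently, the heat–type structure of the Black--Scholes PDE \eqref{eqn:BS-eqn}): writing $C(x,w) = \Eof{(e^{x - w/2 + \sqrt w Z}-1)^+}$ and averaging the independent increment $\rho Y_0 B_T$ against the second argument $\bar\rho^2 Y_0^2 T$, the two variances add to $(\rho^2 + \bar\rho^2)Y_0^2 T = Y_0^2 T$, so that $\Eof{C(\xi_T^{(0)}, \bar\rho^2 Y_0^2 T)} = C(X_0, Y_0^2 T)$. Restoring the prefactor $K\bar\sigma\sqrt T$ this already produces the zeroth–order term $\frac{K\bar\sigma}{Y_0}C(X_0, TY_0^2)$.

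The heart of the computation is the first–order term, where I would eliminate the two sources of randomness in turn. First, the second–chaos integral is removed by Malliavin duality: since $B_t^H$ is $\cF_t^B$–adapted, $\int_0^T B_t^H\,dB_t$ is a Skorohod/It\^o integral and for any smooth $g$, $\Eof{g(B_T)\int_0^T B_t^H\,dB_t} = \Eof{\int_0^T B_t^H D_t\big(g(B_T)\big)\,dt} = \Eof{g'(B_T)\int_0^T B_t^H\,dt}$, using $D_t B_T = \1_{[0,T]}(t)$; this turns every $\xi_T^{(1)}$ contribution into one further $x$–derivative of $C$ against the Gaussian $\int_0^T B_t^H\,dt$. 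Second, for integrals $\Eof{\psi(\xi_T^{(0)})\int_0^T B_t^H\,dt}$, joint Gaussianity of $\big(\xi_T^{(0)}, \int_0^T B_t^H\,dt\big)$ together with Stein's identity gives $\Eof{\psi(\xi_T^{(0)})\int_0^T B_t^H\,dt} = \rho Y_0\,c\,\Eof{\psi'(\xi_T^{(0)})}$, where $c := \cov\!\big(B_T, \int_0^T B_t^H\,dt\big) = \int_0^T\!\int_0^t K(t,s)\,ds\,dt = \frac{\kappa_H}{H+\frac32}\,T^{H+\frac32}$ is read off from the kernel computation already used in Section \ref{sec:tvo-fsabr} and encoded in Lemma \ref{lma:cond-exps}. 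Each residual $\Eof{\psi(\xi_T^{(0)})}$ is then collapsed to a Black--Scholes Greek at $(X_0, Y_0^2 T)$ by differentiating the convolution identity of the previous paragraph in $X_0$ and in $w$. Assembling the three contributions and invoking the Black--Scholes relation $C_w = \frac12(C_{xx}-C_x)$ together with $\rho^2 + \bar\rho^2 = 1$, the weighted combination of the $C_x\xi_T^{(1)}$ and $\bar\rho^2 C_w w_T^{(1)}$ pieces collapses to a single multiple of $C_{xxx}-C_{xx}$, and the $w_T^{(1)}C$ piece to a multiple of $C_x$, all evaluated at $(X_0, Y_0^2 T)$ and carrying an overall factor $\nu\rho$ (so the correction correctly vanishes when $\rho = 0$).

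The final step is purely computational: substitute the explicit forms of $C$, $C_x$, $C_{xx}$, $C_{xxx}$ in terms of $N$ and $N'$ evaluated at $d_1, d_2$ with $w = Y_0^2 T$, expand the resulting Gaussian expectations via the identities of Lemma \ref{lma:normal-expects}, and collect terms to reach the displayed closed form \eqref{eqn:tv-price-small-volvol}. I expect the main obstacle to be twofold: conceptually, the correct handling of the non-Gaussian second–chaos term $\int_0^T B_t^H\,dB_t$ through the duality formula (getting the extra $x$–derivative and the factor $\rho Y_0$ right); and practically, the sheer volume of error–prone algebra in writing out the several Black--Scholes Greeks and combining their exponential, $N$, and $N'$ factors. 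The identities $C_w = \frac12(C_{xx}-C_x)$ and $\rho^2+\bar\rho^2=1$ are precisely what keep the intermediate expressions manageable.
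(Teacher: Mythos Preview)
Your proposal is correct and takes a genuinely different route from the paper. The paper's proof (see Section~\ref{sec:proof-of-cor}) conditions on $B_T$ and invokes the Hermite--polynomial identities of Lemma~\ref{lma:cond-exps} to replace $\int_0^T B_t^H\,dt$ and $\int_0^T B_t^H\,dB_t$ by explicit polynomials in $B_T$; this reduces the first--order term to five expectations of the form $\Eof{B_T^j\,e^{a\xi_T^{(0)}}N(d_i)}$, which are then evaluated one by one through Lemma~\ref{lma:normal-expects}. Your approach bypasses both lemmas: the heat--kernel identity $\Eof{C(\xi_T^{(0)},\bar\rho^2 Y_0^2 T)} = C(X_0, Y_0^2 T)$ handles the averaging over $B_T$, Malliavin integration by parts trades $\int_0^T B_t^H\,dB_t$ for an extra $x$--derivative, and Stein's lemma trades $\int_0^T B_t^H\,dt$ for the covariance $c$ times yet another $x$--derivative. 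The upshot is that the entire first--order correction collapses to
\[
K\bar\sigma\,\nu\rho\,c\left[2Y_0^2\,C_{xw}(X_0,Y_0^2 T)\;-\;\tfrac1T\,C_x(X_0,Y_0^2 T)\right],
\]
with all Greeks evaluated at a \emph{deterministic} point. This is considerably cleaner than the displayed expression \eqref{eqn:tv-price-small-volvol}, which reflects the paper's more computational path. Two small remarks: your final paragraph invokes Lemma~\ref{lma:normal-expects} to ``expand the resulting Gaussian expectations,'' but in fact after your reductions no expectations remain and that lemma is not needed; and the ``sheer volume of error--prone algebra'' you anticipate is largely an artifact of the paper's route, not yours---writing out $C_x$ and $C_{xw}$ at $(X_0, Y_0^2 T)$ is a two--line calculation.
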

\begin{proof}
We first calculate the zeroth order term as follows. 
\beaa
&& \Eof{C} = \Eof{C\left(\xi_T^{(0)}, \bar\rho^2 w_T^{(0)} \right)} \\
%&=& \Eof{e^{\xi_T^{(0)}} N(d_1)} - \Eof{N(d_2)} \\
&=& \Eof{e^{\xi_T^{(0)}} N\left( \frac{\xi_T^{(0)}}{\sqrt{\bar\rho^2 w_T^{(0)}}} + \frac{\sqrt{\bar\rho^2 w_T^{(0)}}}2 \right)} - \Eof{N\left( \frac{\xi_T^{(0)}}{\sqrt{\bar\rho^2 w_T^{(0)}}} - \frac{\sqrt{\bar\rho^2 w_T^{(0)}}}2 \right)} \\
&=& e^{X_0} N\left( \frac{X_0}{\sqrt{Y_0^2T}} + \frac{\sqrt{Y_0^2T}}2 \right) - N\left( \frac{X_0}{\sqrt{Y_0^2T}} - \frac{\sqrt{Y_0^2T}}2 \right) \\
&=& C(X_0, Y_0^2 T),
\eeaa
which unsurprisingly is independent of $\rho$ and $\nu$. Note that in passing to the last equality, we used \eqref{eqn:E-NX} and \eqref{eqn:E-eXNX}. The rest of the calculations, though straightforward, are more involved and tedious. We leave the details to the Appendix, see section \ref{sec:proof-of-cor}.
\end{proof}
We remark that, up to first order of $\nu$, the first three terms in the last expression represent the price of a vanilla option scale up/down by the factor $\frac{\bar\sigma}{Y_0}$, see Theorem \ref{thm:small-vol-vol-vanilla}. The last term corresponds to a correction to vanilla due to the uncertainty of realized variance.

%%%%%%%%%%%%%%%%%%%%%%%%%%%%%%%%%%%%%%%%%%%%%%%%%%%%%%%%%%%%%%%%%
%  Subsection: Expansion for price of vanilla option
%%%%%%%%%%%%%%%%%%%%%%%%%%%%%%%%%%%%%%%%%%%%%%%%%%%%%%%%%%%%%%%%%
\subsection{Expansion for price of vanilla option}
We consider the small volatility of volatility expansion of vanilla calls in this subsection. To our knowledge, the small volatility of volatility expansion for vanilla option under fSABR model does not seem to exist in literature by the time the paper was written up. 

The premium of a vanilla call struck at $K$ and expiry $T$ in our notation is given by the expectation under risk neutral probability 
\bea
&& K \, \Eof{\left(e^{X_T} - 1 \right)^+} \nonumber \\
&=& K \, \Eof{C\left( X_0 + \rho\int_0^T Y_t dB_t - \frac{\rho^2}2 \int_0^T Y_t^2dt, \bar\rho^2\int_0^T Y_t^2 dt \right)}, \label{eqn:vanilla-call}
\eea
where $C$ is again the normalized Black-Scholes function defined in \eqref{eqn:BS-func}. 
\begin{theorem}(Small volatility of volatility asympotitc for vanilla call) \label{thm:small-vol-vol-vanilla} \\
The price of a vanilla call struck at $K$ and expiry $T$ in the lognormal fSABR model has the following asymptotic 
\beaa
&& K \, \Eof{\left(e^{X_T} - 1 \right)^+}  \\
&\approx& K C(X_0, Y_0^2T) + \nu \frac{2c_H K}{2H + 3} T^{H + \frac12} \Eof{2 Y_0^2 C_w B_T + \rho Y_0 C_x \left\{ B_T^2 - T - \rho Y_0 B_T \right\}} \\ 
&& + O(\nu^2).
\eeaa
as the volatility of volatility parameter $\nu$ approaches zero.
\end{theorem}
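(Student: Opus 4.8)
The plan is to follow the proof of Theorem \ref{thm:tv-price-small-volvol} almost verbatim, the only simplification being that the normalizing factor $\frac1{\sqrt{w_T}}$ is now absent. Starting from the representation \eqref{eqn:vanilla-call}, the price is $K\,\Eof{C(\xi_T,\bar\rho^2 w_T)}$ with $\xi_T=X_0+\rho\int_0^T Y_t\,dB_t-\frac{\rho^2}2 w_T$ and $w_T=\int_0^T Y_t^2\,dt$. I would then insert the $\nu$-expansions of $w_T$ and $\xi_T$ recorded just before Theorem \ref{thm:tv-price-small-volvol}, using in particular that $w_T^{(0)}=Y_0^2T$ is deterministic while $\xi_T^{(0)}=X_0+\rho Y_0 B_T-\frac{\rho^2}2 Y_0^2 T$ is an affine function of the single Gaussian variable $B_T$.

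First I would Taylor-expand $C$ about $\bigl(\xi_T^{(0)},\bar\rho^2 w_T^{(0)}\bigr)$ to first order, obtaining $C(\xi_T,\bar\rho^2 w_T)=C+\nu\bigl(C_x\,\xi_T^{(1)}+\bar\rho^2 C_w\,w_T^{(1)}\bigr)+O(\nu^2)$, where $C$ and its derivatives are evaluated at $\bigl(\xi_T^{(0)},\bar\rho^2 w_T^{(0)}\bigr)$ and are therefore $\sigma(B_T)$-measurable. The zeroth-order term is identical to the one computed in the proof of Corollary \ref{cor:tvoComputationFormula}: the same Gaussian expectation identities give $\Eof{C\bigl(\xi_T^{(0)},\bar\rho^2 w_T^{(0)}\bigr)}=C(X_0,Y_0^2T)$, which is independent of $\rho$ and $\nu$ and produces the leading term $K\,C(X_0,Y_0^2T)$.

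The heart of the argument is the first-order term $\nu K\,\Eof{C_x\,\xi_T^{(1)}+\bar\rho^2 C_w\,w_T^{(1)}}$, with $\xi_T^{(1)}=\rho Y_0\int_0^T B_t^H\,dB_t-\frac{\rho^2}2 w_T^{(1)}$ and $w_T^{(1)}=2Y_0^2\int_0^T B_t^H\,dt$. Since $C_x$ and $C_w$ depend on the path only through $B_T$, I would condition on $B_T$ and reduce the whole expression to the two conditional expectations $\Eof{\int_0^T B_t^H\,dt\mid B_T}$ and $\Eof{\int_0^T B_t^H\,dB_t\mid B_T}$. The first is a first-chaos projection: by joint Gaussianity it equals $\frac1T\bigl(\int_0^T\Eof{B_t^H B_T}\,dt\bigr)B_T$, and evaluating the mixed covariance $\Eof{B_t^H B_T}=\int_0^t K(t,s)\,ds=\kappa_H\,t^{H+\frac12}$ (the beta/hypergeometric integral of the Molchan--Golosov kernel \eqref{eqn:molchon-golosov}, as computed earlier in Section \ref{sec:tvo-fsabr}) gives a multiple of $T^{H+\frac12}B_T$. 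The second is an It\^o integral, hence a second-chaos element, whose projection onto $\sigma(B_T)$ is a multiple of $B_T^2-T$; I would fix the constant through $\Eof{\bigl(\int_0^T B_t^H\,dB_t\bigr)(B_T^2-T)}=2\int_0^T\Eof{B_t^H B_t}\,dt$ (using $B_T^2-T=2\int_0^T B_s\,dB_s$ and the It\^o isometry) divided by $\Eof{(B_T^2-T)^2}=2T^2$. Substituting both evaluations and collecting the $C_w B_T$, $C_x B_T$ and $C_x B_T^2$ contributions reproduces the common prefactor $\frac{2c_H}{2H+3}T^{H+\frac12}$ and the polynomial payoff $2Y_0^2 C_w B_T+\rho Y_0 C_x\bigl(B_T^2-T-\rho Y_0 B_T\bigr)$ displayed in the statement.

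I expect the main obstacle to be the second conditional expectation $\Eof{\int_0^T B_t^H\,dB_t\mid B_T}$: unlike the Gaussian time-integral term it lives in the second Wiener chaos, so one must justify the projection onto $\sigma(B_T)$ via its Hermite decomposition and evaluate the mixed fBm--Brownian covariance $\int_0^t K(t,s)\,ds$ in closed form. The remaining technical point is to verify that the truncated higher-order terms are genuinely $O(\nu^2)$; as in the TVO case this rests on the $L^p$-integrability of $Y$ and of $1/w_T$ and on the kernel growth bound of hypothesis (H1), all of which were checked for the lognormal fSABR model in the Remark following \eqref{eqn:fSABR-y}.
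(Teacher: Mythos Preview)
Your proposal is correct and follows essentially the same route as the paper: Taylor-expand $C(\xi_T,\bar\rho^2 w_T)$ to first order in $\nu$, use the $\sigma(B_T)$-measurability of $C$ and its derivatives to reduce to the two conditional expectations $\Eof{\int_0^T B_t^H\,dt\mid B_T}$ and $\Eof{\int_0^T B_t^H\,dB_t\mid B_T}$, and evaluate these by Wiener-chaos projection. The only cosmetic difference is that the paper packages both conditional expectations into the general Hermite-polynomial identity of Lemma~\ref{lma:cond-exps} (valid for all $k$), whereas you compute the $k=1$ case directly via the covariance $\Eof{B_t^H B_t}=\int_0^t K(t,s)\,ds$ and the It\^o isometry; the two computations are equivalent.
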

\begin{proof}
Temporarily ignoring the constant $K$, \eqref{eqn:vanilla-call} can be written in terms of $\xi_T$ and $w_T$ as 
\beaa
&& \Eof{C\left( \xi_T, \bar\rho^2 w_T \right)} 
= \E\left[C\left(\sum_{k=0}^\infty \frac{\nu^k}{k!} \xi_T^{(k)}, \bar\rho^2 \sum_{k=0}^\infty \frac{\nu^k}{k!} w_T^{(k)} \right)\right].
\eeaa
Thus, up to first order we have
\beaa
&& \Eof{C\left(\xi_T, \bar\rho^2 w_T \right)} 
= \Eof{C} + \nu \Eof{C_x \xi_T^{(1)} + \bar\rho^2 C_w w_T^{(1)}} + O(\nu^2),
\eeaa
where $C$ and all its partial derivatives on the right hand side of the last equation are evaluated at $\left(\xi_T^{(0)}, \bar\rho^2 w_T^{(0)} \right)$. The calculations of the expectations in the last expression are the same as the ones for Theorem \ref{thm:tv-price-small-volvol}, see Section \ref{sec:proof-of-cor} in the appendix for the details. We remark that theoretically it is possible to push for higher order terms in the small volatility of volatility expansion, however, the calculation of expectations becomes more and more involved for higher order terms. 
\end{proof}

\section{Numerical implementation} \label{sec:numerics}
We present in this section several simulation results, showing the applicability of our pricing formulae for Target Volatility call Options under the fSABR model. First of all, note that we classify model's parameters into three categories: (i) contract--specific parameters: $K, T, \bar\sigma$; (ii) model--specific parameters: $H,\nu,\rho$, and (iii) initialization parameters: $S_0, \sigma_0, n, N$. Here, $n\in\mathbb{N}$ is the number of sampling days per annum, $N\in\mathbb{N}$ is the number of Monte Carlo paths, and $\sigma_0=Y_0$ is the volatility at time $t=0$.  Moreover, recall that $X_0:=-\log\frac{K}{S_0}$.

We employ for verification purposes three pricing methods: Monte Carlo Simulations, Decomposition Formula Approximation \eqref{eqn:approx-tvc-fsabr-1-t0}, and Small Volatility of Volatility Expansion \eqref{eqn:tv-price-small-volvol}. Observe that both formulas \eqref{eqn:approx-tvc-fsabr-1-t0} and \eqref{eqn:tv-price-small-volvol} can be easily implemented numerically as they only require the use of special functions: the pdf $N'$ and cdf $N$ for a standard normal distribution, the Euler Gamma function $\Gamma$, the Gauss hypergeometric function ${ }_2F_1$, and the Beta function $\beta$. As we shall see in the sequel, our approximation formulas perform accurately for a wide range of parameters. 

Sample paths for fractional Brownian Motions $\{ B^H(t_k),\ k=1,2,\dots,n\}$ using the Molchan--Golosov kernel are simulated. Here, we consider a partition $\Pi:=\{0 = t_0<t_1<\cdots<t_n=T\}$ of the interval $[0,T]$. We employ the hybrid scheme for Brownian semistationary processes given in \cite{bennedsen2017hybrid}, which is based on discretizing the stochastic integral representation of the process in the time domain. Several test routines for fractional processes are also implemented: mean and variance as a function of time via Monte Carlo simulations, a chi--square test for fractional Gaussian noise, as well as the 2D correlation structure via sample paths. We notice that the sample paths have the required properties, that are specific to fBMs.

%More recently, \cite{mccrickerd2017turbocharging} considered methods for turbocharging Monte Carlo pricing under the so--called rough Bergomi model. 

%Several test routines for fractional processes have been implemented: mean and variance as a function of time via Monte Carlo simulations, a chi--square test for fractional Gaussian noise \cite{dieker2004simulation}, as well as the 2D correlation structure via sample paths: 
%\bea
%\mu(t_k) &=& \frac{1}{N} \sum_{j=1}^N B^H_j (t_k),\\
%v(t_k) &=& \frac{1}{N-1} \sum_{j=1}^N \Big(B^H_j (t_k) - \mu(t_k)\Big)^2,
%\eea
%with $k = 1,2,\dots,n$. The results are presented in Figure \ref{fig:evar} and Figure \ref{fig:covFBM}. It is easy to notice that the sample paths have the required properties, that are specific to fBMs.   

\subsection{Formula accuracy}
To test the accuracy of  Decomposition Formula Approximation \eqref{eqn:approx-tvc-fsabr-1-t0}, and Small Volatility of Volatility Expansion \eqref{eqn:tv-price-small-volvol}, we produce sample paths for the lognormal fSABR price process and we use Monte Carlo techniques to calculate TVO prices. Let $\Pi$ be as before. Then one path of the lognormal fSABR price process can be computed iteratively:
\[
S(t_{k+1}) = S(t_k)+ S(t_k)\sigma(t_k)\varepsilon(t_k)\sqrt{t_{k+1} - t_{k}},\quad k=0,1,\dots,n-1,
\]
where $\sigma(t) = \sigma_0e^{\nu B^H(t)}$ is the so--called fractional stochastic volatility and the random samples $\varepsilon_{t_i} = \rho\xi^1_{t_i} + \bar\rho\xi^2_{t_i},\ \xi^{1,2}_{t_i}\sim\mathcal{N}(0,1),\ i=0,1,\dots,n-1$ have a standard normal distribution. 

Numerical results for $\frac{K}{S_0} \in [0.8,1.2]$, different maturities, and a wide range of parameters are presented in Tables \ref{tab:t1} to \ref{tab:t6} and Figures \ref{fig:f1} to \ref{fig:f5}. The relative error between the Monte Carlo prices and the pricing formulas \eqref{eqn:approx-tvc-fsabr-1-t0}, \eqref{eqn:tv-price-small-volvol} is also calculated for illustrative purposes. We identify several advantages and disadvantages for using one formula or the other:
\begin{enumerate}[(i)]
\item Formula \eqref{eqn:tv-price-small-volvol} is the most computationally efficient method (3 times faster on average than its counterpart). It can be noticed that it works well for small volatility of volatility, namely $0<\nu<15\%$, larger times to maturity, and $\frac{K}{S_0}$ close to 1. Also note from the tables that \eqref{eqn:tv-price-small-volvol} is more accurate for small values of $\nu$ than its counterpart \eqref{eqn:approx-tvc-fsabr-1-t0}.
\item Formula \eqref{eqn:approx-tvc-fsabr-1-t0} is a better approximation for options that are far ITM or OTM. From our tests, it works well and is robust with almost any parameters.  
\end{enumerate}
Thus, we recommend using \eqref{eqn:tv-price-small-volvol} for computational efficiency, when $\nu$ is small and $T>0.5$ years. For all other purposes, \eqref{eqn:approx-tvc-fsabr-1-t0} is the better choice. Moreover, although we only show results for the rough regime, i.e. $H\in(0,0.5)$, we note that both formulas work well for all possible values of the Hurst exponent $H\in(0,1)$. Clearly, our formulas are less accurate when $H$ is close to 0 or 1.

\subsection{Sensitivity to parameters}
In order to stress test our formulas, we compute the TVO price At-The-Money via \eqref{eqn:approx-tvc-fsabr-1-t0}, \eqref{eqn:tv-price-small-volvol}, and Monte Carlo simulations for a broad range of parameters $(H,\nu,\rho)$. Namely, we consider $H\in(0,0.5),\ \nu\in(0,0.6),$ and $\rho\in(-1,1)$. The results are presented in Figures \ref{fig:f7} to \ref{fig:f12}. Firstly, we plot the TVO price as a function of 2 parameters while assuming the 3rd being fixed. Secondly, we compute and plot the relative error between our formulas and prices via Monte Carlo trials. Note that the relative error is small, and that the price surfaces are fairly smooth. We emphasize one more time that the approximation formulas turns out to be highly accurate and robust to parameter variations.

\begin{table}
{\scriptsize
\centering
\pgfplotstabletypeset[%
   fixed zerofill,
   precision=4,
   col sep=space,
   %dec sep align,
   every head row/.style={before row=\hline,after row=\hline},
   every last row/.style={after row=\hline},
   columns/0/.style ={column name= $K/S_0$},
   columns/1/.style ={column name=MC simulation},
   columns/2/.style ={column name=DFA \eqref{eqn:approx-tvc-fsabr-1-t0}},
   columns/3/.style ={column name=SVVE \eqref{eqn:tv-price-small-volvol}},
   columns/4/.style ={column name=DFA rel.err. (\%)},
   columns/5/.style ={column name=SVVE rel.err. (\%)},
   columns/.style={column type={c}},
]{tableTex_1r.csv}}
\caption{\footnotesize$T = 1$ year, $\bar\sigma = \sigma_0 = 0.3, H = 0.1, \nu = 0.05, \rho = -0.7, n=252, N=50,000$.}
\label{tab:t1}
\end{table}

%%%%%%%%%%%%%%%%%%%%%%
\begin{table}
{\scriptsize
\centering
\pgfplotstabletypeset[%
   fixed zerofill,
   precision=4,
   col sep=space,
   %dec sep align,
   every head row/.style={before row=\hline,after row=\hline},
   every last row/.style={after row=\hline},
   columns/0/.style ={column name= $K/S_0$},
   columns/1/.style ={column name=MC simulation},
   columns/2/.style ={column name=DFA \eqref{eqn:approx-tvc-fsabr-1-t0}},
   columns/3/.style ={column name=SVVE \eqref{eqn:tv-price-small-volvol}},
   columns/4/.style ={column name=DFA rel.err. (\%)},
   columns/5/.style ={column name=SVVE rel.err. (\%)},
   columns/.style={column type={c}},
]{tableTex_2r.csv}}
\caption{\footnotesize$T = 0.5$ years, $\bar\sigma = 0.3, \sigma_0 = 0.2, H = 0.2, \nu = 0.1, \rho = 0.5, n=126, N=50,000$.}
\label{tab:t2}
\end{table}

%%%%%%%%%%%%%%%%%%%%%%
\begin{table}
{\scriptsize
\centering
\pgfplotstabletypeset[%
   fixed zerofill,
   precision=4,
   col sep=space,
   %dec sep align,
   every head row/.style={before row=\hline,after row=\hline},
   every last row/.style={after row=\hline},
   columns/0/.style ={column name= $K/S_0$},
   columns/1/.style ={column name=MC simulation},
   columns/2/.style ={column name=DFA \eqref{eqn:approx-tvc-fsabr-1-t0}},
   columns/3/.style ={column name=SVVE \eqref{eqn:tv-price-small-volvol}},
   columns/4/.style ={column name=DFA rel.err. (\%)},
   columns/5/.style ={column name=SVVE rel.err. (\%)},
   columns/.style={column type={c}},
]{tableTex_3r.csv}}
\caption{\footnotesize$T = 0.25$ years, $\bar\sigma = 0.1, \sigma_0 = 0.2, H = 0.2, \nu = 0.01, \rho = -0.1, n=1000, N=50,000$.}
\label{tab:t3}
\end{table}

%%%%%%%%%%%%%%%%%%%%%%
\begin{table}
{\scriptsize
\centering
\pgfplotstabletypeset[%
   fixed zerofill,
   precision=4,
   col sep=space,
   %dec sep align,
   every head row/.style={before row=\hline,after row=\hline},
   every last row/.style={after row=\hline},
   columns/0/.style ={column name= $K/S_0$},
   columns/1/.style ={column name=MC simulation},
   columns/2/.style ={column name=DFA \eqref{eqn:approx-tvc-fsabr-1-t0}},
   columns/3/.style ={column name=SVVE \eqref{eqn:tv-price-small-volvol}},
   columns/4/.style ={column name=DFA rel.err. (\%)},
   columns/5/.style ={column name=SVVE rel.err. (\%)},
   columns/.style={column type={c}},
]{tableTex_4r.csv}}
\caption{\footnotesize$T = 0.33$ years, $\bar\sigma = \sigma_0 = 0.1, H = 0.2, \nu = 0.3, \rho = 0.8, n=1000, N=50,000$.}
\label{tab:t4}
\end{table}

%%%%%%%%%%%%%%%%%%%%%%
\begin{table}
{\scriptsize
\centering
\pgfplotstabletypeset[%
   fixed zerofill,
   precision=4,
   col sep=space,
   %dec sep align,
   every head row/.style={before row=\hline,after row=\hline},
   every last row/.style={after row=\hline},
   columns/0/.style ={column name= $K/S_0$},
   columns/1/.style ={column name=MC simulation},
   columns/2/.style ={column name=DFA \eqref{eqn:approx-tvc-fsabr-1-t0}},
   columns/3/.style ={column name=SVVE \eqref{eqn:tv-price-small-volvol}},
   columns/4/.style ={column name=DFA rel.err. (\%)},
   columns/5/.style ={column name=SVVE rel.err. (\%)},
   columns/.style={column type={c}},
]{tableTex_5r.csv}}
\caption{\footnotesize$T = 0.5$ years, $\bar\sigma = 0.3, \sigma_0 = 0.1, H = 0.3, \nu = 0.1, \rho = -0.7, n=1000, N=50,000$.}
\label{tab:t5}
\end{table}

%%%%%%%%%%%%%%%%%%%%%%
\begin{table}
{\scriptsize
\centering
\pgfplotstabletypeset[%
   fixed zerofill,
   precision=4,
   col sep=space,
   %dec sep align,
   every head row/.style={before row=\hline,after row=\hline},
   every last row/.style={after row=\hline},
   columns/0/.style ={column name= $K/S_0$},
   columns/1/.style ={column name=MC simulation},
   columns/2/.style ={column name=DFA \eqref{eqn:approx-tvc-fsabr-1-t0}},
   columns/3/.style ={column name=SVVE \eqref{eqn:tv-price-small-volvol}},
   columns/4/.style ={column name=DFA rel.err. (\%)},
   columns/5/.style ={column name=SVVE rel.err. (\%)},
   columns/.style={column type={c}},
]{tableTex_6r.csv}}
\caption{\footnotesize$T = 1.5$ years, $\bar\sigma = \sigma_0 = 0.1, H = 0.2, \nu = 0.2, \rho = -0.5, n=1000, N=50,000$.}
\label{tab:t6}
\end{table}

%%%%%%%%%%%%%%%%%%%%%%Figures%%%%%%%%%%%%

\begin{figure}
\centering
\includegraphics[height=5cm]{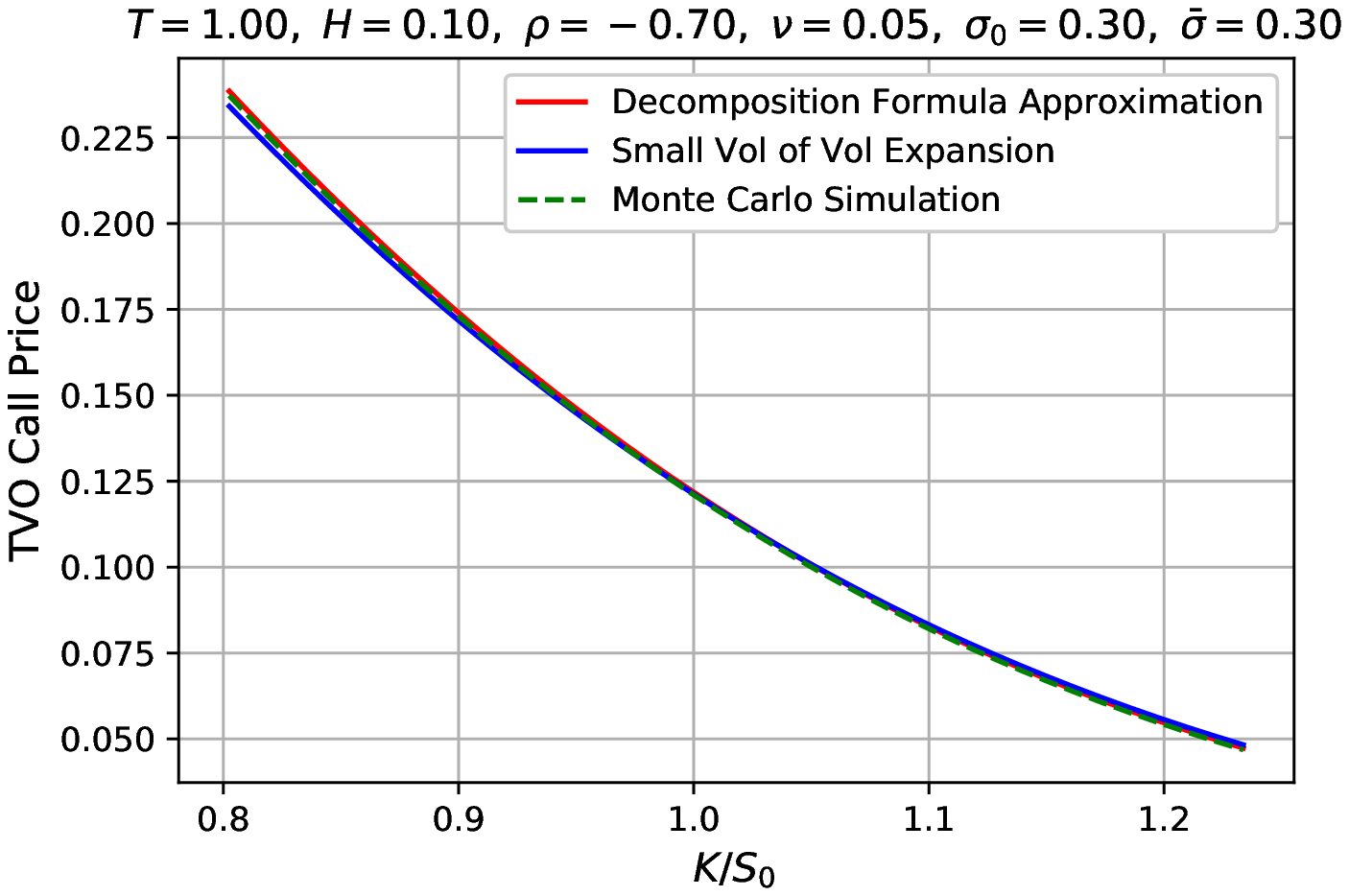} \quad 
\includegraphics[height=5cm]{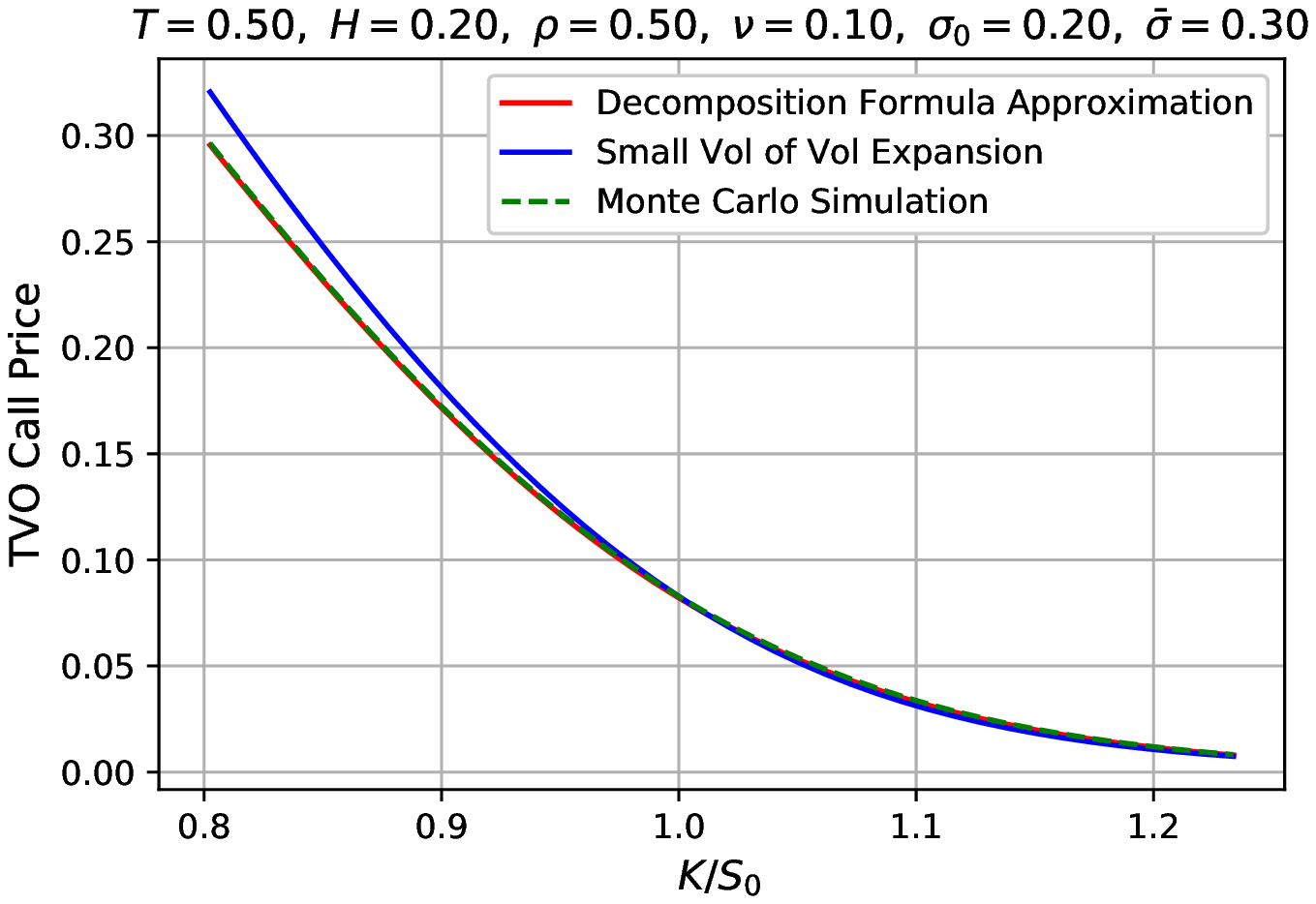}
\caption{TVO call price graphs for Table \ref{tab:t1} (left) and Table \ref{tab:t2} (right)}
\label{fig:f1}
\end{figure}

\begin{figure}
\centering
\includegraphics[height=5cm]{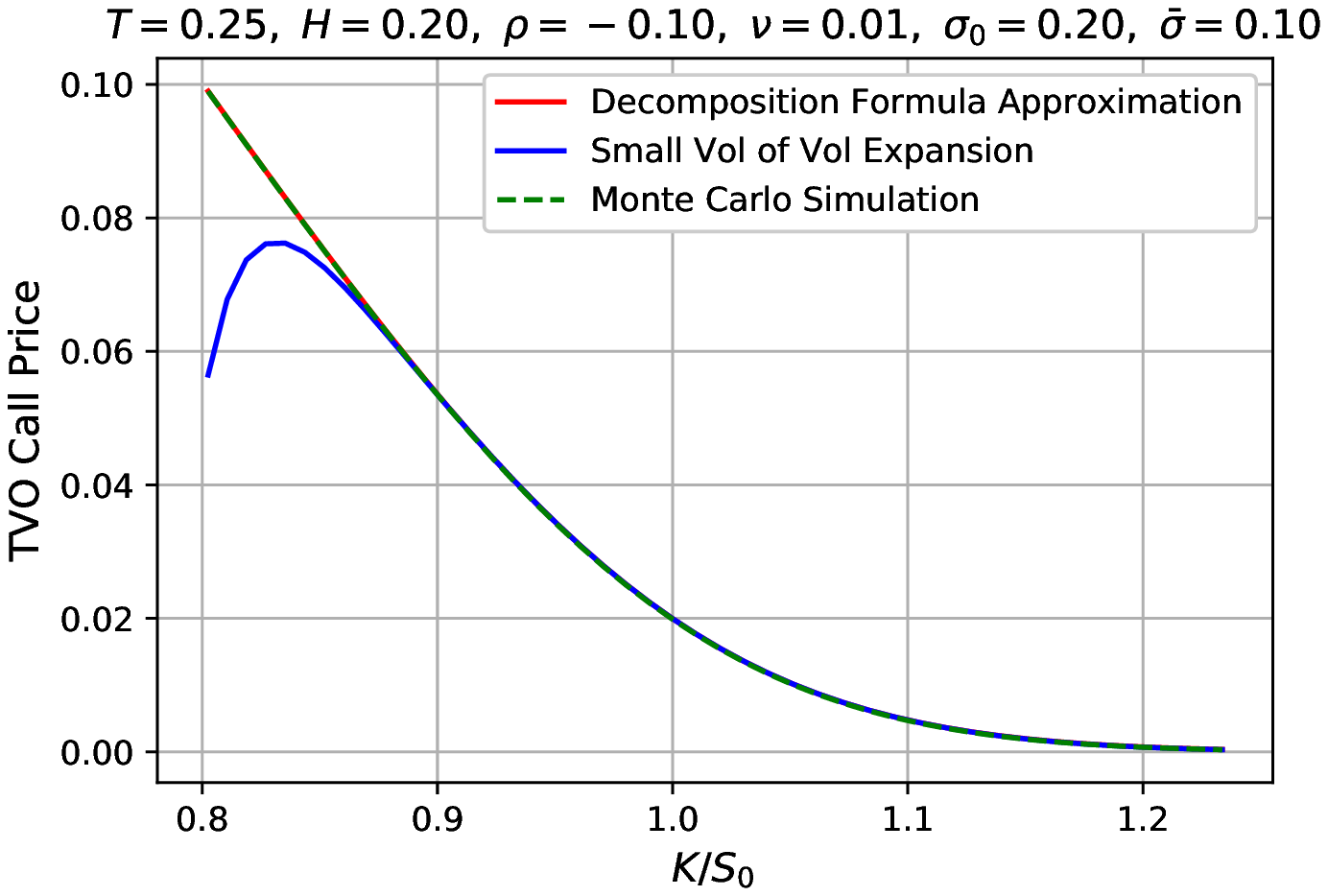} \quad
\includegraphics[height=5cm]{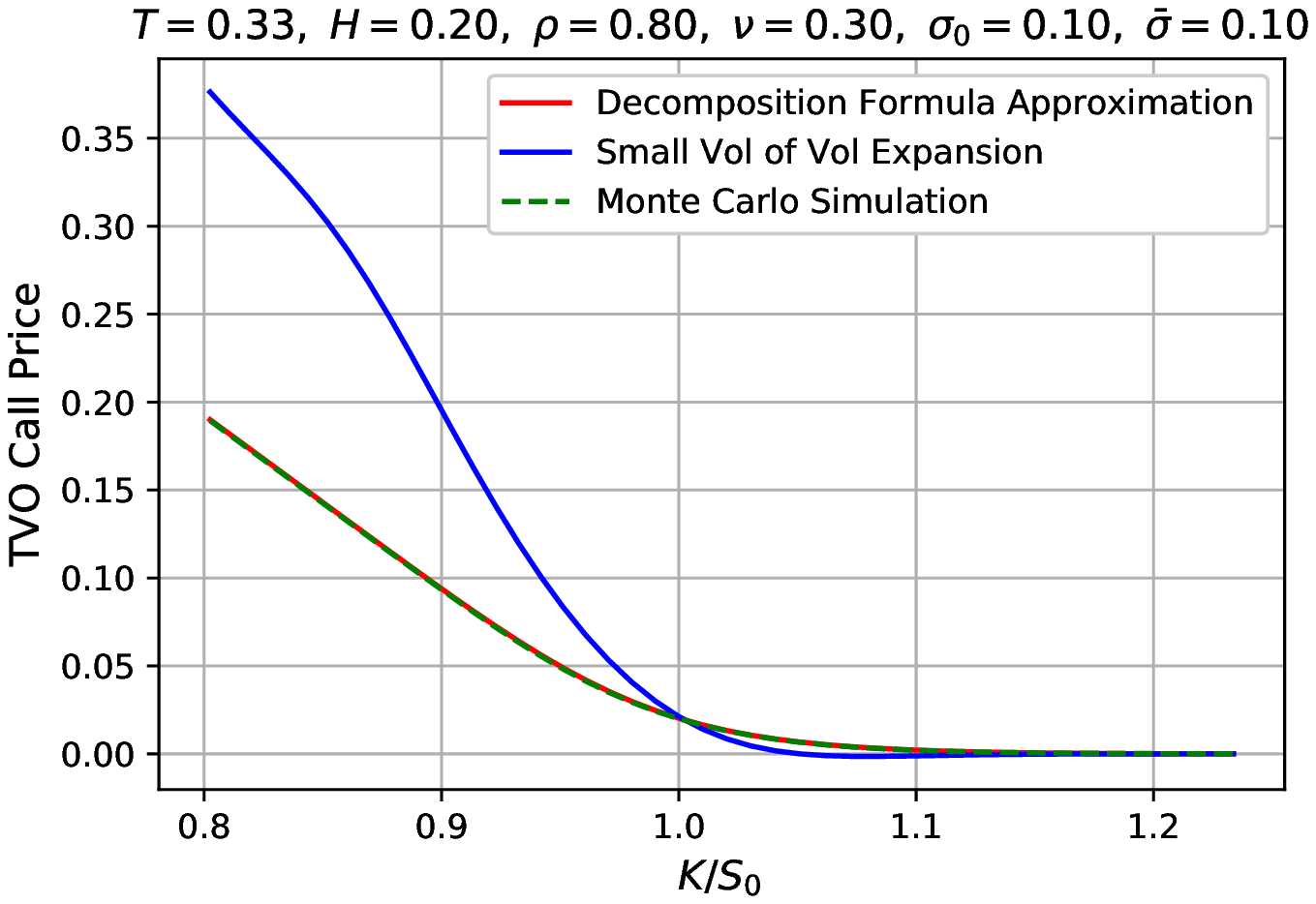}
\caption{TVO call price graphs for Table \ref{tab:t3} (left) and Table \ref{tab:t4} (right)}
\label{fig:f3}
\end{figure}

\begin{figure}
\centering
\includegraphics[height=5cm]{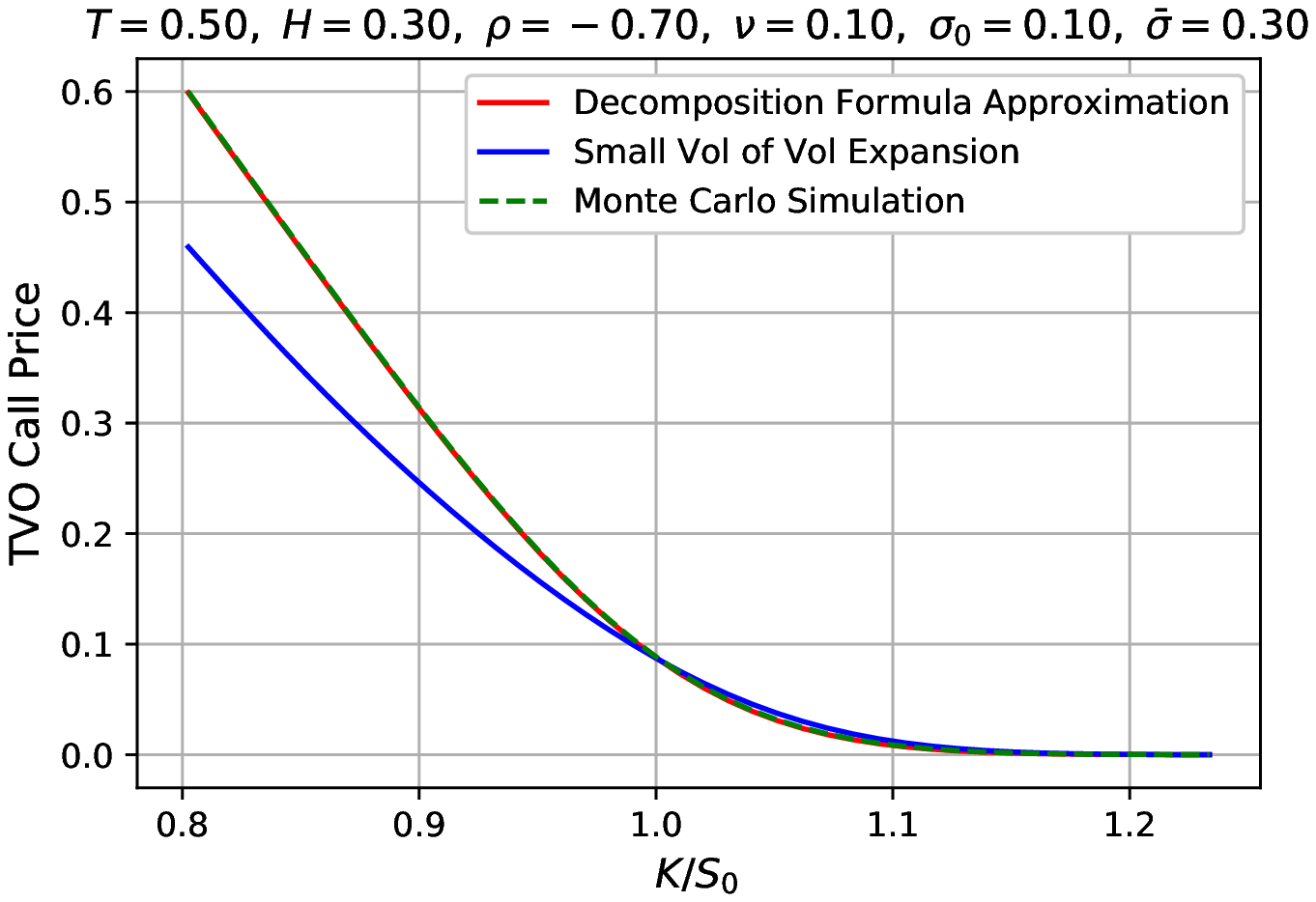} \quad 
\includegraphics[height=5cm]{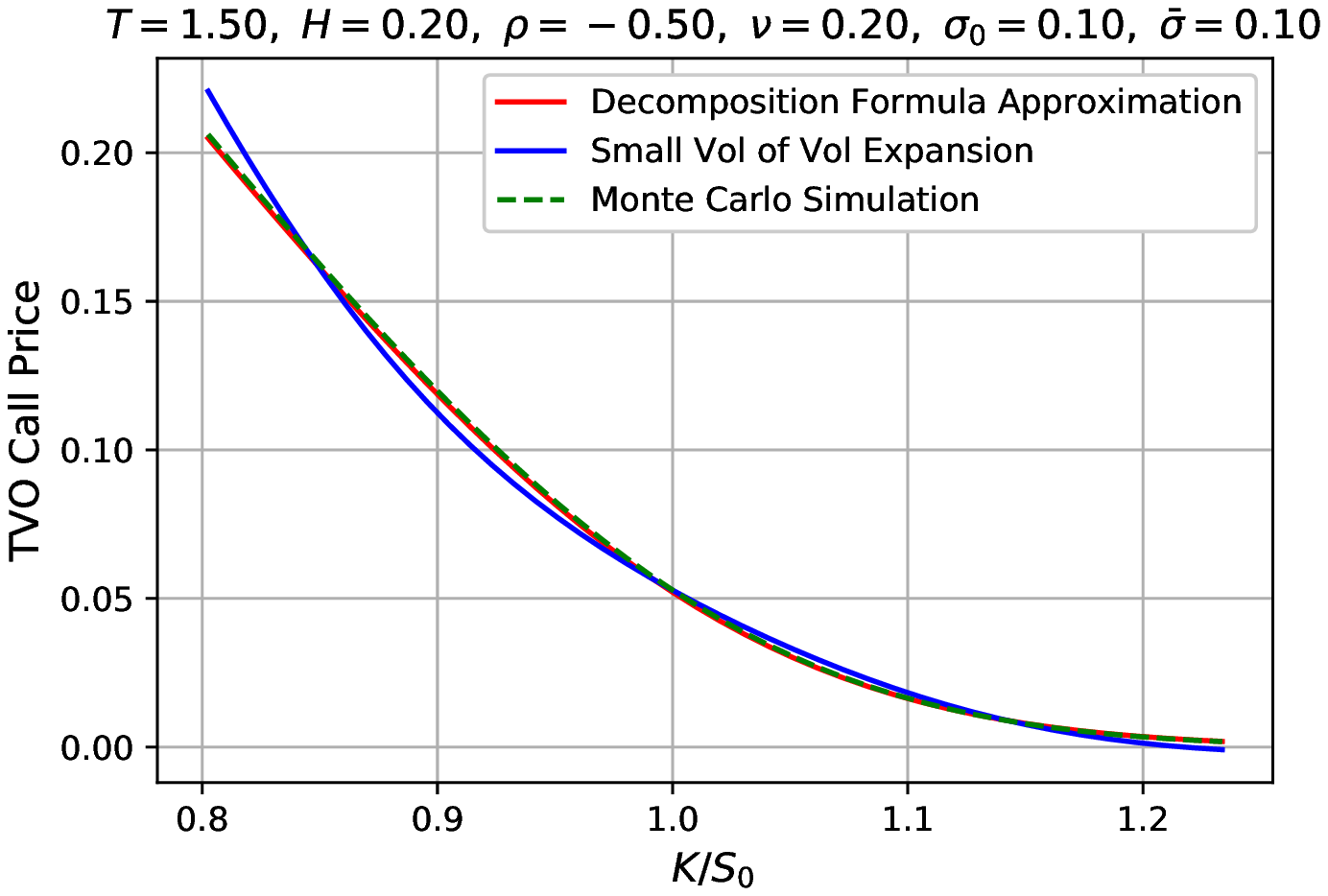}
\caption{TVO call price graphs for Table \ref{tab:t5} (left) and Table \ref{tab:t6} (right).}
\label{fig:f5}
\end{figure}

%%%%%%%%%%%%%%%%%SENSITIVITY%%%%%%%%%%%%%%%%%%%%%%%%%%%%%%%%%%%%%%%
\begin{figure}
\centering
\includegraphics[width=4.9cm]{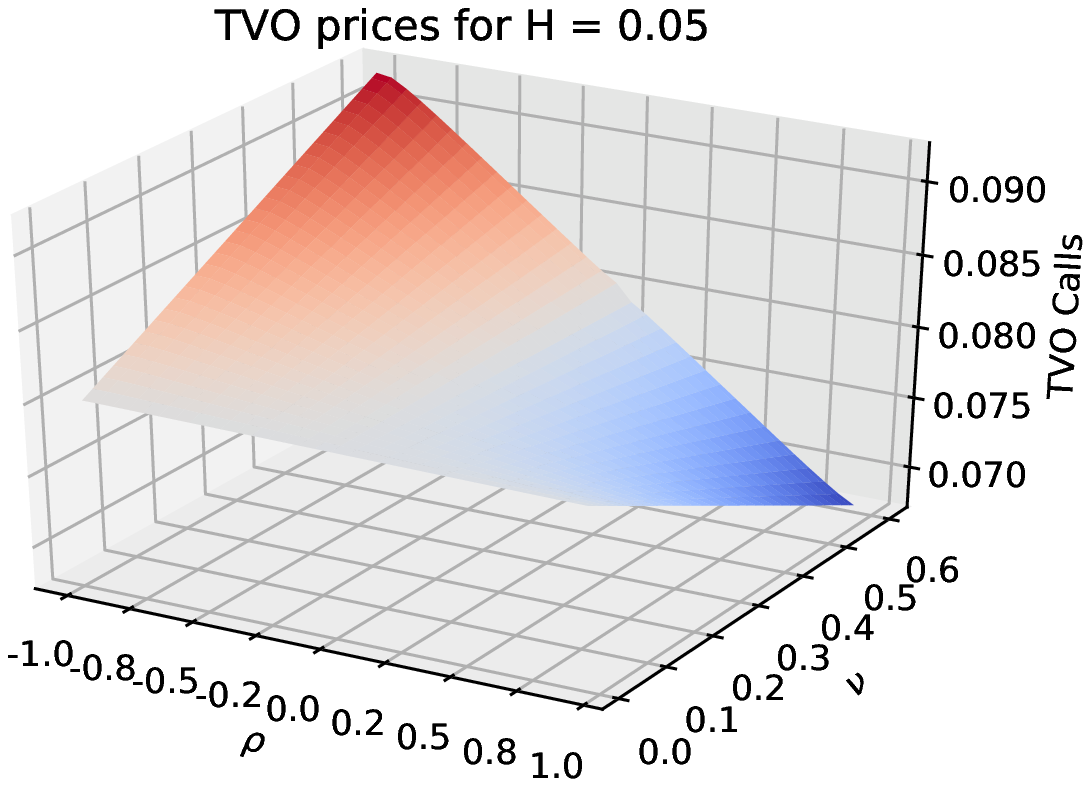}
\includegraphics[width=4.9cm]{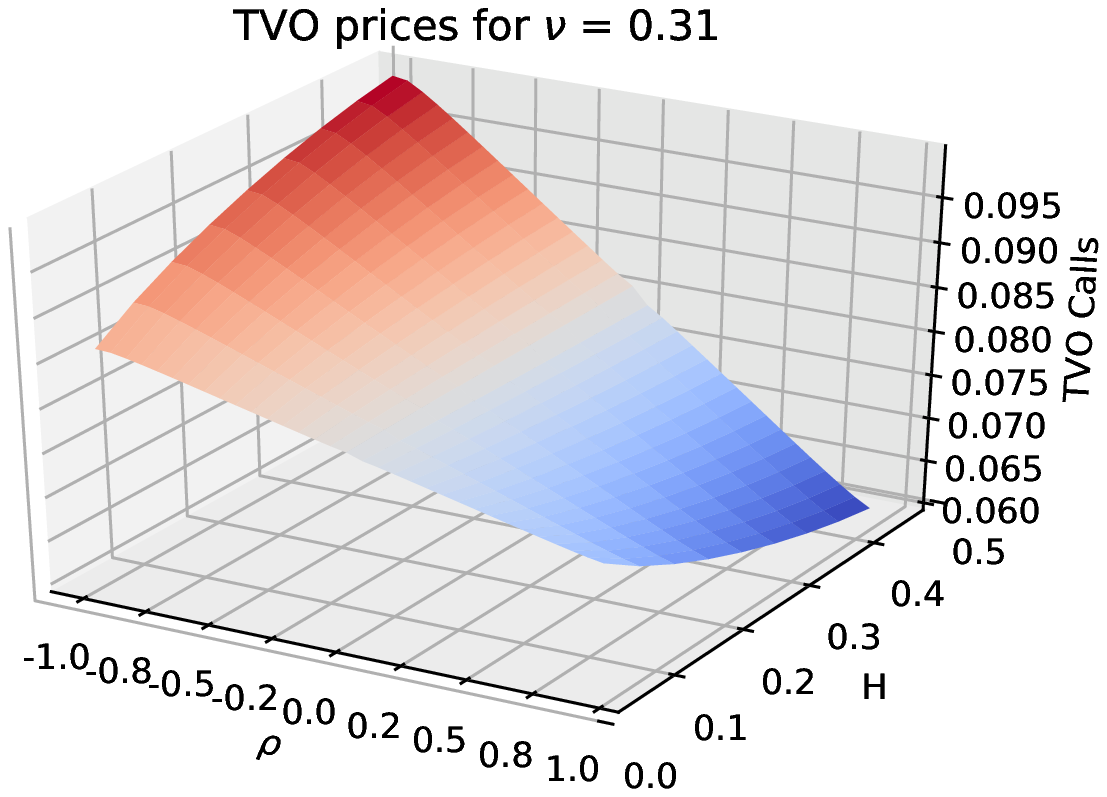}
\includegraphics[width=4.9cm]{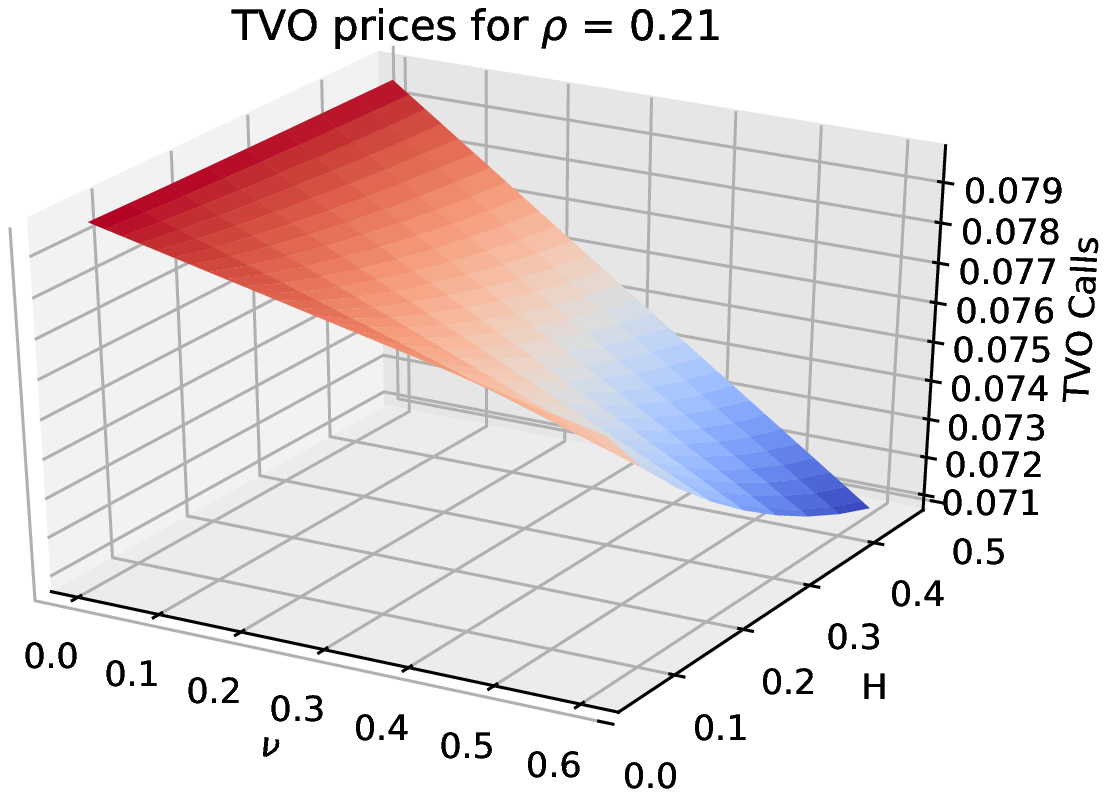}
\caption{Small Vol of Vol Expansion \eqref{eqn:tv-price-small-volvol} sensitivity to parameters. $\frac{K}{S_0}=1$, $T = 1, \sigma_0=0.1,\ \bar\sigma = 0.2$, $H\in(0,0.5),\ \nu\in(0,0.6),\ \rho\in(-1,1)$}
\label{fig:f7}
\end{figure}

\begin{figure}
\centering
\includegraphics[width=4.9cm]{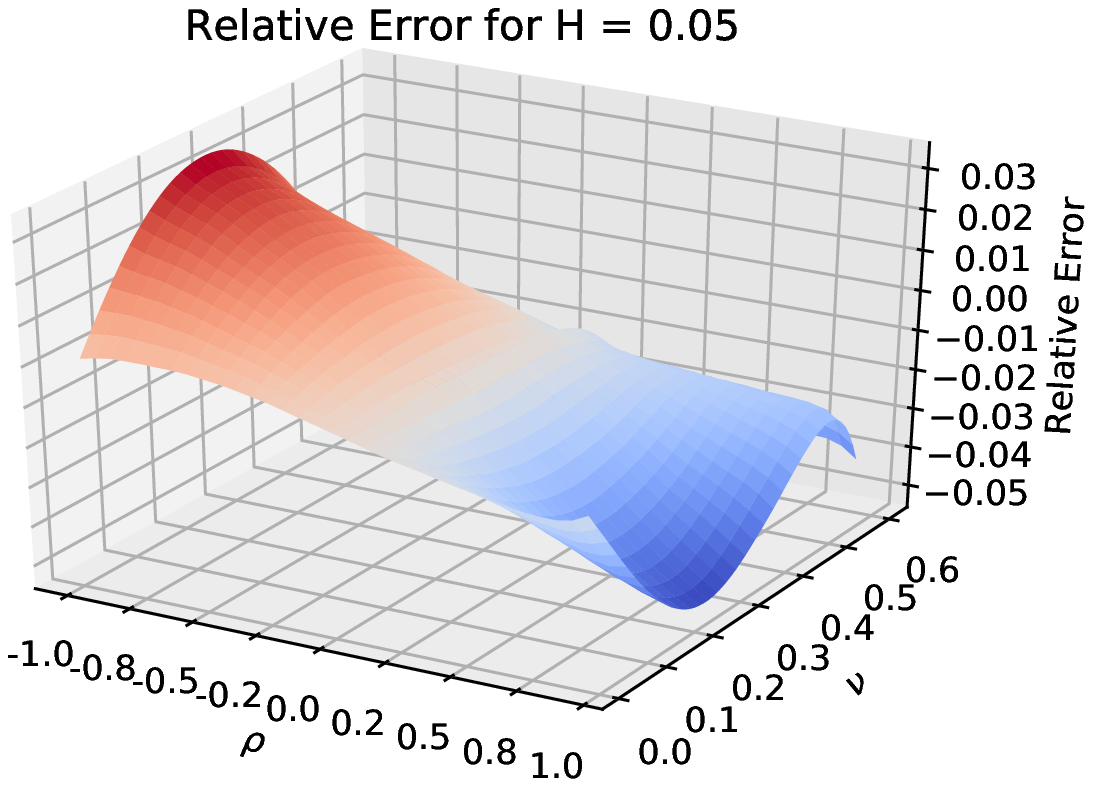}
\includegraphics[width=4.9cm]{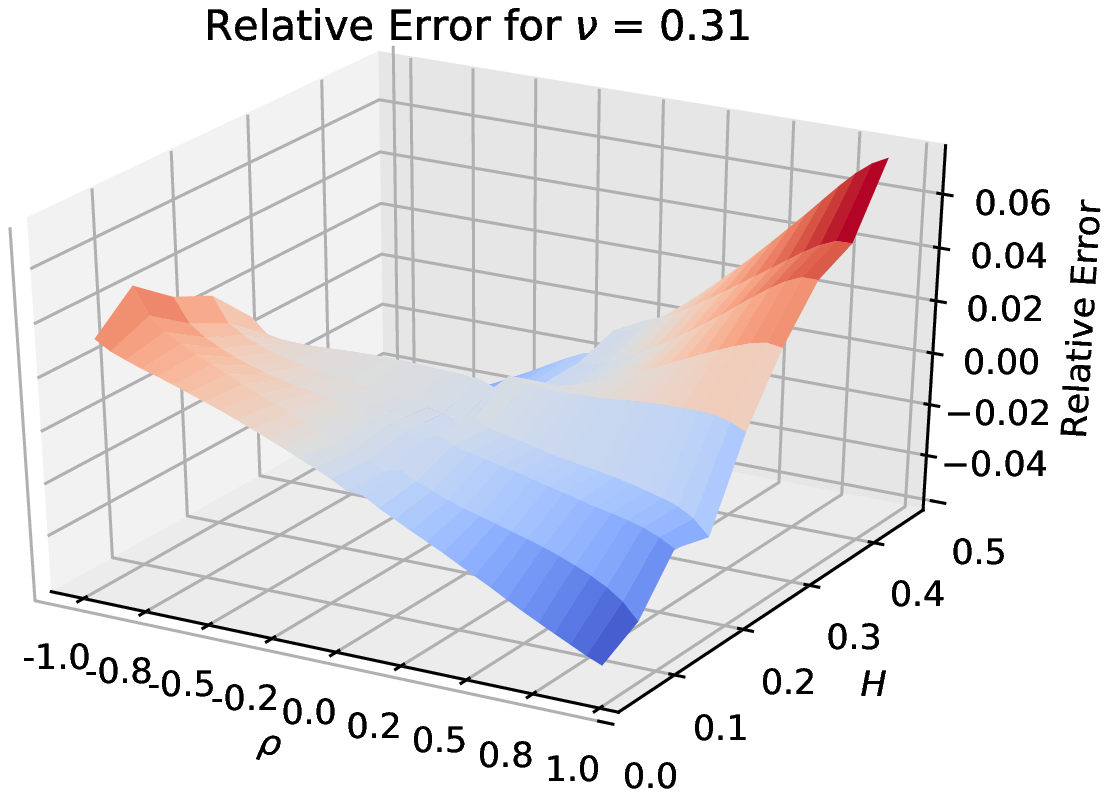}
\includegraphics[width=4.9cm]{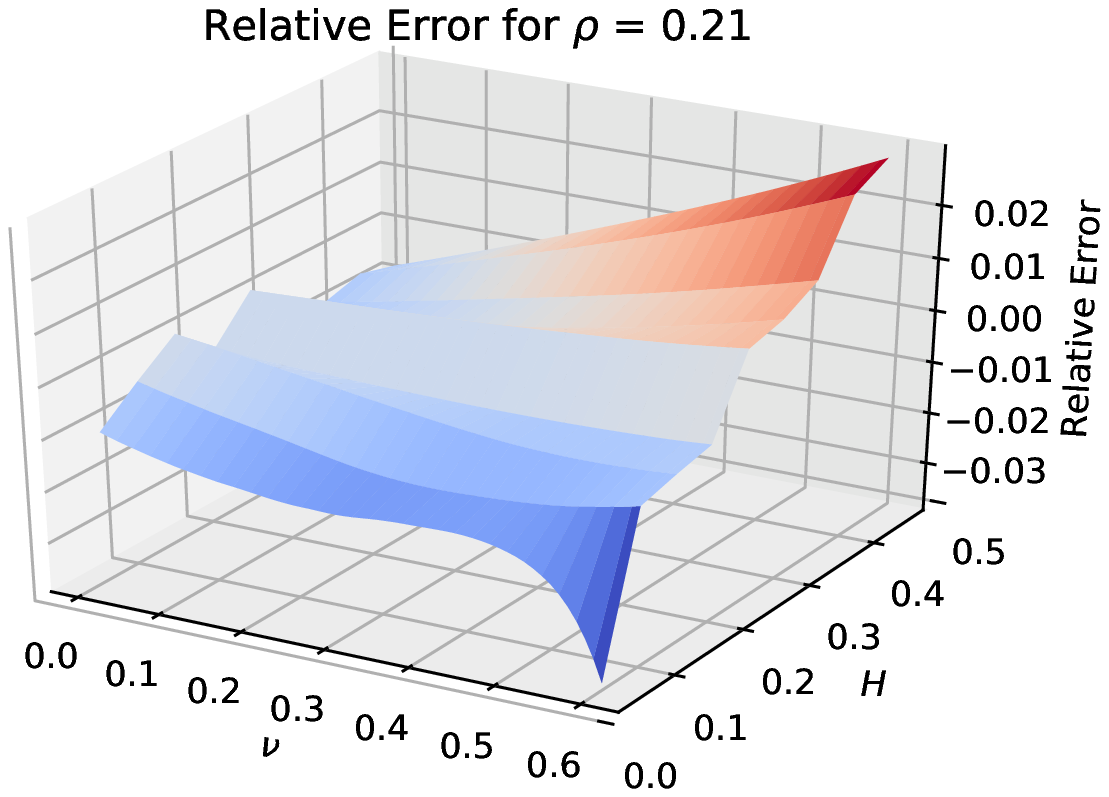}
\caption{Error surface between \eqref{eqn:tv-price-small-volvol} and MC. $\frac{K}{S_0}=1$, $T = 1, \sigma_0=0.1,\ \bar\sigma = 0.2$, $N=50,000$, $H\in(0,0.5),\ \nu\in(0,0.6),\ \rho\in(-1,1)$}
\label{fig:f8}
\end{figure}

%\begin{figure}
%\centering
%\includegraphics[scale = 0.85]{PriceFormula3D_fixH_2.eps}
%\includegraphics[scale = 0.85]{PriceFormula3D_fixNu_2.eps}
%\includegraphics[scale = 0.85]{PriceFormula3D_fixRho_2.eps}
%\caption{Small Vol of Vol Expansion \eqref{eqn:tv-price-small-volvol} sensitivity to parameters. We considered $S_0=K=1$, $T = 1$ year, $\sigma_0=0.1,\ \bar\sigma = 0.2$. The parameters are $H\in(0,0.5),\ \nu\in(0,0.6),\ \rho\in(-1,1)$}
%\label{fig:f9}
%\end{figure}

%\begin{figure}
%\centering
%\includegraphics[width=4.9cm]{errFormula3D_fixH_2.eps}
%\includegraphics[width=4.9cm]{errFormula3D_fixNu_2.eps}
%\includegraphics[width=4.9cm]{errFormula3D_fixRho_2.eps}
%\caption{Error surface between \eqref{eqn:tv-price-small-volvol} and MC price. $S_0=K=1$, $T = 1$ year, $\sigma_0=0.1,\ \bar\sigma = 0.2$, $N=50,000$. The parameters are $H\in(0,0.5),\ \nu\in(0,0.6),\ \rho\in(-1,1)$}
%\label{fig:f10}
%\end{figure}

%%%%%%%%%%%%%%%%%SENSITIVITY for DAF%%%%%%%%%%%%%%%%%%%%%%%%%%%%%%%%%%%%%%%
%\clearpage
\begin{figure}
\centering
\includegraphics[width=4.9cm]{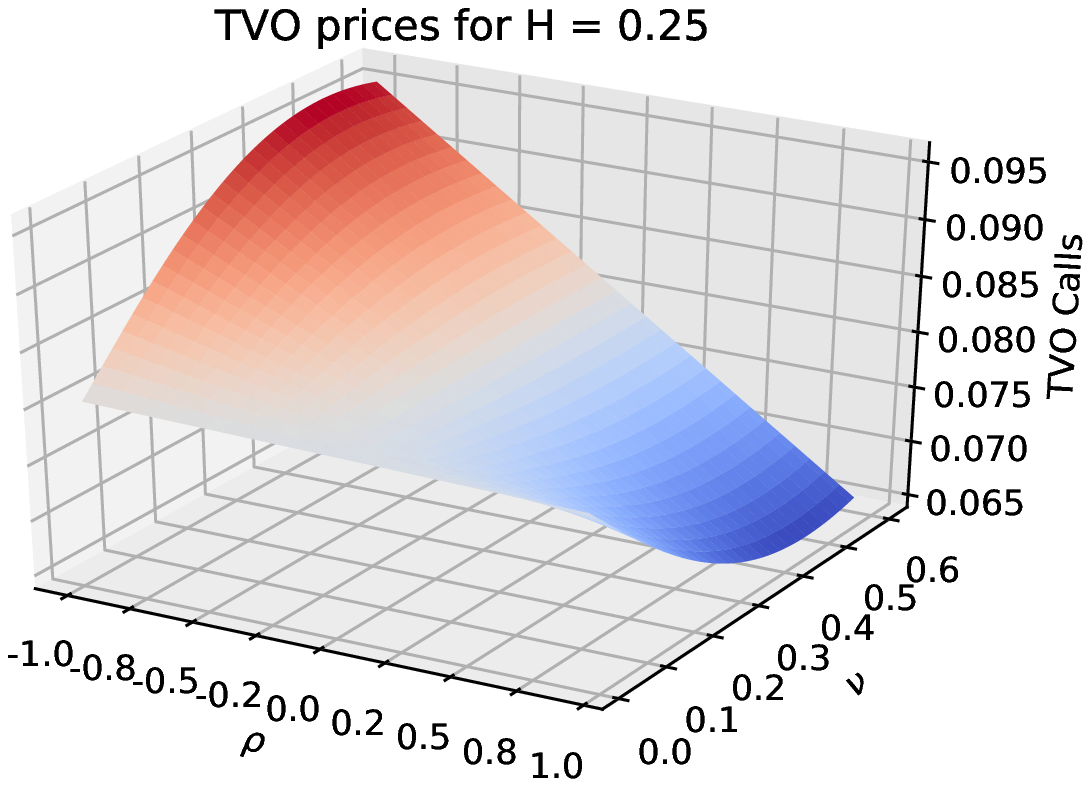}
\includegraphics[width=4.9cm]{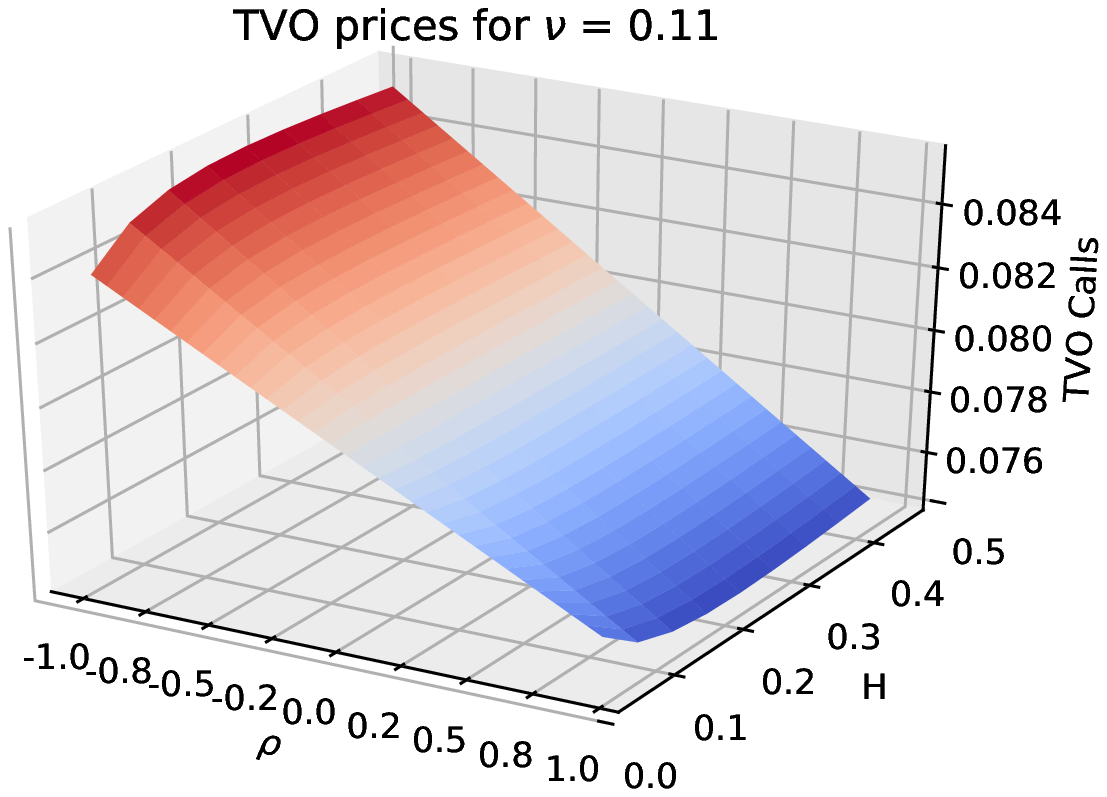}
\includegraphics[width=4.9cm]{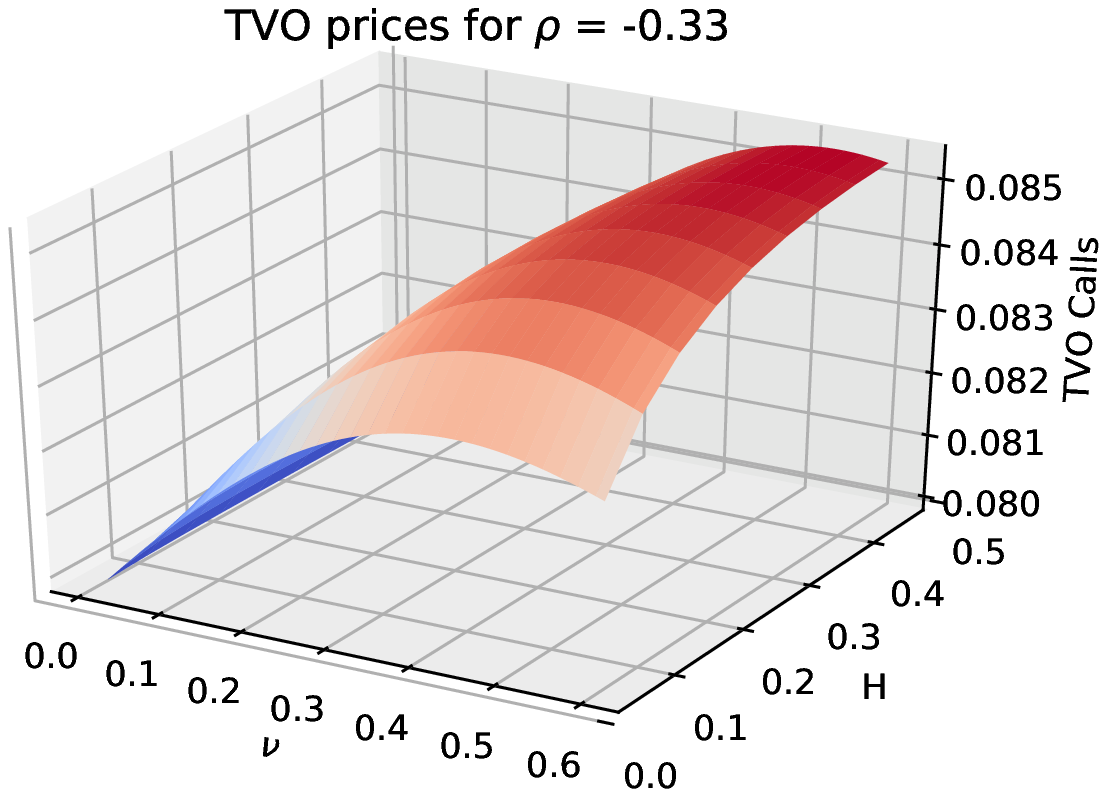}
\caption{Decomposition Formula Approximation \eqref{eqn:approx-tvc-fsabr-1-t0} sensitivity to parameters. $\frac{K}{S_0}=1, T = 1, \sigma_0=0.1,\bar\sigma = 0.2, H\in(0,0.5),\ \nu\in(0,0.6),\ \rho\in(-1,1)$}
\label{fig:f11}
\end{figure}

\begin{figure}
\centering
\includegraphics[width=4.9cm]{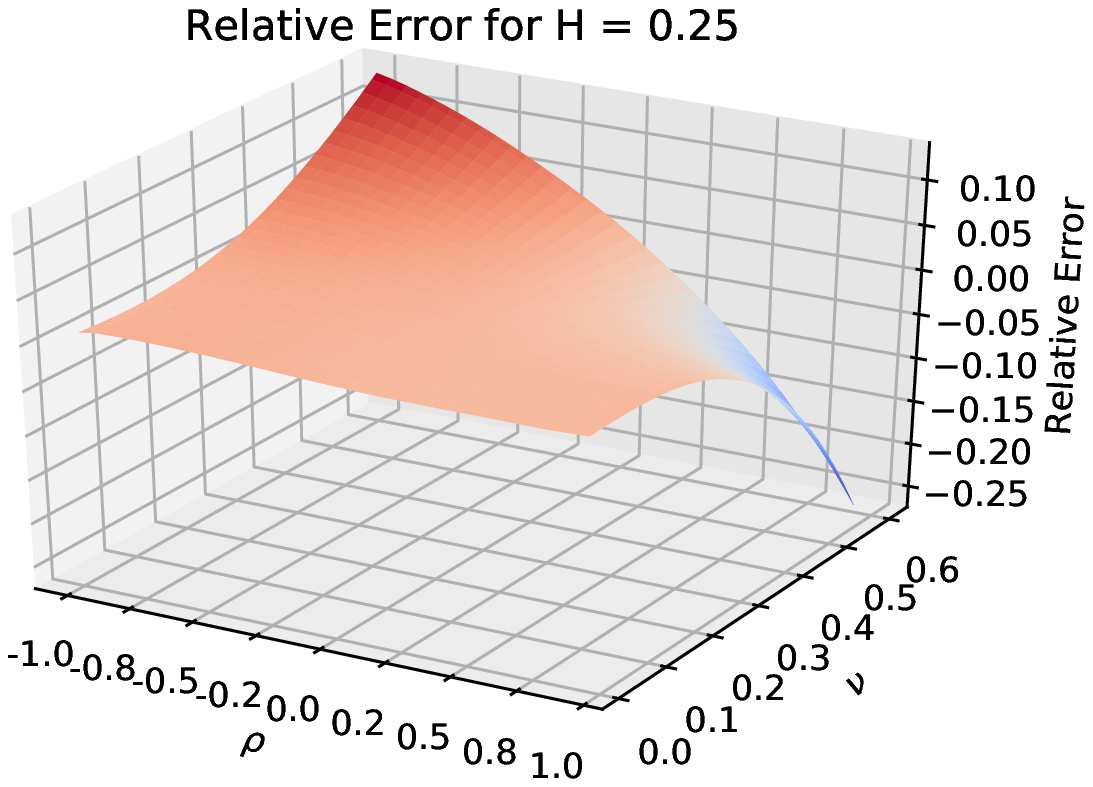}
\includegraphics[width=4.9cm]{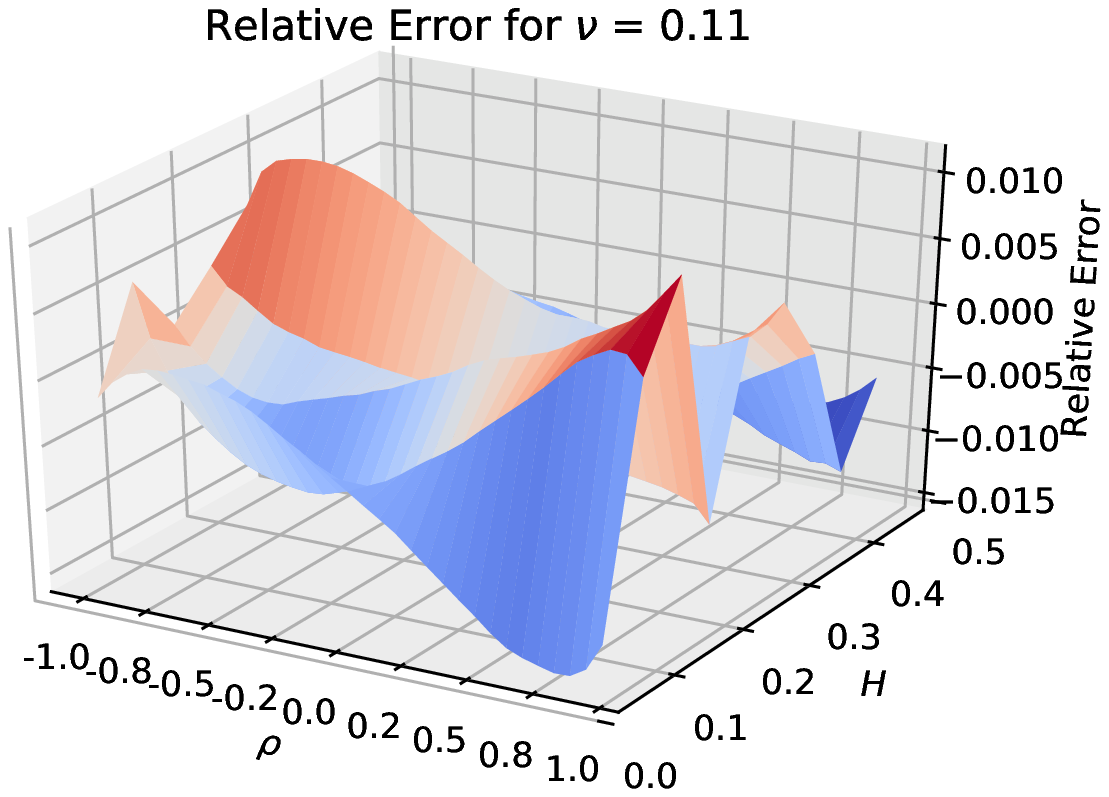}
\includegraphics[width=4.9cm]{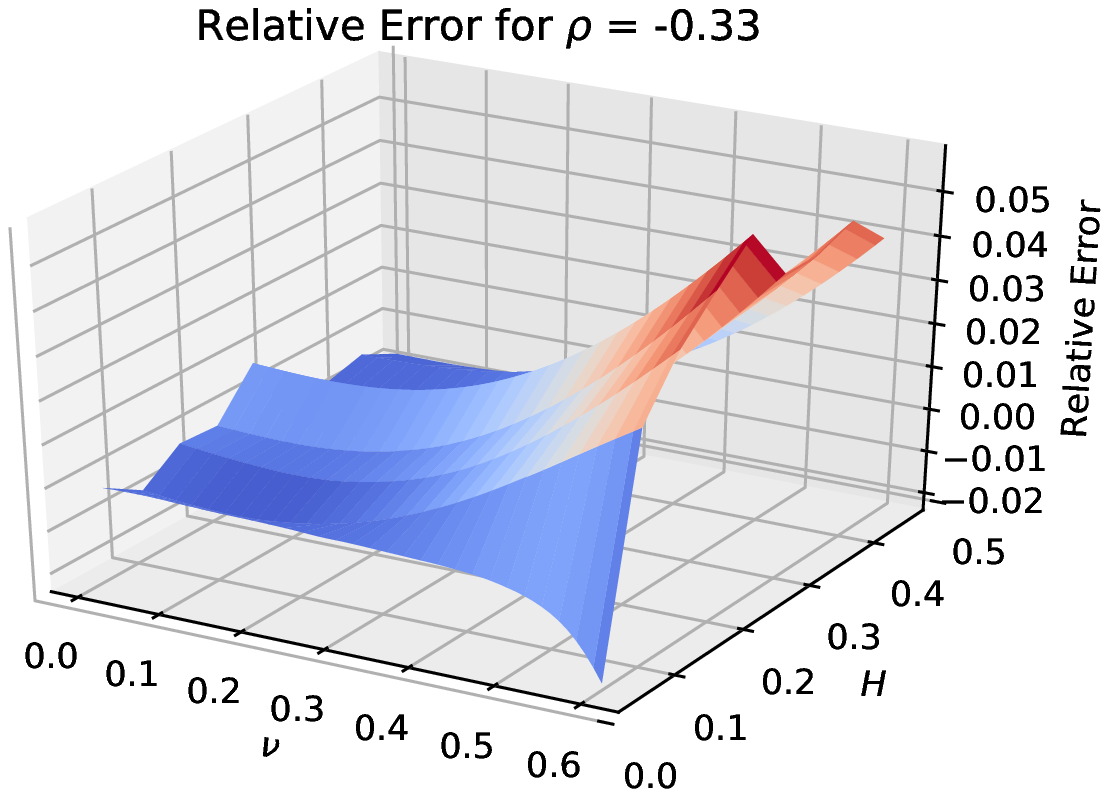}
\caption{Error surface between \eqref{eqn:approx-tvc-fsabr-1-t0} and MC. $\frac{K}{S_0}=1$, $T = 1,\sigma_0=0.1,\ \bar\sigma = 0.2$, $N=50,000$, $H\in(0,0.5),\ \nu\in(0,0.6),\ \rho\in(-1,1)$}
\label{fig:f12}
\end{figure}

%\begin{table}
%{\tiny
%\begin{adjustbox}{angle=90}
%\pgfplotstabletypeset[%
%   fixed zerofill,
%   precision=4,
%   col sep=space,
%   dec sep align,
%]{tableSensitivity_fixH.csv}
%\end{adjustbox}}
%\caption{}
%\label{tab:t5}
%\end{table}
%
%\begin{table}
%{\tiny
%\begin{adjustbox}{angle=90}
%\pgfplotstabletypeset[%
%   fixed zerofill,
%   precision=4,
%   col sep=space,
%   dec sep align,
%]{tableSensitivity_fixNu.csv}
%\end{adjustbox}}
%\caption{}
%\label{tab:t5}
%\end{table}
%
%\begin{table}
%{\tiny
%\begin{adjustbox}{angle=90}
%\pgfplotstabletypeset[%
%   fixed zerofill,
%   precision=4,
%   col sep=space,
%   dec sep align,
%]{tableSensitivity_fixRho.csv}
%\end{adjustbox}}
%\caption{}
%\label{tab:t5}
%\end{table}

%%%%%%%%%%%%%%%%%%%%%Implied Volatility%%%%%%%%%%%%%%%%%%%%%%%%%%%%%%%%%%
%\begin{figure}
%\centering
%\includegraphics[scale = 1]{IV_1.eps}
%\includegraphics[scale = 1]{IV_2.eps}
%\caption{Implied volatility from the standard Black--Scholes formula and Decomposition Formula Approximation \eqref{eqn:approx-tvc-fsabr-1-t0}}
%\label{fig:f13}
%\end{figure}
%
%\begin{figure}
%\centering
%\includegraphics[scale = 1]{IV_3.eps}
%\includegraphics[scale = 1]{IV_4.eps}
%\caption{Implied volatility from the standard Black--Scholes formula and Decomposition Formula Approximation \eqref{eqn:approx-tvc-fsabr-1-t0}}
%\label{fig:f14}
%\end{figure}

%%%%%%%%%%%%%%%%%%%%%%%%%%%%%%%%%%%%%%%%%%%%%%%%%%%%%%%%%%%%%%%%%
%  Section: Conclusion and discussion
%%%%%%%%%%%%%%%%%%%%%%%%%%%%%%%%%%%%%%%%%%%%%%%%%%%%%%%%%%%%%%%%%
\section{Conclusion and discussion} \label{sec:discussion}
Our aim of the paper was twofold. The first part derived the decomposition formulas in both It\^o and Malliavin calculus for the price of a target volatility call under a semiparametric model. The model considered here was {\it semiparametric} in the sense that it is a stochastic volatility model but without specifying explicitly the volatility process except certain technical conditions. In particular, the decomposition formula obtained by It\^o's calculus suggested a replicating strategy for target volatility option and an approximation formula for its price.  

In the second part of the paper, we specialized ourselves to the lognormal fractional SABR model that was recently suggested to the literature in stochastic volatility models because of its amazing fit to the empirical data of variance swaps. In other words, the volatility process was specified as the exponentiation of a scaled fractional Brownian motion. 
Explicit closed form approximation formulas in this model were derived from the decomposition formula and in the small volatility and volatility expansion. Numerical examples from Monte Carlo simulation showed that both approximation formulas worked well in a reasonable range of parameters. However, first order small volatility of volatility expansion broke down in extreme parameters as shown in the figures; whereas numerically approximation from decomposition formula passed the tests in a wider range of parameters. 

Efficient and accurate calculations or approximations of asset prices are crucial when it comes to calibrating the parameters to market data, especially when there is a process driven by fractional Brownian motion. The approximation formulas obtained in the current paper make this task easy due to their simplicity and accuracy. As market indicators, implied volatility from target volatility call options and possibly an implied Hurst exponent from the price of target volatility options can thus be defined accordingly. We leave all these discussions in a future research.

%%%%%%%%%%%%%%%%%%%%%%%%%%%%%%%%%%%%%%%%%%%%%%%%%%%%%%%%%%%%%%%%%
%  Acknowledgement
%%%%%%%%%%%%%%%%%%%%%%%%%%%%%%%%%%%%%%%%%%%%%%%%%%%%%%%%%%%%%%%%%

\section*{Acknowledgement} 
EA is partially supported by the Spanish grant MEC MTM2013-40782-P.
THW is partially supported by the Natural Science Foundation of China grant 11601018.

%%%%%%%%%%%%%%%%%%%%%%%%%%%%%%%%%%%%%%%%%%%%%%%%%%%%%%%%%%%%%%%%%
%  Section: Appendix
%%%%%%%%%%%%%%%%%%%%%%%%%%%%%%%%%%%%%%%%%%%%%%%%%%%%%%%%%%%%%%%%%

\section{Appendix - Technical results} \label{sec:appendix}
\begin{lemma} \label{lma:normal-expects}
Let $\xi$ be a normal random variable with mean $\mu$, variance $\sigma^2$ and $N(\cdot)$ denote the distribution function for standard normal. Then we have
\bea
\Eof{N(\xi)} &=& N\left( \frac\mu{\sqrt{1 + \sigma^2}}\right) \label{eqn:E-NX} \\
%&& \Eof{e^\xi N(\xi)} = e^{\mu + \frac{\sigma^2}2} N\left(\frac\mu{\sqrt{1 + \sigma^2}} - \rho\sigma \right). \label{eqn:E-eXNX} 
 \Eof{\xi N(\xi)} &=& \mu N\left(\frac\mu{\sqrt{1 + \sigma^2}} \right) + \frac{\sigma^2}{\sqrt{1 + \sigma^2}} N'\left(\frac\mu{\sqrt{1 + \sigma^2}} \right) \label{eqn:E-XNX} \\
\Eof{e^{a\xi} N(\xi)} &=& e^{a\mu + \frac{a^2\sigma^2}2} N\left(\frac{\mu + a \sigma^2}{\sqrt{1 + \sigma^2}} \right) \label{eqn:E-eXNX} \\
%&& \Eof{e^{a\xi^2 + b\xi} N(\xi)} = e^{a\mu + \frac{a^2\sigma^2}2} N\left(\frac{\mu + a \sigma^2}{\sqrt{1 + \sigma^2}} \right), \label{eqn:E-eXNX-2}
 \Eof{\xi e^{a\xi} N(\xi)} &=&e^{a\mu + \frac{a^2\sigma^2}2}\times\nonumber\\
&&\left[ \left(\mu + \frac{a\sigma^2}2\right) N\left(\frac{\mu + a \sigma^2}{\sqrt{1 + \sigma^2}} \right) + \frac{\sigma^2}{\sqrt{1 + \sigma^2}} N'\left(\frac{\mu + a \sigma^2}{\sqrt{1 + \sigma^2}} \right) \right] \label{eqn:E-eXNX-1} \\
 \Eof{\xi^2 e^{a\xi} N(\xi)} &=&  e^{a\mu + \frac{a^2\sigma^2}2}\Big[ \left( (\mu+a\sigma^2)(\mu+\frac{1}{2}a\sigma^2)+\frac{\sigma^2}{2}\right)N\left(\frac{\mu + a \sigma^2}{\sqrt{1 + \sigma^2}} \right)+\nonumber\\
 &&\left(2\mu +\frac{3}{2}a\sigma^2\right)\frac{\sigma^2}{\sqrt{1+\sigma^2}}N'\left(\frac{\mu + a \sigma^2}{\sqrt{1 + \sigma^2}} \right)+\nonumber\\
 &&\frac{\sigma^4}{1+\sigma^2}N''\left(\frac{\mu + a \sigma^2}{\sqrt{1 + \sigma^2}} \right)\Big]\label{eqn:E-eXNX-2}
\eea
for any constant $a$.
\end{lemma}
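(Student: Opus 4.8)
The plan is to recognize \eqref{eqn:E-eXNX} as the master identity from which the remaining four formulas follow by differentiation in the parameter $a$. First I would introduce an auxiliary standard normal random variable $Z$, independent of $\xi$, and use the representation $N(\xi) = \Pof{\left. Z \le \xi \right| \xi}$, so that for any integrable $g$ one has $\Eof{g(\xi) N(\xi)} = \Eof{g(\xi) \1_{\{Z \le \xi\}}}$. This converts each one-dimensional integral against the normal CDF into a genuinely bivariate Gaussian computation over the half-plane $\{\xi - Z \ge 0\}$, where everything is explicit.

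To establish \eqref{eqn:E-eXNX}, I would apply an exponential change of measure: writing $\Eof{e^{a\xi} \1_{\{Z \le \xi\}}} = e^{a\mu + \frac{a^2 \sigma^2}2}\, \widetilde{\E}\!\left[ \1_{\{\xi - Z \ge 0\}} \right]$, where under the tilted (Esscher) measure $\xi \sim N(\mu + a\sigma^2, \sigma^2)$ while $Z$ remains standard normal and independent. Under the tilted law the difference $\xi - Z$ is normal with mean $\mu + a\sigma^2$ and variance $1 + \sigma^2$, so the tilted probability equals exactly $N\!\left(\frac{\mu + a\sigma^2}{\sqrt{1+\sigma^2}}\right)$, which is \eqref{eqn:E-eXNX}. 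Setting $a = 0$ then yields \eqref{eqn:E-NX} immediately.

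The three remaining identities I would obtain by differentiating \eqref{eqn:E-eXNX} with respect to $a$, justified by dominated convergence, so that $\frac{d^n}{da^n}\Eof{e^{a\xi} N(\xi)} = \Eof{\xi^n e^{a\xi} N(\xi)}$. One differentiation of the right-hand side $e^{a\mu + \frac{a^2\sigma^2}2} N\!\left(\frac{\mu + a\sigma^2}{\sqrt{1+\sigma^2}}\right)$ produces \eqref{eqn:E-eXNX-1}, and evaluating that first derivative at $a = 0$ gives \eqref{eqn:E-XNX}; a second differentiation gives \eqref{eqn:E-eXNX-2}. The only inputs are the product and chain rules, noting that $\frac{d}{da}$ of the exponent contributes a factor $\mu + a\sigma^2$ while $\frac{d}{da}$ of the argument of $N$ contributes the constant $\frac{\sigma^2}{\sqrt{1+\sigma^2}}$, so the derivatives of $N$ evaluated at $\frac{\mu + a\sigma^2}{\sqrt{1+\sigma^2}}$ enter precisely as the $N'$ and $N''$ terms appearing in the statement.

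I expect the only real obstacle to be bookkeeping rather than ideas: the second-order identity \eqref{eqn:E-eXNX-2} requires collecting the several terms generated by two applications of the product rule and matching the coefficients of $N$, $N'$, and $N''$ against the stated form. A fully self-contained alternative, should one prefer to bypass differentiation, is to prove \eqref{eqn:E-XNX}, \eqref{eqn:E-eXNX-1}, and \eqref{eqn:E-eXNX-2} directly by the same tilting argument, computing the conditional moments $\widetilde{\E}[\xi \1_{\{\xi - Z \ge 0\}}]$ and $\widetilde{\E}[\xi^2 \1_{\{\xi - Z \ge 0\}}]$ via the Gaussian conditioning formula for $\widetilde{\E}[\xi \mid \xi - Z]$ together with elementary truncated-moment integrals of the form $\int_{-d}^\infty w^k \phi(w)\,dw$; these reproduce the $N$, $N'$, $N''$ structure with the same coefficients.
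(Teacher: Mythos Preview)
Your proposal is correct and follows essentially the same route as the paper: introduce an independent standard normal $Z$ so that $N(\xi)=\Pof{Z\le\xi\mid\xi}$, establish \eqref{eqn:E-eXNX} as the master identity, then recover \eqref{eqn:E-NX} at $a=0$ and \eqref{eqn:E-XNX}, \eqref{eqn:E-eXNX-1}, \eqref{eqn:E-eXNX-2} by differentiating in $a$. The only cosmetic difference is in how the master identity is computed---you use an Esscher tilt to shift the mean of $\xi$ and then read off $\tilde{\mathbb P}[\xi-Z\ge0]$ directly, whereas the paper instead orthogonally decomposes $\xi$ against $Y=Z-\xi$ before integrating; both are standard Gaussian manipulations leading to the same expression.
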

\begin{proof}
%Straightforward calculations. 
We prove only \eqref{eqn:E-eXNX} since \eqref{eqn:E-NX} is readily obtained by setting $a=0$ and \eqref{eqn:E-eXNX-1}, \eqref{eqn:E-eXNX-2} can be obtained by differentiating \eqref{eqn:E-eXNX} with respect to $a$.
Consider 
\beaa
&& \Eof{e^{a\xi} N(\xi)} = \Eof{e^{a\xi} \Pof{Z \leq \xi | \xi} } = \Eof{e^{a\xi} \1_{\{Z \leq \xi\}}} = \Eof{e^{a\xi} \1_{\{Y \leq 0\}}} 
\eeaa
where $Y = Z - \xi$. Note that we can decompose $\xi$ as 
\[
\xi = \mu + \frac{\cov(\xi,Y)}{\var(Y)} (Y - \Eof{Y}) + \sqrt{1 - \rho^2} \sigma B
\]
where $B$ is standard normal, independent of $Y$ and $\rho$ is the correlation between $\xi$ and $Y$. Indeed, in our case
\[
\xi = \mu - \frac{\sigma^2}{1 + \sigma^2} (Y + \mu) + \frac\sigma{\sqrt{1 + \sigma^2}} B.
\]
It follows that, since $Y$ and $B$ are independent, 
\beaa
&& \Eof{e^{a\xi} \1_{\{Y \leq 0\}}} = \Eof{e^{a\left(\mu - \frac{\sigma^2}{1 + \sigma^2} (Y + \mu) + \frac\sigma{\sqrt{1 + \sigma^2}} B \right)} \1_{\{Y \leq 0\}}} \\
&=& e^{a\mu}\, \Eof{e^{- \frac{a \sigma^2}{1 + \sigma^2} (Y + \mu)} \1_{\{Y \leq 0\}}} \,
\Eof{e^{\frac{a\sigma}{\sqrt{1 + \sigma^2}} B}}.
\eeaa
Finally, by straightforward calculations, one can show that the last expression is indeed equal to the right hand side of \eqref{eqn:E-eXNX}. 
\end{proof}
Denote by $h_n(\cdot)$ the normalized Hermite polynomials, i.e., for $n \geq 0$, 
\begin{equation}
h_n(x) = \frac{(-1)^n}{\sqrt{n!}} e^{\frac{x^2}2}\frac{d^n}{dx^n}(e^{-\frac{x^2}2}). \label{eqn:hermite-poly}
\end{equation}  
Note that since 
\beaa
&& \Eof{h_n(Z) h_m(Z)} = \delta_{nm} 
\eeaa
for a standard normally distributed random variable $Z$, the set $\{h_n(Z)\}_{n=0}^\infty$ forms an orthonormal basis for the $\sigma$-algebra generated by $Z$. 
\begin{lemma} \label{lma:cond-exps}
Let 
\beaa
&& \mu_t = \frac{c_H B_T}T t^{H + \frac12}, \qquad 
\sigma^2_t = t^{2H} - \frac{c_H^2}T t^{2H + 1}.
\eeaa
Then, for $k \geq 1$, 
\bea
&& \Eof{\left. (B_t^H)^k \right|B_T} = M_k(\mu_t, \sigma^2_t), \label{eqn:Bh-BT} \\
&& \Eof{\left. \int_0^T (B_t^H)^k dB_t \right|B_T} = \frac{c_H^k \sqrt{(k+1)!}}{k \left(H + \frac12 \right) + 1} \; T^{kH + \frac12} h_{k+1}\left( \frac{B_T}{\sqrt T} \right), \label{eqn:intBh-BT}
\eea
where $M_k(\mu, \sigma^2)$ is the $k$th moment of a normal random variable with mean $\mu$ and variance $\sigma^2$. $c_H = $ is a constant. In particular, for $k=1,2$, we have 
\beaa
&& \Eof{\left. \int_0^T B_t^H dt \right|B_T} = \frac{2c_H}{2H + 3} T^{H + \frac12} B_T, \\
&& \Eof{\left. \int_0^T (B_t^H)^2 dt \right|B_T} = \left(\frac1{2H+1} - \frac{c_H^2}{2H+2}\right) T^{2H + 1}, \\
&& \Eof{\left. \int_0^T B_t^H dB_t \right|B_T} = \frac{2c_H}{2H + 3} T^{H + \frac12} \left(B_T^2 - T\right), \\
&& \Eof{\left. \int_0^T (B_t^H)^2 dB_t \right|B_T} = \frac{c_H^2}{2(H + 1)} T^{2H + \frac12} \left\{\left(\frac{B_T}{\sqrt T}\right)^3 - 3 \frac{B_T}{\sqrt T} \right\}. 
\eeaa
\end{lemma}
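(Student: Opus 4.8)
The plan is to base both identities on the joint Gaussianity of the pair $(B_t^H, B_T)$ and on a single kernel integral, namely
\[
\int_0^t K(t,s)\,ds = c_H\, t^{H+\frac12},
\]
which follows from a beta-function evaluation of the Molchan--Golosov kernel (this is exactly the computation behind Lemma 2.3 in \cite{asw1}). All randomness that matters here sits in the driving Brownian motion $B$, and the conditioning is on the single Gaussian variable $B_T = \int_0^T dB_u$.

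First I would record the three second moments of the centered Gaussian vector $(B_t^H, B_T)$: from the defining property of fractional Brownian motion $\var(B_t^H) = t^{2H}$, trivially $\var(B_T) = T$, and, by the It\^o isometry applied to the two Wiener integrals $B_t^H = \int_0^t K(t,s)\,dB_s$ and $B_T = \int_0^T dB_u$, the covariance is $\cov(B_t^H, B_T) = \int_0^t K(t,s)\,ds = c_H\, t^{H+\frac12}$. The first identity is then immediate: conditionally on $B_T$ the variable $B_t^H$ is Gaussian with mean $\mu_t = \cov(B_t^H,B_T)\,B_T/\var(B_T)$ and variance $\sigma_t^2 = \var(B_t^H) - \cov(B_t^H,B_T)^2/\var(B_T)$, which are precisely the stated $\mu_t$ and $\sigma_t^2$, whence $\Eof{(B_t^H)^k \mid B_T} = M_k(\mu_t,\sigma_t^2)$. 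The first two displayed special cases then follow by integrating $\mu_t$ and $\mu_t^2 + \sigma_t^2$ over $[0,T]$.

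For the second identity I would use the orthonormal basis $\{h_n(B_T/\sqrt T)\}_{n\ge0}$ introduced just above the lemma, writing
\[
\Eof{\left.\int_0^T (B_t^H)^k\,dB_t\right| B_T} = \sum_{n\ge0} \Eof{\left(\int_0^T (B_t^H)^k\,dB_t\right) h_n\!\left(\tfrac{B_T}{\sqrt T}\right)} h_n\!\left(\tfrac{B_T}{\sqrt T}\right),
\]
and then computing the projection coefficients. The key observation is that $\int_0^T (B_t^H)^k\,dB_t$ lives in Wiener chaos of order at most $k+1$, while $h_n(B_T/\sqrt T)$ sits in the $n$-th chaos (for $k=1$ the integrand already lies in the first chaos, so the result is a single second-chaos term and the formula is exact). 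Identifying the leading $(k+1)$-st chaos component and pairing it against $h_{k+1}(B_T/\sqrt T)$ via the It\^o isometry reduces the coefficient to the deterministic integral $\int_0^T \big(\int_0^t K(t,s)\,ds\big)^k dt = c_H^k\,T^{k(H+\frac12)+1}/(k(H+\tfrac12)+1)$. Combined with the normalization $h_{k+1}(B_T/\sqrt T) = \sqrt{(k+1)!}\,T^{-(k+1)/2} J_{k+1}$, where $J_{k+1}$ is the $(k+1)$-fold iterated It\^o integral of $1$ on $[0,T]$, this produces exactly the stated coefficient $c_H^k\sqrt{(k+1)!}/(k(H+\tfrac12)+1)\,T^{kH+\frac12}$. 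Substituting $k=1,2$ and using $h_2(x)=(x^2-1)/\sqrt2$ and $h_3(x)=(x^3-3x)/\sqrt6$ then yields the last two special cases.

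The main obstacle is this projection computation: one must correctly extract the top chaos component of $\int_0^T (B_t^H)^k\,dB_t$ and carry out the repeated It\^o-isometry pairing against the iterated integral $J_{k+1}$, bookkeeping that is most transparent in the language of multiple Wiener integrals. The kernel evaluation $\int_0^t K(t,s)\,ds = c_H t^{H+\frac12}$ is the other essential ingredient, but it can be quoted from \cite{asw1}. By contrast, the Gaussian-conditioning argument behind the first identity is entirely routine.
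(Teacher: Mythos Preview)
Your proposal is correct and follows essentially the same route as the paper: Gaussian conditioning of $(B_t^H,B_T)$ for \eqref{eqn:Bh-BT}, and for \eqref{eqn:intBh-BT} projection onto the Hermite basis $\{h_n(B_T/\sqrt T)\}$, identification of the relevant chaos component of $\int_0^T (B_t^H)^k\,dB_t$, and evaluation of the coefficient $c_{k+1}$ via iterated It\^o isometry together with the kernel identity $\int_0^t K(t,s)\,ds = c_H t^{H+1/2}$. Your parenthetical remark that only for $k=1$ does the stochastic integral sit in a single chaos is in fact sharper than the paper's blanket assertion that $c_n=0$ for all $n\neq k+1$; the paper simply states this and moves on to compute $c_{k+1}$ exactly as you do.
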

\begin{proof}
\eqref{eqn:Bh-BT} follows from straightforward calculations since, conditioned on $B_T$, $B_t^H$ is normally distributed with mean $\mu_t$ and variance $\sigma_t$. 
%\[
%B_t^H | \cF^{B_T} \sim N\left(\frac{c_H B_T}T t^{H + \frac12}, t^{2H} - \frac{c_H^2}T t^{2H + 1}\right).
%\]
For \eqref{eqn:intBh-BT}, notice that the $\sigma$-algebra $\cF^{B_T}$ is spanned by $\left\{h_n\left( \frac{B_T}{\sqrt T} \right)\right\}_{n=0}^\infty$, where the $h_n$'s are the Hermite polynomials defined in \eqref{eqn:hermite-poly}. Consider 
\[
\int_0^T (B_t^H)^k dB_t = \sum_{n=0}^\infty c_n h_n\left( \frac{B_T}{\sqrt T} \right) + \xi,
\]
where $\xi$ has mean zero and is orthogonal to the span of $\left\{h_n\left( \frac{B_T}{\sqrt T} \right) \right\}_{n=0}^\infty$. Since the random variables $\left\{h_n\left( \frac{B_T}{\sqrt T} \right)\right\}_{n=0}^\infty$ form an orthonormal basis, it follows that
\[
c_n = \Eof{\int_0^T B_t^H dB_t \; h_n\left( \frac{B_T}{\sqrt T} \right)}.
\] 
For $k \geq 1$, denote by $\bs = (s_1, \cdots, s_k)$, $d\bs = ds_1 \cdots ds_k$, $dB_{\bs} = dB_{s_1} \cdots dB_{s_k}$, and $\Delta_k = \{\bs : 0 \leq s_1 \leq \cdots \leq s_k \leq t\}$ hereafter for notational simplicity.  
Notice that since $h_n\left( \frac{B_T}{\sqrt T} \right)$ can be written as an $n$-iterated Wiener integral of constant function $1$ as
\[
h_n\left( \frac{B_T}{\sqrt T} \right) = \sqrt{\frac{n!}{T^n}} \int_{\Delta_n} dB_{\bs},
\] 
one can easily verify that $c_n = 0$ for all $n \neq k+1$. As for $n=k+1$, we have
\beaa
c_{k+1} &=& \Eof{\int_0^T (B_t^H)^k dB_t \; h_{k+1}\left(\frac{B_T}{\sqrt T} \right)} \\
&=& \sqrt{\frac{(k+1)!}{T^{k+1}}} \; k! \; \Eof{\int_0^T\int_{\Delta_k} K(t,\bs)dB_{\bs} dB_t \int_0^T\int_{\Delta_k} dB_{\bs} dB_t} \\
&=& \sqrt{\frac{(k+1)!}{T^{k+1}}} \; k! \; \int_0^T \int_{\Delta_k} K(t,\bs) d\bs dt \\
&=& \sqrt{\frac{(k+1)!}{T^{k+1}}} \; \int_0^T \left(\int_0^t K(t,s) ds \right)^k dt \\
&=& c_H^k \sqrt{\frac{(k+1)!}{T^{k+1}}} \int_0^T t^{k\left(H + \frac12\right)} dt \\
&=& c_H^k \sqrt{\frac{(k+1)!}{T^{k+1}}} \frac{T^{k\left(H + \frac12\right) + 1}}{k\left(H + \frac12\right) + 1} \\
&=& \frac{c_H^k \sqrt{(k+1)!}}{k \left(H + \frac12 \right) + 1} \; T^{kH + \frac12},
\eeaa 
where in the third equality we used the property that, for any given deterministic function $f(\bs)$ defined on $\Delta_k$, 
\[
\Eof{\int_{\Delta_k}f(\bs) dB_{\bs} \int_{\Delta_k} 1 dB_{\bs}} = \int_{\Delta_k} f(\bs) d\bs
\]
obtained by iteratively using It\^o's isometry. We conclude that 
\beaa
\Eof{\left. \int_0^T (B_t^H)^k dB_t \right|B_T} = c_{k+1} \Eof{\left. h_{k+1}\left( \frac{B_T}{\sqrt T} \right) \right| B_T} = \frac{c_H^k \sqrt{(k+1)!}}{k \left(H + \frac12 \right) + 1} \; T^{kH + \frac12} h_{k+1}\left( \frac{B_T}{\sqrt T} \right).
\eeaa
\end{proof}

\begin{lemma} \label{lma:cond-fbm}
For $t < r$, conditioned on $\cF_t$, $B^H_r$ is normally distributed with mean $m(r|t)$ and variance $v(r|t)$ given respectively by 
\[
m(r|t) = \int_0^t K(r,s) dB_s, \quad v(r|t) = \int_t^r K^2(r,s)ds.
\]
\end{lemma}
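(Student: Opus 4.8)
The plan is to read off the conditional law directly from the Molchan--Golosov integral representation $B_r^H = \int_0^r K(r,s)\,dB_s$ by splitting the stochastic integral at the conditioning time $t$. First I would write, for $t < r$,
\[
B_r^H = \int_0^r K(r,s)\,dB_s = \int_0^t K(r,s)\,dB_s + \int_t^r K(r,s)\,dB_s,
\]
so that $B_r^H$ is decomposed into a term depending only on the Brownian increments up to time $t$ and a term depending only on the increments on $[t,r]$. The whole argument then reduces to analyzing these two pieces separately under conditioning on $\cF_t$.

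The key observations are as follows. The kernel $K(r,\cdot)$ is a fixed deterministic function (it does not depend on the Brownian path), so the first integral $\int_0^t K(r,s)\,dB_s = m(r|t)$ is $\cF_t$-measurable and is therefore treated as a known constant once we condition on $\cF_t$. The second integral $\int_t^r K(r,s)\,dB_s$ is a Wiener integral of a deterministic integrand against Brownian increments on $[t,r]$, which are independent of $\cF_t$; consequently this term is independent of $\cF_t$ and is itself a centered Gaussian random variable. Its variance is computed by It\^o's isometry as $\int_t^r K^2(r,s)\,ds = v(r|t)$, and its mean is zero. Since the sum of an $\cF_t$-measurable quantity and an independent Gaussian is, conditionally on $\cF_t$, Gaussian with the $\cF_t$-measurable quantity as its mean and the Gaussian's variance as its variance, we conclude that $B_r^H \mid \cF_t \sim \mathcal{N}\big(m(r|t),\, v(r|t)\big)$, as claimed.

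There is no serious obstacle here; the only point requiring a word of care is the justification that the conditional distribution is \emph{exactly} Gaussian rather than merely having the stated first two moments. This is precisely where the independence of the $[t,r]$-increments from $\cF_t$ is essential: it lets one transfer the (unconditional) Gaussianity of $\int_t^r K(r,s)\,dB_s$ into the conditional statement, since conditioning on $\cF_t$ leaves the law of this independent Gaussian term unchanged. I would state this independence explicitly and invoke It\^o's isometry for the variance, after which the identification of $m(r|t)$ and $v(r|t)$ is immediate from the definitions given in the lemma.
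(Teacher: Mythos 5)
Your proposal is correct and follows essentially the same route as the paper: both split $B_r^H = \int_0^t K(r,s)\,dB_s + \int_t^r K(r,s)\,dB_s$, note the first piece is $\cF_t$-measurable and the second is an independent centered Gaussian with variance $\int_t^r K^2(r,s)\,ds$ by It\^o's isometry. The only cosmetic difference is that the paper formalizes the final step (the point you flag as needing care) by computing the conditional characteristic function $\Et{e^{iuB_r^H}} = e^{iu\,m(r|t) - \frac{u^2}{2}v(r|t)}$, which is exactly the rigorous version of your independence argument.
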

\begin{proof}
Consider the characteristic function of $B_r^H$ conditioned on $\cF_t$ 
\beaa
&& \Et{e^{iuB^H_r}} = e^{iu\int_0^t K(r,s) dB_s}\Et{e^{iu\int_t^r K(r,s)dB_s}} \\
&=& e^{iu\int_0^t K(r,s) dB_s}\E\left[e^{iu\int_t^r K(r,s)dB_s}\right] \\
&=& e^{iu\int_0^t K(r,s) dB_s - \frac{u^2}2 \int_t^r K^2(r,s) ds}. 
\eeaa
It follows that, conditioned on $\cF_t$, $B_r^H$ is normally distributed with mean and variance given by $m(r|t)$ and $v(r|t)$ respectively.
\end{proof}

%%%%%%%%%%%%%%%%%%%%%%%%%%%%%%%%%%%%%%%%%%%%%%%%%%%%%%%%%%%%%%%%%%%%%%%%%%%%%%%%%%%%%%%%
%  Proof of computational formula Corollary
%%%%%%%%%%%%%%%%%%%%%%%%%%%%%%%%%%%%%%%%%%%%%%%%%%%%%%%%%%%%%%%%%%%%%%%%%%%%%%%%%%%%%%%%

\subsection{Proof of Corollary \ref{cor:tvoComputationFormula}}
\label{sec:proof-of-cor}
We calculate each individual expectation in Theorem \ref{thm:tv-price-small-volvol} in the following. 
For the first order term, we calculate the two terms separately:
\beaa
&& \Eof{C_w w_T^{(1)}} = \Eof{C_w Y_0^2 \, 2 \int_0^T B_t^H dt} 
= 2 Y_0^2 \, \Eof{C_w \E\left[\left. \int_0^T B_t^H dt \right| B_T \right]} \\
&=& 2 Y_0^2 \, \frac{2c_H}{2H + 3} T^{H + \frac12} \, \Eof{C_w \, B_T},
\eeaa
where in passing to the second equality we used the fact that $C_w$ is $\cF^{B_T}$-measurable and in passing to the last equality we used \eqref{eqn:Bh-BT} with $k=1$.
For the other term,
\beaa
&& \Eof{C_x \, \xi_T^{(1)}} 
= \Eof{C_x \, \left\{\rho Y_0 \int_0^T B_t^H dB_t - \frac{\rho^2}2 w_T^{(1)} \right\}} \\
&=& \rho Y_0 \Eof{C_x \, \E\left[\left. \int_0^T B_t^H dB_t \right|B_T \right]} 
- \rho^2 Y_0^2 \Eof{C_x \, \E\left[\left. \int_0^T B_t^H dt \right|B_T \right]} \\
%&=& \frac{2c_H \rho Y_0}{2H + 3} T^{H + \frac12} \Eof{C_x\left(\xi_T^{(0)}, \bar\rho^2 Y_0^2 T \right) (B_T^2 - T)} \\
%&& - \frac{2 c_H \rho^2 Y_0^2}{2H+3} T^{H + \frac12} \Eof{C_x\left(\xi_T^{(0)}, \bar\rho^2 Y_0^2 T \right) B_T } \\
&=& \frac{2c_H \rho Y_0}{2H + 3} T^{H + \frac12} \Eof{C_x \, \left\{ B_T^2 - T - \rho Y_0 B_T \right\}},
\eeaa
where in passing to the second equality we used the fact that $C_x$ is $\cF^{B_T}$-measurable and in passing to the last equality we used \eqref{eqn:Bh-BT} and \eqref{eqn:intBh-BT} with $k=1$. Furthermore, since $C$ satisfies $C_w = \frac12(C_{xx} - C_x)$, the first order term becomes
\beaa
&& \Eof{C_x \xi_T^{(1)} + \bar\rho^2 C_w w_T^{(1)}} \\
&=& \frac{2c_H}{2H + 3} T^{H + \frac12} \Eof{2 \bar\rho^2 Y_0^2 C_w \, B_T + \rho Y_0 C_x \, \left\{ B_T^2 - T - \rho Y_0 B_T \right\}} \\
&=& \frac{2c_H}{2H + 3} T^{H + \frac12} \Eof{\bar\rho^2 Y_0^2 C_{xx} \, B_T - Y_0^2 C_x \, B_T + \rho Y_0 C_x \, \left\{ B_T^2 - T\right\}}.
\eeaa
By a similar argument, we can calculate:
\[
\Eof{w_T^{(1)}C} = 2 Y_0^2 \, \frac{2c_H}{2H + 3} T^{H + \frac12} \, \left( \Eof{B_T e^{\xi_T^{(0)}} N(d_1)}-\Eof{B_T N(d_2)}\right)
\]
Hence, the price of a TV call to the first order of $\nu$ is:
\beaa
\text{TVO} &=& \frac{\bar\sigma K}{Y_0}C(X_0, Y_0^2T) + \sum_{j=1}^5\kappa_j \mathcal{E}_j + \mathcal{O}(\nu^2),
\eeaa
where
\beaa
\kappa_1 &=& \frac{\sqrt{\frac{2}{\pi }} c_H K \nu  \sqrt{1-\rho ^2} \bar\sigma T^H}{2 H+3}\\
\kappa_2 &=& \frac{2 c_H K \nu  \bar\sigma T^{H-\frac{1}{2}}}{(2 H+3) Y_0}\\
\kappa_3 &=& -\frac{2 c_H K \nu  \bar\sigma T^{H-\frac{1}{2}} \left(\rho ^2 Y_0^2 T+1\right)}{(2 H+3) Y_0}\\
\kappa_4 &=& \frac{2 c_H K \nu  \rho  \bar\sigma T^{H+\frac{1}{2}}}{2 H+3}\\
\kappa_5 &=& -\frac{2 c_H K \nu  \rho  \bar\sigma T^{H+\frac{3}{2}}}{2 H+3}\\
\mathcal{E}_1 &=& \Eof{B_Te^{-\frac{d_2^2}{2}}}\\
\mathcal{E}_2 &=& \Eof{B_T N(d_2)}\\
\mathcal{E}_3 &=& \Eof{B_T e^{\xi_T^{(0)}} N(d_1)}\\
\mathcal{E}_4 &=& \Eof{B_T^2 e^{\xi_T^{(0)}} N(d_1)}\\
\mathcal{E}_5 &=& e^{X_0}N(d_1),
\eeaa
and $$d_{1,2} = \frac{X_0}{\sqrt{Y_0^2T}}\pm\frac{1}{2}\sqrt{Y_0^2T}.$$ Finally, by applying the identities in Lemma \ref{lma:normal-expects} and straightforward but tedious calculations, we obtain the small volatility of volatility expansion up to the first order. Note that the formula can be easily implemented numerically.

%%%%%%%%%%%%%%%%%%%%%%%%%%%%%%%%%%%%%%%%%%%%%%%%%%%%%%%%%%%%%%%%%%%%%%%%%%%%%%%%%%%%%%%%
%  Proof of 
%%%%%%%%%%%%%%%%%%%%%%%%%%%%%%%%%%%%%%%%%%%%%%%%%%%%%%%%%%%%%%%%%%%%%%%%%%%%%%%%%%%%%%%%
\subsection{Proof of Lemma \ref{lma:cond-exps}}
Recall that 
\[ 
M_t = \int_0^T \Et{Y_r^2} dr
\]
is a martingale. By applying the Clark-Ocone formula (assume we can), we have 
\beaa
M_t = \Eof{M_t} + \int_0^t \E_s\left[D^B_s M_T\right] dB_s.
\eeaa
We calculate $\E_s\left[D^B_s M_T\right]$ as follows. Notice that for the fSABR and for $s<t$
$$
D^B_s Y_t=\nu Y_t K(t,s),
$$
where $K(t,s)$ is the kernel defined in \eqref{eqn:molchon-golosov}. Hence, 
\beaa
&& D^B_s M_T = \int_0^T D^B_s(Y_r^2) dr \\
&=& \int_s^T D_s^B\left( Y_0^2 e^{2\nu B_r^H}\right) dr \quad \left(\because D_s^B Y_r = 0 \; \forall r < s \right) \\
&=& 2 \nu \int_s^T Y_r^2 D_s^B(B_r^H) dr \\
&=& 2 \nu \int_s^T Y_r^2 K(r,s) dr.
\eeaa
Thus, 
\[
\E_s\left[D^B_s M_T\right] = 2 \nu \int_s^T \E_s\left[Y_r^2\right] K(r,s) dr.
\]
%which, in the earlier notation, $2\nu \Eof{Y_r^2|\cF_s^B} K(r,s) = a(r,s)$ for $r > s$. 
Moreover, we have
\beaa
&& d\langle M \rangle_t = 4\nu^2 \left(\int_t^T \Et{Y_r^2} K(r,t) dr\right)^2 dt \\
&& d\langle X, M \rangle_t = 2\nu \rho \left(Y_t \int_t^T \Et{Y_r^2} K(r,t) dr\right) dt.
\eeaa
Finally, since conditioned on $\cF_t$ the fractional Brownian motion $B_r^H$ is normally distributed with mean and variance given by $m(r|t)$ and $v(r|t)$ in Lemma \ref{lma:cond-fbm}, we have
\[
\Et{Y_r^2} = Y_0^2 \Et{e^{2\nu B_r^H}} = Y_0^2 e^{2\nu m(r|t) + 2\nu^2 v(r|t)}.
\]
%In particular, when $t = 0$, we have
%\beaa
%&& m(r|0) = 0, \\
%&& v(r|t) = r^{2H}.
%\eeaa
%Thus, 
%\[
%\Eof{Y_r^2|\cF_0^B} = Y_0^2 \Eof{\left. e^{2\nu B_r^H} \right|\cF_0^B} = Y_0^2 e^{2\nu^2 r^{2H}}.
%\]

%%%%%%%%%%%%%%%%%%%%%%%%%%%%%%%%%%%%%%%%%%%%%%%%%%%%%%%%%%%%%%%%%%%%%%%%%%%%%%%%%%%%%%%%
%
%  Bibliography
%
%%%%%%%%%%%%%%%%%%%%%%%%%%%%%%%%%%%%%%%%%%%%%%%%%%%%%%%%%%%%%%%%%%%%%%%%%%%%%%%%%%%%%%%%


\begin{thebibliography}{19}

\bibitem{asw}
{\sc Akahori, J.}, {\sc Song, X.}, and {\sc Wang, T.-H.}, 
Probability density of lognormal fractional SABR model, 
{\em Preprint}, available in arXiv, 2016.

\bibitem{asw1}
{\sc Akahori, J.}, {\sc Song, X.}, and {\sc Wang, T.-H.}, 
Bridge representation and modal-path approximation, 
accepted in {\em Stochastic Processes and their Applications}, preprint available in arXiv, 2018.

\bibitem{alos06}
{\sc Al\`os, E.},
{A generalization of the Hull and White formula with applications to option pricing approximation},
{\em Finance and Stochastics}, {\bf 10}, pp.353-365, 2006.

\bibitem{alos}
{\sc Al\`os, E.},
{A decomposition formula for option prices in the Heston model and applications to option pricing},
{\em Finance and Stochastics}, {\bf 16}(3), pp.403-422, 2012.

\bibitem{alv}
{\sc Al\`os, E.}, {\sc Le\'on, J.A.}, and {\sc Vives, J.},
{On the short-time behavior of the implied volatility for jump-diffusion models with stochastic volatility},
{\em Finance and Stochastics}, {\bf 11}, pp.571-589, 2007.

\bibitem{bennedsen2017hybrid}
  {Bennedsen, Mikkel and Lunde, Asger and Pakkanen, Mikko S},
  {Hybrid scheme for Brownian semistationary processes}
  {\em Finance and Stochastics},
  {\bf 21}(4),
  pp.931-965,
  2017.

\bibitem{mccrickerd2017turbocharging}
  {McCrickerd, Ryan and Pakkanen, Mikko S},
  {Turbocharging Monte Carlo pricing for the rough Bergomi model},
  {\em arXiv preprint arXiv:1708.02563},
  2017

\bibitem{fgg}
{\sc Da Fonseca, J.}, {\sc Gnoatto, A.}, and {\sc Grasselli, M.}, 
Analytic pricing of volatility-equity options within affine models: An efficient conditioning technique, 
{\em Preprint}, available in SSRN, 2015.

%\bibitem{kw}
%{\sc Katanobu, S.} and {\sc Watanabe, S.}, 
%{Asymptotic expansion formulas of Schilder type for a class of conditional Wiener functional integration},
%{\em Asymptotic Problems in Probability Theory: Wiener Functionals and Asymptotics}, K. D. Elworthy and N. Ikeda, Eds., Longman, New York, pp.194-241, 1992.

\bibitem{dgt}
{\sc Di Graziano, G.} and {\sc Torricelli, L.}, 
Target volatility option pricing, 
{\em International Journal of Theoretical and Applied Finance}, {\bf 15}, 1250005, 2012.

%\bibitem{dieker2004simulation}
% {Dieker, Ton},
% {{Simulation of fractional Brownian motion}},
%  {\em MSc theses, University of Twente, Amsterdam, The Netherlands},
%  2004.

\bibitem{gjr}
{\sc Gatheral, J.}, {\sc Jaisson, T.} and {\sc Rosenbaum, M.},
{Volatility is rough},
{\em Preprint}, available in SSRN, 2014.

\bibitem{gr} 
{\sc Grasselli, M.} and {\sc Romo, J.M.}, 
Stochastic skew and target volatility options,
{\em The Journal of Futures Markets}, {\bf 36}(2), pp.174-193, 2016.

\bibitem{nualart} 
{\sc Nualart, D.}, 
The Malliavin calculus and related topics, 
{\em Springer}, 2006.

\bibitem{t} 
{\sc Torricelli, L.}, 
Pricing joint claims on an asset and its realized variance in stochastic volatility models, 
{\em International Journal of Theoretical and Applied Finance}, {\bf 16}(1), 2013.

\bibitem{ww}
{\sc Wang, X.} and {\sc Wang, Y.}, 
Variance-optimal hedging for target volatility options, 
{\em Journal of Industrial and Management Optimization}, {\bf 10}(1), pp.207-218, 2014.


\end{thebibliography}
\end{document}